\newcommand\eop\qed
\newcommand{\ifConferenceVersion}{\iffalse}
\newcommand{\ifJournalVersion}{\iftrue}
\newcommand{\mathx}{\ifConferenceVersion $ \else $$\fi}
\providecommand{\InJournal}{}
\providecommand{\InConference}{}
    \renewcommand{\InJournal}{}
    \renewcommand{\InConference}[1]{}
    \renewcommand{\InJournal}[1]{}
    \renewcommand{\InConference}{}
\newcommand{\DD}[1]{\textcolor{blue}{\textbf{DD: #1}}}
\newcommand{\AK}[1]{\textcolor{green}{\textbf{AK: #1}}}
\newcommand{\MC}[1]{\textcolor{magenta}{\textbf{MC: #1}}}
\newcommand{\PU}[1]{\textcolor{red}{\textbf{PU: #1}}}
\renewcommand{\DD}[1]{}
\renewcommand{\AK}[1]{}
\renewcommand{\MC}[1]{}
\renewcommand{\PU}[1]{}
\numberwithin{equation}{section}
\newcommand{\comega}{(2c+1)\omega}
\newenvironment{rcases}
  {\left.\begin{aligned}}
  {\end{aligned}\right\rbrace}
\newcommand{\stX}{\hspace{0.1cm}\bigl|\bigr.\hspace{0.1cm}}
\newcommand{\card}[1]{\left|#1\right|}
\newcommand{\roundup}[2]{\lceil #1 \rceil_{#2}}
\newcommand{\tinserted}[1]{t_{end}^{#1}}
\theoremstyle{plain}
\newtheorem{theorem}{Theorem}[section]
\newtheorem{lemma}[theorem]{Lemma}
\newtheorem{corollary}[theorem]{Corollary}
\newtheorem{proposition}[theorem]{Proposition}
\newtheorem{observation}[theorem]{Observation}
\renewcommand{\root}[1]{r(#1)}
\newcommand{\opt}[1]{\textup{\texttt{OPT}}(#1)}
\newcommand{\cost}{\textup{\texttt{COST}}}
\newcommand{\Q}{\textup{\texttt{Q}}}
\newcommand{\w}{w}
\newcommand{\cA}{\mathcal{A}}
\newcommand{\cO}{\mathcal{O}}
\newcommand{\cR}{\mathcal{R}}
\newcommand{\N}{\mathbb{N}}
\newcommand{\queriedSubtree}[3]{#1\langle #2,#3\rangle}
\newcommand{\eps}{\varepsilon}
\newcommand{\s}{\hat S}
\newcommand{\costo}{\cost^{( \omega,c)}}
\newcommand{\labell}{\ell}
\newcommand{\longchains}[1]{\xi(#1)}
\newcommand{\StatexIndent}[1][3]{%
  \setlength\@tempdima{\algorithmicindent}%
  \Statex\hskip\dimexpr#1\@tempdima\relax}
\title{%
\textbf{Approximation Strategies for Generalized Binary Search in Weighted Trees}}
\author[1]{Dariusz Dereniowski}
\affil[1]{Faculty of Electronics, Telecommunications and Informatics,
\mbox{Gda\'nsk University of Technology, Poland}}
\author[2]{Adrian Kosowski}
\affil[2]{Inria Paris and IRIF, Universit\'e Paris Diderot, France}
\author[3]{Przemys\l{}aw Uzna\'nski}
\affil[3]{Department of Computer Science, ETH Z\"{u}rich, Switzerland}
\author[2]{Mengchuan Zou}
\date{}
\begin{document}

\maketitle

\begin{abstract}
We consider the following generalization of the binary search problem. A search strategy is required to locate an unknown target node $t$ in a given tree $T$. Upon querying a node $v$ of the tree, the strategy receives as a reply an indication of the connected component of $T\setminus\{v\}$ containing the target $t$. The cost of querying each node is given by a known non-negative weight function, and the considered objective is to minimize the total query cost for a worst-case choice of the target.

Designing an optimal strategy for a weighted tree search instance is known to be strongly NP-hard, in contrast to the unweighted variant of the problem which can be solved optimally in linear time. Here, we show that weighted tree search admits a quasi-polynomial time approximation scheme (QPTAS): for any $0 < \varepsilon < 1$, there exists a $(1+\varepsilon)$-approximation strategy with a computation time of $n^{O(\log n / \eps^2)}$. Thus, the problem is not APX-hard, unless $NP \subseteq DTIME(n^{O(\log n)})$. By applying a generic reduction, we obtain as a corollary that the studied problem admits a polynomial-time $O(\sqrt{\log n})$-approximation. 
This improves previous $\hat O(\log n)$-approximation approaches, where the $\hat O$-notation disregards $O(\mathrm{poly}\log\log n)$-factors.
\end{abstract}

\bigskip\noindent
\textbf{Key Words:} Approximation Algorithm; Adaptive Algorithm; Graph Search; Binary Search; Vertex Ranking; Trees


\section{Introduction}

In this work we consider a generalization of the fundamental problem of searching for an element in a sorted array.
This problem can be seen, using graph-theoretic terms, as a problem of searching for a target node in a path, where each query reveals on which `side' of the queried node the target node lies.
The generalization we study is two-fold: a more general structure of a tree is considered and we assume non-uniform query times.
Thus, our problem can be stated as follows.
Given a node-weighted input tree $T$ (in which the query time of a node is provided as its weight), design a search strategy (sometimes called a decision tree) that locates a hidden \emph{target node} $x$ by asking \emph{queries}.
Each query selects a node $v$ in $T$ and after the time that equals the weight of the selected node, a reply is given:
the reply is either `yes' which implies that $v$ is the target node and thus the search terminates, or it is `no' in which case the search strategy receives the edge outgoing from $v$ that belongs to the shortest path between $u$ and $v$.
The goal is to design a search strategy that locates the target node and minimizes the search time in the worst case.

The vertex search problem is more general than its `edge variant' that has been more extensively studied.
In the latter problem one selects an edge $e$ of an edge-weighted tree $T=(V,E,w)$ in a query and learns in which of the two components of $T-e$ the target node is located.
Indeed, this edge variant can be reduced to our problem as follows: first assign a `large' weight to each node of $T$ (for example, one plus the sum of the weights of all edges in the graph) and then subdivide each edge $e$ of $T$ giving to the new node the weight of the original edge, $w(e)$. It is apparent that an optimal search strategy for the new node-weighted tree should never query the nodes with large weights, thus immediately providing a search strategy for the edge variant of $T$.

We also point out that the considered problem, as well as the edge variant, being quite fundamental, were historically introduced several times under different names: minimum height elimination trees \cite{Pothen88}, ordered colourings \cite{KatchalskiMS95}, node and edge rankings \cite{IyerRV88}, tree-depth \cite{NesetrilM06} or LIFO-search \cite{GiannopoulouHT12}.

Table~\ref{tab:res} summarizes the complexity status of the node-query model (in case of unweighted paths in both cases the solution is the classical binary search algorithm) and places our result in the general context.


\setlength\dashlinedash{0.2pt}
\setlength\dashlinegap{1.5pt}
\setlength\arrayrulewidth{0.3pt}
\bgroup\def\arraystretch{1.5}

\begin{table}[htb]
\caption{Computational complexity of the search problem in different graph classes, including our results for weighted trees. Completeness results refer to the decision version of the problem.}
\label{tab:res}

{
\small
\centering
\begin{tabular}{rcc}
\toprule
\emph{Graph class}                 & \emph{Unweighted}                           & \emph{Weighted}                  \\
\midrule
Paths:           & exact in $O(n)$ time                               & exact in $O(n^2)$ time \cite{CicaleseJLV12} \\
\cdashline{2-3}

\multirow{3}{*}{Trees:}           & \multirow{3}{*}{exact in $O(n)$ time \cite{OnakP06,Schaffer89}}     & strongly NP-complete \cite{DereniowskiN06} \\
                &                                       &$(1+\eps)$-approx. in $n^{O(\log n / \eps)}$ time  (Thm.~\ref{thm:qptas}) \\
                &                                       & $O(\sqrt{\log n})$-approx. in poly-time (Thm.~\ref{thm:recursive-algo})\\
\cdashline{2-3}

\multirow{2}{*}{Undirected:}   & exact in $n^{O(\log n)}$ time \cite{Emamjomeh-ZadehKS16} & PSPACE-complete \cite{Emamjomeh-ZadehKS16} \\
& $O(\log n)$-approx. in poly-time \cite{Emamjomeh-ZadehKS16} & $O(\log n)$-approx. in poly-time \cite{Emamjomeh-ZadehKS16}\\

\cdashline{2-3}
Directed: & PSPACE-complete \cite{Emamjomeh-ZadehKS16}                     & PSPACE-complete \cite{Emamjomeh-ZadehKS16}\\
\bottomrule
\end{tabular}
}
\end{table}

\subsection {State-of-the-Art}


In this work we focus on the worst case search time for a given input graph and we only remark that other optimization criteria has been also considered \cite{CicaleseJLM11,LaberMP02,LaberM11,SzwarcfiterNBOCZ03}.
For other closely related models and corresponding results see e.g. \cite{ArkinMMRS98,HeeringaIT11,LaberN04,LinialS85,Steiner87}.

\subparagraph{The node-query model.}
An optimal search strategy can be computed in linear-time for an unweighted tree \cite{OnakP06,Schaffer89}.
The number of queries performed in the worst case may vary from being constant (for a star one query is enough) to being at most $\log_2 n$ for any tree \cite{OnakP06} (by always querying a node that halves the search space).
Several following results have been obtained in \cite{Emamjomeh-ZadehKS16}.
First, it turns out that $\log_2 n$ queries are always sufficient for general simple graphs and this implies a $O(m^{\log_2 n}n^2\log n)$-time optimal algorithm for arbitrary unweighted graphs. The algorithm which performs $\log_2 n$ queries also serves as a $O(\log n)$-approximation algorithm, also for the weighted version of the problem.
(We remark that in the weighted case, the algorithms in~\cite{Emamjomeh-ZadehKS16} sometimes have an approximation ratio of $\Theta(\log n)$, even in the tree scenario we study in this work.)
On the other hand, it is shown in the same work that an optimal algorithm (for unweighted case) with a running time of $O(n^{o(\log n)})$ would be in contradiction with the Exponential-Time-Hypothesis\InJournal{, and for $\varepsilon>0$, $O(m^{(1-\varepsilon)\log n})$ would be in contradiction with the Strong Exponential-Time-Hypothesis}.
When weighted graphs are considered, the problem becomes PSPACE-complete.
\InJournal{Also, a generalization to directed graphs also turns out to be PSPACE-complete.}

\InJournal{
We also refer the interested reader to further works that consider a probabilistic version of the problem, where the answer to a query is correct with some probability $p>\frac{1}{2}$ \cite{Ben-OrH08,Emamjomeh-ZadehKS16,FeigeRPU94,KarpK07}.
In particular, for any $p>\frac{1}{2}$ and any undirected unweighted graph, a search strategy can be computed that finds the target node with probability $1-\delta$ using $(1-\delta)\frac{\log_2n}{1-H(p)}+o(\log n)+O(\log^2\frac{1}{\delta})$ queries in expectation, where $H(p)=-p\log_2p-(1-p)\log_2(1-p)$ is the entropy function.
See \cite{RivestMKWS80} for a model in which a fixed number of queries can be answered incorrectly during a binary search.
}

\subparagraph{The edge-query model.}
In the case of unweighted trees, an optimal search strategy can be computed in linear time \cite{LamY01,MozesOW08}.
(See~\cite{Dereniowski08} for a correspondence between edge rankings and the searching problem.)
The problem of computational complexity for weighted trees attracted a lot of attention.
On the negative side, it has been proved that it is strongly NP-hard to compute an optimal search strategy \cite{Dereniowski06} for bounded diameter trees, which has been improved by showing hardness for several specific topologies: trees of diameter at most 6, trees of degree at most 3 \cite{CicaleseJLV12} and spiders \cite{CicaleseKLPV14} (trees having at most one node of degree greater than two).
On the other hand, polynomial-time algorithms exist for weighted trees of diameter at most 5 and weighted paths \cite{CicaleseJLV12}.
We note that for weighted paths there exists a linear-time but approximate solution given in \cite{LaberMP02}.
For approximate polynomial-time solutions, a simple $O(\log n)$-approximation has been given in \cite{Dereniowski06} and a $O(\log n/\log \log \log n)$-approximate solution is given in \cite{CicaleseJLV12}.
Then, the best known approximation ratio has been further improved to $O(\log n/\log\log n)$ in \cite{CicaleseKLPV14}.

Some bounds on the number of queries for unweighted trees have been developed.
Observe that an optimal search strategy needs to perform at least $\log_2 n$ queries in the worst case.
However, there exist trees of maximum degree $\Delta$ that require $\Delta\log_{\Delta+1}n$ queries \cite{Ben-AsherF97}.
On the other hand, $\Theta(\Delta\log n)$ queries are always sufficient for each tree \cite{Ben-AsherF97}, which has been improved to $(\Delta+1)\log_{\Delta}n$ \cite{LaberN01}, $\Delta\log_{\Delta} n$ \cite{DereniowskiK06} and $1+\frac{\Delta-1}{\log_2(\Delta+1)-1}\log_2 n$ \cite{Emamjomeh-ZadehKS16}.

\InJournal{\subparagraph{Searching partial orders.}
The problem of searching a partial order with uniform query times is NP-complete even for partial orders with maximum element and bounded height Hasse diagram \cite{CarmoDKL04,Dereniowski08}.
For some algorithmic solutions for random partial orders see \cite{CarmoDKL04}.
For a given partial order $P$ with maximum element, an optimal solution can be obtained by computing a branching $B$ (a directed spanning tree with one target) of the directed graph representing $P$ and then finding a search strategy for the branching, as any search strategy for $B$ also provides a feasible search for $P$ \cite{Dereniowski08}.
Since computing an optimal search strategy for $B$ can be done efficiently (through the equivalence to the edge-query model), finding the right branching is a challenge.
This approach has been used in \cite{Dereniowski08} to obtain an $O(\log n/\log\log n)$-approximation polynomial time algorithm for partial orders with a maximum element.

We remark that searching a partial order with a maximum element or with a minimum element are essentially quite different.
For the latter case a linear-time algorithm with additive error of 1 has been given in \cite{OnakP06}.
As observed in \cite{Dereniowski08}, the problem of searching in tree-like partial orders with a maximum element (which corresponds to the edge-query model in trees) is equivalent to the edge ranking problem.}

\subsection{Organization of the Paper}
The aim of Section~\ref{sec:preliminaries} is to give the necessary notation and a formal statement of the problem (Sections~\ref{sec:querymodel} and~\ref{sec:strategydef}) and to provide two different but equivalent problem formulations that will be more convenient for our analysis.
As opposed to the classical problem formulation in which a strategy is seen as a \emph{decision tree}, Section~\ref{sec:querysequences} restates the problem in such a way that with each vertex $v$ of the input tree we associate a sequence of vertices that need to be iteratively queried when $v$ is the root of the current subtree that contains the target node.
In Section~\ref{sec:schedules} we extend this approach by associating with each vertex a sequence of not only vertices to be queried but also time points of the queries.

The latter problem formulation is suitable for a dynamic programming algorithm provided in Section~\ref{sec:qptas}.
In this section we introduce an auxiliary, slightly modified measure of the cost of a search strategy.
First we provide a quasi-polynomial time dynamic programming scheme that provides an arbitrarily good approximation of the output search strategy with respect to this modified cost (the analysis is deferred to Section~\ref{sec:dp}), and then we prove that the new measure is sufficiently close to the original one (the analysis is deferred to Section~\ref{sec:blackbox}).
These two facts provide the quasi-polynomial time scheme for the tree search problem, achieving  a $(1+\varepsilon)$-approximation with a computation time of $n^{O(\log n / \eps^2)}$, for any $0 < \varepsilon < 1$.

In Section~\ref{sec:approx} we observe how to use the above algorithm to derive a polynomial-time $O(\sqrt{\log n})$-approximation algorithm for the tree search problem. This is done by a divide and conquer approach: a sufficiently small subtree $T^*$ of the input tree $T$ is first computed so that the quasi-polynomial time algorithm runs in polynomial (in the size of $T$) time for $T^*$. This decomposes the problem: having a search strategy for $T^*$, the search strategies for $T-T^*$ are computed recursively. \InJournal{Details of the approach are provided in Section~\ref{sec:algosqrt}.}

\section{Preliminaries} \label{sec:preliminaries}

\subsection{Notation and Query Model} \label{sec:querymodel}

We now recall the problem of searching of an unknown target node $x$ by performing queries on the vertices of a given node-weighted rooted tree $T=(V,E,\w)$ with weight function $\w\colon V\to\mathbb{R}_+$.
Each \emph{query} selects one vertex $v$ of $T$ and after $\w(v)$ time units receives an answer: either the query returns \emph{true}, meaning that $x=v$, or it returns a neighbor $u$ of $v$ which lies closer to the target $x$ than $v$. Since we assume that the queried graph $T$ is a tree, such a neighbor $u$ is unique and is equivalently described as the unique neighbor of $v$ belonging to the same connected component of $T\setminus \{v\}$ as $x$.

All trees we consider are rooted. Given a tree $T$, the root is denoted by $\root{T}$.
For a node $v\in V$, we denote by $T_v$ the subtree of $T$ rooted at $v$. For any subset $V' \subseteq V$ (respectively, $E'\subseteq E$) we denote by $T[V']$ (resp., $T[E']$) the minimal subtree of $T$ containing all nodes from $V'$ (resp., all edges from $E'$). For $v\in V$, $N(v)$ is the set of neighbors of $v$ in $T$.

For $U\subseteq V$ and a target node $x\notin U$, there exists a unique maximal subtree of $T \setminus U$ that contains $x$; we will denote this subtree by $\queriedSubtree{T}{U}{x}$.


We denote $|V|=n$.
We will assume w.l.o.g.\ that the maximum weight of a vertex is normalized to $1$. (This normalization is immediately obtained by a proportional scaling of all units of cost.) We will also assume w.l.o.g.\ that the weight function satisfies the following \emph{star condition}:
\mathx
\text{for all $v\in V$, } w(v) \leq \sum_{u \in N(v)} w(u).
\mathx
Observe that if this condition is not fulfilled, i.e., for some vertex $v$ will have $w(v) > \sum_{u \in N(v)} w(u)$, then vertex $v$ will never be queried by any optimal strategy in $v$, since a query to $v$ can then be replaced by a sequence of queries to all neighbors of $v$, obtaining not less information at strictly smaller cost. In general, given an instance which does not satisfy the star condition, we enforce it by performing all necessary weight replacements $w(v) \gets \min \{w(v), \sum_{u \in N(v)} w(u)\}$, for $v \in V$.

For $a, \omega \in \mathbb R_{\geq 0}$, we denote the rounding of $a$ down (up) to the nearest multiple of $\omega$ as $\lfloor a \rfloor_{\omega} = \omega\lfloor a/\omega\rfloor$ and $\lceil a \rceil_{\omega} = \omega\lceil a/\omega\rceil$, respectively.

\subsection{Definition of a Search Strategy} \label{sec:strategydef}

\emph{A search strategy} $\cA$ for a rooted tree $T=(V,E,\w)$ is an adaptive algorithm which defines successive queries to the tree, based on responses to previous queries, with the objective of locating the target vertex in a finite number of steps.
Note that search strategies can be seen as decision trees in which each node represents a subset of vertices of $T$ that contains $x$, with leaves representing singletons consisting of $x$.

Let $\Q_\cA(T,x)$ be the time-ordering (sequence) of queries performed by strategy $\cA$ on tree $T$ to find a target vertex $x$, with $\Q_{\cA,i}(T,x)$ denoting the $i$-th queried vertex in this time ordering, $1\leq i \leq |\Q_\cA(T,x)|$.

We denote by
\mathx
\cost_\cA(T,x) = \sum_{i=1}^{|\Q_\cA(T,x)|} w(\Q_{\cA,i}(T,x))
\mathx
the sum of weights of all vertices queried by $\cA$ with $x$ being the target node, i.e., the time after which $\cA$ finishes.
Let
\mathx
\cost_\cA(T)=\max_{x\in V}\cost_\cA(T,x)
\mathx
be the \emph{cost of $\cA$}.
We define the \emph{cost of $T$} to be
\mathx
\opt{T}=\min\{\cost_\cA(T)\stX \cA\textup{ is a search strategy for }T\}.
\mathx
We say that a search strategy is \emph{optimal} for $T$ if its cost equals $\opt{T}$.

As a consequence of normalization and the star condition, we have the following bound.

\begin{observation} \label{obs:opt-bounds}
For any tree $T$, we have $1\leq\opt{T}\leq \lceil \log_2 n \rceil$.
\end{observation}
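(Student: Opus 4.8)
The plan is to prove the two inequalities by separate, elementary arguments; throughout one may assume $n\ge 2$, since for $n=1$ no query is needed, $\opt{T}=0$, and only the upper bound is meaningful.

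For the upper bound $\opt{T}\le\lceil\log_2 n\rceil$ I would exhibit an explicit strategy by \emph{centroid decomposition} and bound its cost by induction on $n$. Recall the classical fact that every $n$-vertex tree has a centroid, i.e.\ a vertex $c$ such that every connected component of $T\setminus\{c\}$ has at most $\lfloor n/2\rfloor$ vertices (otherwise, starting from a vertex that minimizes the size of its largest incident component and moving one step towards that component strictly decreases this size, a contradiction). The strategy queries $c$ first, at cost $w(c)\le 1$ by the normalization; if the reply is \emph{true} we are done with cost $\le 1\le\lceil\log_2 n\rceil$, and otherwise the reply points to the component $C\ni x$ of $T\setminus\{c\}$, on which we recurse. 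A query to a vertex $v$ of $C$ in the original instance returns the same neighbour it would return in the smaller instance $T[C]$, because the path from $v$ to $x$ in $T$ stays inside $C$; hence the recursion is legitimate on a tree with $|C|\le\lfloor n/2\rfloor<n$ vertices, and summing costs yields $\opt{T}\le w(c)+\lceil\log_2\lfloor n/2\rfloor\rceil\le\lceil\log_2 n\rceil$, using the inductive hypothesis, monotonicity of $x\mapsto\lceil\log_2 x\rceil$, and $\lfloor n/2\rfloor\le n/2$. The base case $n=1$ is immediate.

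For the lower bound $\opt{T}\ge 1$ I would use the normalization and the star condition together; this is the only mildly delicate point. By normalization fix a vertex $v$ with $w(v)=1$. Take an arbitrary strategy $\cA$ and run it against the target $x=v$, and let $U$ be the set of vertices it queries. If $v\in U$, then $\cost_\cA(T,v)\ge w(v)=1$. Otherwise $v\notin U$, and since $\cA$ has located the target, the maximal subtree of $T\setminus U$ containing $v$ must equal $\{v\}$; an unqueried neighbour $u$ of $v$ would keep the edge $\{v,u\}$ alive in $T\setminus U$, so in fact $N(v)\subseteq U$, whence $\cost_\cA(T,v)\ge\sum_{u\in N(v)}w(u)$, which is $\ge w(v)=1$ by the star condition. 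Either way $\cost_\cA(T)\ge\cost_\cA(T,v)\ge 1$, so $\opt{T}\ge 1$. I expect the main thing to get right is precisely this case analysis — that ``locating $v$ without ever querying it'' forces querying every neighbour of $v$, at which point the star condition closes the gap; the upper bound is routine once the centroid fact is in hand.
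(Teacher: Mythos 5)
Your proof is correct and takes essentially the same route as the paper: the lower bound rests on the star condition (query $v$ or else all its neighbours, either way paying at least $w(v)=1$), and the upper bound runs the $\lceil\log_2 n\rceil$-query unweighted/centroid strategy with each query costing at most $1$ by normalization. The only difference is cosmetic — you re-derive the centroid fact and spell out the case analysis, whereas the paper cites the known $O(\log n)$-query bound and compresses the lower bound into a restriction to the star $T[\{v\}\cup N(v)]$.
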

\InJournal{
\begin{proof}
By the star condition, considering any vertex $v \in V$ as the target, we trivially have
\[\opt T \geq \inf_\cA \cost_{\cA} (T,v) \geq \inf_\cA \cost_{\cA}  (T [\{v\} \cup N(v)],v) \geq w(v).\]
Thus, $\opt T \geq \max_{v\in V} w(v) = 1$, which gives the first inequality.

For the second inequality, we observe that applying to tree $T$ the optimal search strategy for unweighted trees, we can locate the target in at most $\lceil \log_2 n \rceil$ queries (cf.~e.g.~\cite{KatchalskiMS95,OnakP06}). Since the cost of each query is at most $1$, the claim follows.
\end{proof}
}
\InConference{All omitted proofs are provided in the Appendix.}


We also introduce the following notation. If the first $\card{U}$ queried vertices by a search strategy $\cA$ are exactly the vertices in $U$, $U = \{\Q_{\cA,i}(T,x) : 1\leq i \leq |U|\}$, then we say that $\cA$ \emph{reaches $\queriedSubtree{T}{U}{x}$ through $U$}, and $\w(U)$ is the \emph{cost of reaching $\queriedSubtree{T}{U}{x}$ by $\cA$}. We also say that we receive an `up' reply to a query to a vertex $v$ if the root of the tree remaining to be searched remains unchanged by the query, i.e., $r(\queriedSubtree{T}{U}{x}) = r(\queriedSubtree{T}{U \cup \{v\}}{x})$, and we call the reply a `down' reply when the root of the remaining tree changes, i.e., $r(\queriedSubtree{T}{U}{x}) \neq r(\queriedSubtree{T}{U \cup \{v\}}{x})$. 
Without loss of generality, after having performed a sequence of queries $U$, we can assume that the tree $\queriedSubtree{T}{U}{x}$ is known to the strategy.

\subsection{Query Sequences and Stable Strategies} \label{sec:querysequences}

By a slight abuse of notation, we will call a search strategy \emph{polynomial-time} if it can be implemented using a dynamic (adaptive) algorithm which computes the next queried vertex in polynomial time.

We give most of our attention herein to search strategies in trees which admit a natural (non-adaptive, polynomial-space) representation called a \emph{query sequence assignment}. Formally, for a rooted tree $T$, the \emph{query sequence assignment} $S$ is a function $S : V \to V^*$, which assigns to each vertex $v\in V$ an ordered sequence of vertices $S(v)$, known as the \emph{query sequence} of $v$.
The query sequence assignment directly induces a strategy $\cA_{S}$, presented as Algorithm~\ref{alg:As}. Intuitively, the strategy processes successive queries from the sequence $S(v)$, where $v$ is the root vertex of the current search tree, $v = r(\queriedSubtree{T}{U}{x})$, where $U$ is the set of queries performed so far.
This processing is performed in such a way that the strategy iteratively takes the first vertex in $S(v)$ that belongs to $\queriedSubtree{T}{U}{x}$ and queries it.
As soon as the root of the search tree changes, the procedure starts processing queries from the sequence of the new root, which belong to the remaining search tree. The procedure terminates as soon as $\queriedSubtree{T}{U}{x}$ has been reduced to a single vertex, which is necessarily the target $x$. 

\begin{algorithm}
\small
\caption{Search strategy $\cA_S$ for a query sequence assignment $S$}
\label{alg:As}
\begin{algorithmic}[1]

\State $v \gets r(T)$ \quad // stores current root
\State $U \gets \emptyset$
\While{$|\queriedSubtree{T}{U}{x}|>1$}
\For {$u\in S(v)$}
\If {$u\in \queriedSubtree{T}{U}{x}$} \quad // $u$ is the first vertex in $S(v)$ that belongs to $\queriedSubtree{T}{U}{x}$
\State \Call{QueryVertex}{$u$}
\State $U \gets U\cup \{u\}$
\If{$v\neq r(\queriedSubtree{T}{U}{x})$} \quad // query reply is `down'
\State $v\gets r(\queriedSubtree{T}{U}{x})$\label{alg:As:linerootchange}
\State \textbf{break} \quad // for loop
\EndIf
\EndIf
\EndFor
\EndWhile

\end{algorithmic}
\end{algorithm}

In what follows, in order to show that our approximation strategies are polynomial-time, we will confine ourselves to presenting a polynomial-time algorithm which outputs an appropriate sequence assignment.


A sequence assignment is called \emph{stable} if the replacement of line~\ref{alg:As:linerootchange} in Algorithm~\ref{alg:As} by any assignment of the form $v \gets v''$, where $v''$ is an arbitrary vertex which is promised to lie on the path from $\root{\queriedSubtree{T}{U}{x}}$ to the target $x$, always results in a strategy which performs a (not necessarily strict) subsequence of the sequence of queries performed by the original strategy $\cA_S$. Sequence assignments computed on trees with a bottom-up approach usually have the stability property; we provide a proof of stability for one of our main routines in Section~\ref{sec:dp}.

Without loss of generality, we will also assume that if $v \in S(v)$, then $v$ is the last element of $S(v)$. Indeed, when considering a subtree rooted at $v$, after a query to $v$, if $v$ was not the target, then the root of the considered subtree will change to one of the children of $v$, hence any subsequent elements of $S(v)$ may be removed without changing the strategy.

\subsection{Strategies Based on Consistent Schedules} \label{sec:schedules}

Intuitively, we may represent search strategies by a schedule consisting of some number of jobs, with each job being associated to querying a node in the tree  (cf.\ e.g.\ \cite{IyerRV88b,Liu86,Liu90}). Each job has a fixed processing time, which is set to the weight of a node. Formally, in this work we will refer to the schedule $\s$ only in the very precise context of search strategies $\cA_S$ based on some query sequence assignment $S$. The \emph{schedule assignment} $\s$ is the following extension of the sequence assignment $S$, which additionally encodes the starting time of search query job. If the query sequence $S$ of a node $v$ is of the form $S(v) = (v_1, \ldots, v_k)$, $k = |S(v)|$, then the corresponding schedule for $v$ will be given as  $\s(v) = ((v_1, t_1), \ldots, (v_k, t_k))$, with $t_i \in \mathbb{R}_{\geq 0}$ denoting the starting time of the query for $v_i$. We will call $\s(v)$ the \emph{schedule of node} $v$. We will call a schedule assignment $\s$ \emph{consistent} with respect to search in a given tree $T$ if the following conditions are fulfilled:
\begin{enumerate}
\item[(i)] No two jobs in the schedule of a node overlap: for all $v \in V$, for two distinct jobs $(u_1,t_1), (u_2,t_2) \in \s(v)$, we have $|[t_1, t_1+w(u_1)] \cap [t_2, t_2+w(u_2)]| = 0$.
\item[(ii)] If $v$ is the parent of $v'$ in $T$ and $(u,t) \in \s(v')$, then we either also have $(u,t) \in \s(v)$, or the job $(v,t_v) \in \s(v)$ completes before the start of job $(u,t)$: $t_v + w(v) \leq t$.
\end{enumerate}
It follows directly from the definition that a consistent schedule assignment (and the underlying query sequence assignment) is uniquely determined by the collection of jobs $\{(v, t_v) : (v, t_v) \in \s(u), u \in V\}$. Note that not every vertex has to contain a query to itself in its schedule; we will occasionally write $t_v = \perp$ to denote that such a job is missing. \InJournal{ In this case, the jobs of all children of $v$ have to be contained in the schedule of node $v$.}


By extension of notation for sequence assignments, we will denote a strategy following a consistent schedule assignment $\s$ (i.e., executing the query jobs of schedule $\s$ at the prescribed times) as $\cA_{\s}$. We will then have:
\mathx
\cost_{\cA_{\s}}(T) = |\s|,
\mathx
where $|\s|$ is the \emph{duration} of schedule assignment $\s$, given as:
\mathx
|\s| = \max_{v\in V} |\s(v)|,
\mathx
with:
\mathx
|\s(v)| = \max_{(u,t)\in \s(v)} (t + w(u)).
\mathx

We remark that there always exists an optimal search strategy which is based on a consistent schedule. By a well-known characterization (cf.\ e.g.~\cite{Dereniowski06}), tree $T$ satisfies $\opt{T}=\tau \in \mathbb R$ if and only if there exists an assignment $I : V \to \mathcal I_\tau$ of intervals of time to nodes before deadline $\tau$, $\mathcal I_\tau  =\{[a,b] : 0 \leq a < b \leq \tau\}$, such that $|I(v)|=w(v)$ and if $|I(u) \cap I(v)| > 0$ for any pair of nodes $u, v \in V$, then the $u-v$ path in $T$ contains a separating vertex $z$ such that $\max I(z) \leq \min(I(u)\cup I(v))$. The corresponding schedule assignment of duration $\tau$ is obtained by adding, for each node $u \in V$, the job $(u, \min I(u))$ to the schedule of all nodes on the path from $u$ towards the root, until a node $v$ such that $\max I(v) \leq \min I(u)$
is encountered on this path. The consistency and correctness of the obtained schedule is immediate to verify.

\begin{observation}
For any tree $T$, there exists a query sequence assignment $S$ and a corresponding consistent schedule $\s$ on $T$ such that $|\s| = \opt T$.
\eop
\end{observation}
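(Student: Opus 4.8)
The plan is to leverage the well-known interval characterization of $\opt{T}$ recalled immediately before the statement: there is an optimal strategy witnessed by an interval assignment $I : V \to \mathcal{I}_\tau$ with $\tau = \opt{T}$, where $|I(v)| = w(v)$, and whenever two intervals $I(u), I(v)$ overlap in positive measure, the $u$--$v$ path in $T$ contains a separating vertex $z$ with $\max I(z) \le \min(I(u)\cup I(v))$. I would take this $I$ as given and simply describe the schedule assignment $\s$ it induces, then verify that $\s$ is consistent and that $|\s| = \tau$.

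First I would define, for each node $u \in V$, the job $(u, \min I(u))$, and place this job into the schedule $\s(v)$ of every node $v$ lying on the path from $u$ toward $\root{T}$, stopping (and not including) the first vertex $v$ on this path for which $\max I(v) \le \min I(u)$ — or running all the way to the root if no such vertex exists. (In particular $(u,\min I(u)) \in \s(u)$ always.) This defines $\s$ via its collection of jobs, and by the remark in the text such a collection determines a consistent schedule assignment uniquely, provided the two consistency conditions hold. Condition~(ii) is essentially built into the construction: if $v$ is the parent of $v'$ and $(u, \min I(u)) \in \s(v')$, then either the propagation of the job continues past $v'$ to $v$ — giving $(u,\min I(u)) \in \s(v)$ — or it stopped at $v$, which by definition of the stopping rule means $\max I(v) \le \min I(u) = $ the start time of the job, so the job $(v, \min I(v))$ completes by time $\min I(u)$; and one checks $(v,\min I(v)) \in \s(v)$ by construction, so $t_v + w(v) = \max I(v) \le \min I(u)$, as required.

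The main point to verify is condition~(i): no two jobs in a single schedule $\s(v)$ overlap. Suppose $(u_1, \min I(u_1))$ and $(u_2, \min I(u_2))$ both lie in $\s(v)$ with overlapping time intervals; since these intervals are exactly $I(u_1)$ and $I(u_2)$ (each job $(u,\min I(u))$ occupies $[\min I(u), \min I(u)+w(u)] = I(u)$), the interval characterization gives a separating vertex $z$ on the $u_1$--$u_2$ path with $\max I(z) \le \min(I(u_1)\cup I(u_2))$. By the tree structure, $z$ lies on the path from $u_1$ to $\root{T}$ or on the path from $u_2$ to $\root{T}$ (it lies on at least one of these, since $v$ — an ancestor of both — is reached from $u_1$ and from $u_2$, and $z$ is between $u_1$ and $u_2$); say w.l.o.g.\ $z$ is on the path from $u_1$ to the root. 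But then the job of $u_1$ would have stopped propagating at $z$ (since $\max I(z) \le \min I(u_1)$), so $(u_1, \min I(u_1))$ cannot have reached $v$ unless $v$ is strictly closer to $u_1$ than $z$ is — in which case $z$ is not on the $u_1$--$v$ segment but on the $v$--$u_2$ segment, and the same stopping argument applied to $u_2$'s job and $z$ gives a contradiction with $(u_2,\min I(u_2)) \in \s(v)$. This case analysis, establishing that a common separating vertex forces at least one of the two jobs to have halted before reaching $v$, is the one genuinely delicate step; everything else is bookkeeping. Finally, $|\s| = \max_{v} |\s(v)| = \max_{v}\max_{(u,t)\in\s(v)}(t+w(u)) = \max_{u}(\min I(u) + w(u)) = \max_u \max I(u) \le \tau$, and conversely $|\s| \ge \opt{T} = \tau$ since $\cost_{\cA_\s}(T) = |\s|$ and $\cA_\s$ is a valid search strategy; hence $|\s| = \opt{T}$, and $S$ is the underlying query sequence assignment of $\s$. \eop
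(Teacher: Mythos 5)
Your proposal is correct and follows exactly the paper's approach: the paper gives the same construction (propagating each job $(u,\min I(u))$ up the tree until blocked by the first node $z$ with $\max I(z)\leq\min I(u)$) in the paragraph preceding the Observation and declares its consistency ``immediate to verify,'' which your write-up spells out. One small note: the sub-case where $v$ is strictly closer to $u_1$ than $z$ is actually vacuous --- $v$, being a common ancestor of $u_1$ and $u_2$, sits at or above $\mathrm{lca}(u_1,u_2)$, while $z$, being on the $u_1$--$u_2$ path, sits at or below it --- so the handoff to $u_2$'s job is never needed, and the phrase ``$z$ is on the $v$--$u_2$ segment'' implicitly (and incorrectly) assumes $v$ lies on the $u_1$--$u_2$ path.
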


\section{The Results}
\subsection{\texorpdfstring{$(1+\eps)$}{(1+eps)}-Approximation  in \texorpdfstring{$n^{O(\log n/\eps^2)}$}{n\^{}O(log n/eps\^{}2)} Time} \label{sec:qptas}
We first present an approximation scheme for the weighted tree search problem with $n^{O(\log n)}$ running time. The main difficulty consists in obtaining a constant approximation ratio for the problem with this running time; we at once present this approximation scheme with tuned parameters, so as to achieve $(1+\eps)$-approximation in $n^{O(\log n/\eps^2)}$ time.

Our construction consists of two main building blocks. First, we design an algorithm based on a bottom-up (dynamic programming) approach, which considers exhaustively feasible sequence assignments and query schedules over a carefully restricted state space of size $n^{O(\log n)}$ for each node. The output of the algorithm provides us both with a lower bound on $\opt T$, and with a sequence assignment-based strategy $\cA_S$ for solving the tree search problem. The performance of this strategy $\cA_S$ is closely linked to the performance of $\opt T$, however, there is one type of query, namely a query on a vertex  
of small weight leading to a `down' response, due to whose repeated occurrence the eventual cost difference between $\cost_{\cA_S}(T)$ and $\opt T$ may eventually become arbitrarily large. To alleviate this difficulty, we introduce an alternative measure of cost which compensates for the appearance of the disadvantageous type of query.



We start by introducing some additional notation.
Let $\omega \in \mathbb R_+$, be an arbitrarily fixed value of weight and let $c\in\N$. The choice of constant $c \in \N$ will correspond to an approximation ratio of $(1+\eps)$ of the designed scheme for $\eps = 168/c$.

We say that a query to a vertex $v$ is a \emph{light down query} in some strategy if $w(v)<c\omega$ and $x\in V(T_v)$, i.e., it is also a `down' query, where $x$ is the target vertex.

For any strategy $\cA$, we denote by $\costo_\cA(T,x)$ its modified cost of finding target $x$, defined as follows.
Let $d_x$ be the number of light down queries when searching for $x$:
\mathx
d_x=\left|\{i : w(\Q_{\cA,i}(T,x))< c\omega \textnormal{ and } x \in V(T_{\Q_{\cA,i}(T,x)})\}\right|.
\mathx
 Then, the modified cost $\costo_\cA(T,x)$ is:
\begin{equation} \label{eq:costo-def}
\costo_\cA(T,x) = \cost_\cA(T,x) - (2c+1)\omega d_x.
\end{equation}
and by a natural extension of notation:
\mathx
\costo_\cA(T)=\max_{x\in V}\costo_\cA(T,x).
\mathx



The technical result which we will obtain in Section~\ref{sec:dp} may now be stated as follows.

\begin{proposition}\label{pro:dp}
For any $c \in \N$, $L \in \N$, there exists an algorithm running in time $(cn)^{O(L)}$, which for any tree $T$ constructs a stable sequence assignment $S$ and computes a value of $\omega$ such that $\omega \leq \frac {1}{L}\costo_{\cA_S}(T)$ and:
\mathx
\costo_{\cA_S}(T) \leq \left(1+\frac{12}{c} \right) \opt {T}.
\mathx
\end{proposition}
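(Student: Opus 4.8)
The plan is to set up a bottom-up dynamic program over the tree, where for each vertex $v$ we compute the set of "profiles" of partial schedules that can be realized for the subtree $T_v$. A profile should record, in a rounded-down fashion to multiples of $\omega$, the relevant timing information that a parent needs to know in order to extend the schedule: namely, for each "level" of time (there are $O(L)$ levels since the total schedule duration is $O(\opt T) = O(\log n)$ and we will choose $\omega = \Theta(\opt T / L)$), which vertices of the current query sequence are still "exposed" — that is, could still have their jobs pushed up to an ancestor. Since we only track occupancy of $O(L)$ time-slots and, within the machinery, a bounded ($O(c)$-sized) amount of fine-grained information per slot, the number of profiles per vertex is $(cn)^{O(L)}$, and combining children profiles at a vertex of degree $d$ costs $(cn)^{O(L)}$ per child via a left-to-right merge, giving the stated running time. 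Crucially, I must first commit to a value of $\omega$: I would run the DP for each of the $n^{O(1)}$ candidate "guesses" of $\opt T$ (it suffices to guess $\lfloor \opt T \rfloor_{\delta}$ for a fine enough grid, using Observation~\ref{obs:opt-bounds} to bound the range to $[1, \lceil \log_2 n\rceil]$), set $\omega$ to be that guess divided by $L$, and keep the best feasible output; this guarantees $\omega \le \frac1L \costo_{\cA_S}(T)$ once we prove the cost bound against the correct guess.

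The heart of the argument is the cost bound $\costo_{\cA_S}(T) \le (1 + 12/c)\opt T$. I would start from an optimal consistent schedule assignment $\s^*$ of duration $\opt T$, guaranteed by the Observation at the end of Section~\ref{sec:schedules}. The idea is to "snap" $\s^*$ onto the grid of multiples of $\omega$: round every job's start time down to the nearest multiple of $\omega$. This rounding can create overlaps and can violate the consistency condition (ii), so I need a repair procedure. The key accounting trick is exactly what the modified cost $\costo$ was designed for: each time the rounded schedule would force a job to be delayed, I argue the delay can be absorbed by at most $O(1)$ units of $\omega$, and I charge this either to the job's own weight (when $w(v) \ge c\omega$, a delay of $O(\omega)$ is an $O(1/c)$-fraction of $w(v)$) or, when $w(v) < c\omega$, to a light down query — and the definition $\costo_\cA(T,x) = \cost_\cA(T,x) - (2c+1)\omega d_x$ pre-subtracts a budget of $(2c+1)\omega$ per light down query, which is enough to pay for the reshuffling these small jobs cause along the root-to-$x$ path. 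Summing the charges along any root-to-target path $x$: the "heavy" jobs contribute a multiplicative $(1 + O(1/c))$ blow-up on top of their $\le \opt T$ total weight, and the "light down" jobs are absorbed into the subtracted term, yielding $\costo_{\cA_S}(T,x) \le (1 + 12/c)\opt T$ for every $x$, hence for the max.

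The main obstacle, and where I would spend the most care, is the repair/charging argument in the previous paragraph — specifically, showing that after grid-snapping one can re-establish a consistent schedule whose restriction to each root-to-$x$ path has duration at most $(1+12/c)\opt T - (2c+1)\omega d_x$, and that the resulting object is realizable by the DP's restricted state space (so that the DP, which searches that space exhaustively, will find something at least this good). Two subtleties need attention here. First, interval-packing after rounding: when several jobs in one node's schedule get their start times rounded into a conflict, I must reorder and re-pack them within $O(\omega)$ extra span, and argue the extra span propagates up the tree without compounding more than a bounded number of times per path — this is where tracking $O(c)$ fine-grained slots per level, rather than just $\{$occupied, free$\}$, is needed. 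Second, stability: I must verify that the sequence assignment $S$ output by the DP is stable in the sense of Section~\ref{sec:querysequences}, i.e., that replacing the root-update line by a jump to any promised on-path vertex only deletes queries; this should follow from the bottom-up construction because a job placed in $\s(v)$ is, by consistency condition (ii), also present in $\s$ of the relevant descendant, so skipping intermediate roots never introduces a new query — I would phrase this as a short induction once the DP's recurrence is pinned down. The constant $12$ (and ultimately $168 = 14 \cdot 12$ in the main theorem) will just fall out of bookkeeping the $O(1)$-many $\omega$-charges; I would not optimize it.
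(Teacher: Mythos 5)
Your high-level plan has the right skeleton (bottom-up DP over box-granularity profiles, pick $\omega \approx \opt T / L$ by trying all candidates, compare against a grid-aligned near-optimal schedule), and your instinct that the modified cost $\costo$ must pay for something amounting to ``$O(\omega)$ of slack per light down query'' is correct. The rounding-and-repair step you sketch is essentially Lemma~\ref{lem:rounding-eps} in the paper, and by itself it does carry through cleanly (the paper avoids your worry about compounding delays by first dilating all times by a factor $(1+\frac{2}{c})$, then rounding; this is tidier than a per-job charging scheme). But there is a genuine gap in how you propose to close the argument.

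You implicitly treat the DP as searching over (a grid-restricted subclass of) \emph{valid consistent schedules}, so that proving ``a good aligned schedule exists and is representable in the state space'' would let you conclude ``the DP outputs something at least as good.'' This is not how the state space works, and it is exactly the point that makes the proposition nontrivial. To keep the number of states at $(cn)^{O(L)}$, the paper's DP records only the \emph{load} $s_v[p] \in \{0, \tfrac{1}{\omega cn}, \ldots, 1\}$ of each of the $L$ boxes, not which slots within a box are occupied. The constraints~\eqref{con:vertexcover}, \eqref{con:merge-insert-box}, \eqref{con:fullboxsafe} it enforces are \emph{necessary} conditions satisfied by every consistent schedule — this is why $\textsc{BuildStrategy}(r(T),\omega^*)=\emptyset$ implies the lower bound $\omega^* L < (1+\frac{11}{c})\opt T$ (Lemma~\ref{lem:lb}) — but they are \emph{not sufficient}: a profile that passes the DP need not come from any consistent schedule, because light jobs placed in the same box for a parent and one of its children may, when laid out at slot granularity, actually overlap. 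Consequently the sequence assignment $S$ that Algorithm~\ref{alg:S} reconstructs by backtracking is just an ordering of vertices by (box-rounded) start time, not a feasible schedule, and you cannot conclude anything about its cost merely by comparing box-load profiles.

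What is missing in your proposal is therefore the second half of the paper's argument: the progress-measure analysis (Lemma~\ref{lem:progress} and Corollary~\ref{cor:progress-concl}). One defines $M(x,i)$ as the smallest remaining ``starting box'' in the suffix of $S(v)$ for the current root $v$; every $a\omega$ of weight spent on consecutive up-queries advances $M$ by at least $a$, while a down-query can set $M$ back by at most $2c+1$ (because, by Lemma~\ref{lem:heavy_zero}, only two \emph{light} jobs can overlap, and their combined span is $<(2c+1)\omega$). Since $M\le L$, one gets $\cost_{\cA_S}(T',x) \le L\omega + (2c+1)\omega\, d_x$, i.e.\ $\costo_{\cA_S}(T) \le L\omega$. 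That is where the $-(2c+1)\omega d_x$ budget in $\costo$ is actually spent — on the \emph{output} strategy's slot conflicts, not on a repair of the snapped optimal schedule. Without this argument (or an equivalent way of certifying the cost of the backtracked sequence), your proof gives a lower bound on $\opt T$ but no upper bound on $\costo_{\cA_S}(T)$, so the proposition does not follow. You also need to separately establish stability of $S$ (Lemma~\ref{lem:stable}), which you flag but do not prove; this requires looking at the exact tie-breaking rule used when sorting $C(v)$ in Algorithm~\ref{alg:S}, not merely consistency condition (ii).
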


In order to convert the obtained strategy $\cA_S$ with a small value of $\costo$ into a strategy with small $\cost$, we describe in Section~\ref{sec:blackbox} an appropriate strategy conversion mechanism. The approach we adopt is applicable to any strategy based on a stable sequence assignment and consists in concatenating, for each vertex $v\in V$, a prefix to the query sequence $S(v)$ in the form of a separately computed sequence $R(v)$, which does not depend on $S(v)$. The considered query sequences are thus of the form $R(v) \circ S(v)$, where the symbol ``$\circ$'' represents sequence concatenation. Intuitively, the sequences $R$, taken over the whole tree, reflect the structure of a specific solution to the unweighted tree search problem on a contraction of tree $T$, in which each edge connecting a node to a child with weight at least $c\omega$ is contracted. We recall that the optimal number of queries to reach a target in an unweighted tree is $O(\log n)$, and the goal of this conversion is to reduce the number of light down queries in the combined strategy to at most $O(\log n)$.

\begin{proposition} \label{prop:cost-prime}\label{pro:cost-prime}
For any fixed $\omega > 0$ there exists a polynomial-time algorithm which for a tree $T$ computes a sequence assignment $R : V \to V^*$, such that, for any strategy $\cA_S$ based on a stable sequence assignment $S$, the sequence assignment $S^+$, given by $S^+(v) = R(v) \circ S(v)$ for each $v\in V$, has the following property:
$$\cost_{\cA_{S^+}}(T) \leq \costo_{\cA_{S}}(T) + 4\comega \log_2 n.$$
\end{proposition}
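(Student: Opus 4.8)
The plan is to build the sequence assignment $R$ from a solution to the \emph{unweighted} tree search problem on a suitable contraction of $T$. Let me define this contraction precisely. Form a tree $\bar T$ on the same vertex set $V$ by contracting every edge that joins a node to a child of weight at least $c\omega$; equivalently, group the vertices of $T$ into "clusters" where each cluster is a maximal connected set of vertices linked by such heavy-child edges, and let $\bar T$ be the tree of clusters. Since in any optimal unweighted search strategy the number of queries to locate a target is at most $\lceil\log_2 |\bar T|\rceil \le \log_2 n$ (Observation~\ref{obs:opt-bounds} applied to the unweighted instance, or the classical result of \cite{OnakP06}), there is a sequence assignment $\bar R$ for $\bar T$ whose induced strategy $\cA_{\bar R}$ performs at most $\log_2 n$ queries on every target. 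I would then pull $\bar R$ back to $T$: for each vertex $v\in V$, let $R(v)$ list, in the order dictated by $\bar R$ applied from the cluster of $v$, one representative vertex per queried cluster — and I would take that representative to be a cluster vertex of largest weight, or more carefully the vertex that is the "attachment point" of the cluster (its topmost vertex). Computing all of this is clearly polynomial time: contraction, the linear-time unweighted algorithm of \cite{OnakP06,Schaffer89}, and the pullback are all polynomial.

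Now consider $S^+(v) = R(v)\circ S(v)$ and the induced strategy $\cA_{S^+}$. The key structural claim is that on any target $x$, the execution of $\cA_{S^+}$ can be split into two interleaved parts: (a) the queries coming from the $R$-prefixes, and (b) the queries coming from the $S$-suffixes. For part (a), because $\bar R$ locates the target cluster in at most $\log_2 n$ cluster-queries, and each cluster-query is realized in $T$ by at most one actual query (when the representative lands inside the current search subtree), the total cost contributed by $R$-queries is at most $\log_2 n$ times the maximum weight of a queried vertex; but we must be more generous, since a representative we query may have weight up to — well, any weight, as clusters can contain heavy vertices. This is where I would use that the relevant representatives are attachment points and that, once we have queried along the $R$-sequence, the search subtree has collapsed to within a single cluster; inside that cluster, part (b) takes over using the stable assignment $S$. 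The stability of $S$ is exactly what lets us say that running $S$ "from the middle" (with the root already advanced to somewhere inside the cluster by the $R$-prefix queries) performs only a subsequence of what $\cA_S$ would perform on its own, so the $S$-part of the cost is bounded by $\costo_{\cA_S}(T)$ — \emph{once we have accounted separately for the light down queries}. That is the whole point of the modified cost $\costo$: the $R$-prefix is designed so that every "down" step it induces corresponds to descending one level in $\bar T$, hence happens at most $\log_2 n$ times, and these absorb the light down queries that $\costo$ discounted in $\cA_S$.

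Putting the accounting together: the $R$-contribution is at most (number of cluster levels) $\times$ (cost per cluster-step). A cluster-step, when realized, may query a vertex of weight $<c\omega$ at the attachment point together with possibly re-querying a heavy boundary vertex; charging generously, each of the $\le \log_2 n$ cluster-steps costs at most $2c\omega$ for the heavy part plus at most one light query of cost $<c\omega$, and transitions between the $R$-phase and $S$-phase can cost an extra $\comega$ per level for the realignment of the root — totaling at most $4\comega\log_2 n$. The $S$-contribution is at most $\costo_{\cA_S}(T)$ by stability, because the light down queries that would have appeared are now subsumed in the $R$-prefix and were already subtracted off in the definition of $\costo$. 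Hence $\cost_{\cA_{S^+}}(T) \le \costo_{\cA_S}(T) + 4\comega\log_2 n$, as required.

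The main obstacle I anticipate is the interleaving argument and the precise bookkeeping of which queries "belong" to $R$ versus $S$ for a fixed target: one must verify that a query reply of type "down" while processing $R(v)$ genuinely advances us to the next cluster (so that at most $\log_2 n$ such steps occur), that processing $S(v)$ never re-triggers $R$-queries already made (here the "$v$ last in $S(v)$" and stability conventions are essential), and that the cost attributed to re-querying heavy boundary vertices across cluster transitions is bounded — this is where the factor $4$ (rather than $1$ or $2$) in front of $\comega\log_2 n$ comes from, and getting the constant honestly right is the fiddly part. The rest is routine once the contraction $\bar T$ and the pullback $R$ are set up so that "one cluster-step $=$ at most $2c\omega + c\omega$-ish cost, charged $\le\log_2 n$ times."
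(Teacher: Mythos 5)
Your high-level idea is the same as the paper's: contract the heavy parts of $T$, run a vertex-ranking/centroid-decomposition on the contraction (the unweighted tree-search solution), and pull the resulting labeling back to $T$ to define the prefixes $R(v)$. Your contraction (clusters under heavy-child edges) is equivalent to the paper's notion of ``extended heavy part,'' and using an unweighted strategy on $\bar T$ versus a label function on $T_C$ is cosmetically different but mathematically the same. However, your proof has several genuine gaps.

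First, you must commit to the light representative. You hedge between ``a cluster vertex of largest weight'' and ``the attachment point (its topmost vertex).'' Only the attachment point works: the paper's Observation~\ref{obs:light-in-R} is exactly the statement that every vertex appearing in any $R(v)$ is light (weight $\leq c\omega$), and this is what lets each $R$-query cost only $O(c\omega)$. With a heavy representative, a single $R$-query could cost up to $1$, which destroys the $4\comega\log_2 n$ bound. This is not a detail; without it the entire estimate fails.

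Second, you assert only $\log_2 n$ $R$-queries but do not prove it, and you miss the subtle case. When an $R$-query hits the attachment point of an extended heavy part and the reply is `down' \emph{into} the heavy part, the minimum remaining label $C_i$ does \emph{not} advance (the heavy children share the label of the attachment point). The paper's Lemma~\ref{lem:R-down} handles this by showing no further $R$-queries occur at that level, giving the weaker but still sufficient bound $2\log_2 n$. You have no argument for either count.

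Third — and this is the structural error — you attribute the whole $4\comega\log_2 n$ loss to the $R$-phase and claim the $S$-contribution is bounded by $\costo_{\cA_S}(T)$ ``because the light down queries that would have appeared are now subsumed in the $R$-prefix.'' This is false. There can be light-down $S$-queries in $\cA_{S^+}$ on vertices that are not attachment points and never appear in any $R(v)$, and their cost is \emph{not} discounted in the plain $\cost$ of $\cA_{S^+}$. Stability only gives you that the $S$-queries of $\cA_{S^+}$ are a subsequence of those of $\cA_S$; it does not convert $\cost$ into $\costo$. The paper's proof is in two separate steps: (a) $\costo_{\cA_{S^+}}(T) \leq \costo_{\cA_S}(T) + 2\comega\log_2 n$, bounding the increase in \emph{modified} cost caused by adding at most $2\log_2 n$ light $R$-queries; and (b) $\cost_{\cA_{S^+}}(T) \leq \costo_{\cA_{S^+}}(T) + 2\comega\log_2 n$, which requires showing (Lemmas~\ref{lem:last-in-R} and~\ref{lem:S-down}) that the number of light-down queries in the \emph{combined} strategy $\cA_{S^+}$ — both $R$-queries and $S$-queries — is itself at most $2\log_2 n$, because every light-down query forces $C_i$ to advance. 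Step (b) is entirely absent from your argument, and without it $\cost_{\cA_{S^+}}$ cannot be bounded in terms of $\costo_{\cA_{S^+}}$ at all. Your ``$2c\omega$ for the heavy part plus $\comega$ per level for realignment'' bookkeeping does not correspond to a correct accounting and should be replaced by the two-step decomposition above.
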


The proof of Proposition~\ref{pro:cost-prime} is provided in Section~\ref{sec:blackbox}.

We are now ready to put together the two bounds. Combining the claims of Proposition~\ref{pro:dp} for $L = \lceil c^2 \log_2 n \rceil$ (with $\omega \leq \frac {1}{L}\costo_{\cA_S}(T) \leq \frac{\costo_{\cA_{S}}(T)}{c^2 \log_2 n}$) and Proposition~\ref{pro:cost-prime}, we obtain:

\begin{align*}
\cost_{\cA_{S^+}}(T) &\leq \costo_{\cA_{S}}(T) + 4\comega \log_2 n \leq \costo_{\cA_{S}}(T) + 12 c\omega \log_2 n \
\leq\\&\leq
\costo_{\cA_{S}}(T) + 12 c \log_2 n \frac{\costo_{\cA_{S}}(T)}{c^2 \log_2 n}
\leq
\left(1+\frac{12}{c}\right) \costo_{\cA_{S}}(T) \leq\\
&\leq \left(1+\frac{12}{c}\right)^2 \opt {T} \leq \left(1+\frac{168}{c}\right) \opt {T}.
\end{align*}

After putting $\eps = \frac{168}{c}$ and noting that in stating our result we can safely assume $c = O(\mathrm{poly}(n))$ (beyond this, the tree search problem can be trivially solved optimally in $O(n^n)$ time using exhaustive search), we obtain the main theorem of the Section.

\begin{theorem}\label{thm:qptas}
There exists an algorithm running in $n^{O\left(\frac{\log n}{\eps^2}\right)}$ time, providing a $(1+\eps)$-approximation solution to the weighted tree search problem for any $0 < \eps < 1$.\eop
\end{theorem}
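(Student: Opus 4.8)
The plan is to assemble Theorem~\ref{thm:qptas} directly from Propositions~\ref{pro:dp} and~\ref{pro:cost-prime}, which is essentially the calculation already laid out in the excerpt preceding the statement; my job in the proof is just to make the parameter choices explicit and verify the running time. First I would fix the target approximation parameter $0 < \eps < 1$ and set $c = \lceil 168/\eps \rceil$, so that $168/c \leq \eps$; since we may assume $c = O(\mathrm{poly}(n))$ (otherwise brute force over all search strategies in $O(n^n)$ time already suffices), this choice will not blow up the running time. Then I would invoke Proposition~\ref{pro:dp} with $L = \lceil c^2 \log_2 n \rceil$ to obtain a stable sequence assignment $S$ and a value $\omega$ with $\omega \leq \frac{1}{L}\costo_{\cA_S}(T) \leq \frac{\costo_{\cA_S}(T)}{c^2 \log_2 n}$ and $\costo_{\cA_S}(T) \leq (1 + 12/c)\,\opt T$, in time $(cn)^{O(L)} = (cn)^{O(c^2 \log n)}$.

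Next I would apply Proposition~\ref{pro:cost-prime} with this $\omega$ to the assignment $S$, obtaining in polynomial time a prefix assignment $R$ so that $S^+(v) = R(v) \circ S(v)$ satisfies $\cost_{\cA_{S^+}}(T) \leq \costo_{\cA_S}(T) + 4\comega \log_2 n$. The final strategy is $\cA_{S^+}$. The cost estimate then proceeds by the chain of inequalities displayed just above the theorem statement: bounding $4\comega = 4(2c+1)\omega \leq 12c\omega$, substituting the bound on $\omega$ so that $12c\omega\log_2 n \leq \frac{12}{c}\costo_{\cA_S}(T)$, collecting terms to get $\cost_{\cA_{S^+}}(T) \leq (1 + 12/c)\costo_{\cA_S}(T)$, applying the bound from Proposition~\ref{pro:dp} to get $(1+12/c)^2\opt T$, and finally using $(1+12/c)^2 \leq 1 + 168/c \leq 1 + \eps$ (valid since $12/c < 1$, so the cross term is controlled). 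This gives $\cost_{\cA_{S^+}}(T) \leq (1+\eps)\opt T$, i.e., a $(1+\eps)$-approximation.

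For the running time, the dominant term is the call to Proposition~\ref{pro:dp}, which runs in $(cn)^{O(c^2 \log n)}$ time; the construction of $R$ and the combined strategy $\cA_{S^+}$ add only a polynomial factor. Since $c = O(1/\eps)$ (up to rounding) and $c = O(\mathrm{poly}(n))$, we have $(cn)^{O(c^2 \log n)} = n^{O(c^2 \log n)} = n^{O(\log n / \eps^2)}$, absorbing the $\log c = O(\log n)$ overhead from the base $cn$ into the exponent. This yields the claimed bound.

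Honestly, there is no genuine obstacle in this particular proof: it is a wrapper that plugs the two propositions together and tracks constants, and the bulk of the work (Propositions~\ref{pro:dp} and~\ref{pro:cost-prime}) is deferred to Sections~\ref{sec:dp} and~\ref{sec:blackbox}. The only points requiring a modicum of care are (i) checking that the choice $L = \lceil c^2 \log_2 n \rceil$ is consistent with the hypotheses of Proposition~\ref{pro:dp} (it is, since $c, L \in \N$), (ii) verifying the arithmetic $(1+12/c)^2 \leq 1 + 168/c$ under the standing assumption that $c$ is large enough that $12/c \leq 1$ — if one wants the statement for all $0 < \eps < 1$ one should note that $\eps < 1$ forces $c \geq 169 > 12$ — and (iii) confirming that the normalization and star-condition preprocessing from Section~\ref{sec:querymodel} (used implicitly via Observation~\ref{obs:opt-bounds} inside the propositions) does not affect the final instance. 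None of these is deep; the theorem follows.
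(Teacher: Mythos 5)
Your proof is correct and follows exactly the same path as the paper: choose $c = \Theta(1/\eps)$ and $L = \lceil c^2 \log_2 n\rceil$, invoke Propositions~\ref{pro:dp} and~\ref{pro:cost-prime}, and chain the inequalities displayed just above the theorem. The only cosmetic difference is that your side condition for $(1+12/c)^2 \le 1+168/c$ is overly cautious --- that inequality holds for every $c \ge 1$, so no lower bound on $c$ beyond $c \in \N$ is actually needed.
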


\subsection{Extension: A Poly-Time \texorpdfstring{$O(\sqrt{\log n})$}{O(sqrt(log n))}-Approximation Algorithm} \label{sec:approx}

We now present the second main result of this work. By recursively applying the previously designed QPTAS (Theorem~\ref{thm:qptas}), we obtain a polynomial-time $O(\sqrt{\log n})$-approximation algorithm for finding search strategy for an arbitrary weighted tree.
We start by informally sketching the algorithm --- we follow here the general outline of the idea from \cite{CicaleseKLPV14}.
The algorithm is recursive and starts by finding a minimal subtree $T^*$ of an input tree whose removal disconnects $T$ into subtrees, each of size bounded by $n/2^{\sqrt{\log n}}$.
The tree $T^*$ will be processed by our optimal algorithm described in Section~\ref{sec:qptas}.
This results either in locating the target node, if it belongs to $T^*$, or identifying the component of $T-T^*$ containing the target, in which case the search continues recursively in the component.
However, for the final algorithm to have polynomial running time, the tree $T^*$ needs to be of size $2^{O({\sqrt{\log n}})}$.
This is obtained by contracting paths in $T^*$ (each vertex of the path has at most two neighbors in $T^*$) into single nodes having appropriately chosen weights.
Since $T^*$ has $2^{O({\sqrt{\log n}})}$ leaves, this narrows down the size of $T^*$ to the required level  and we argue that an optimal search strategy for the `contracted' $T^*$ provides a search strategy for the original $T^*$ that is within a constant factor from the cost of $T^*$.

A formal exposition and analysis of the obtained algorithm is provided in \InJournal{Section~\ref{sec:algosqrt}.}\InConference{the Appendix.}
\begin{theorem} \label{thm:recursive-algo}
There is a $O(\sqrt{\log n})$-approximation polynomial time algorithm for the weighted tree search problem.
\end{theorem}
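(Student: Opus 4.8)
The plan is to realize the divide-and-conquer scheme sketched above as a recursion of depth $O(\sqrt{\log n})$ in which each level contributes only an $O(1)$-multiple of $\opt T$ to the worst-case search cost. Fix a threshold $k=\lceil n/2^{\sqrt{\log n}}\rceil$. First I would compute, in linear time, the connected subtree $T^*$ induced by the vertex set $\{v : |T_v|>k\}$ (for $T$ rooted arbitrarily; note $r(T)\in T^*$ for $n\ge 2$); since this set is closed under taking ancestors, $T^*$ is connected, every component of $T-V(T^*)$ has at most $k$ vertices, and the subtrees $T_v$ rooted at distinct leaves $v$ of $T^*$ are pairwise disjoint and of size more than $k$, so $T^*$ has fewer than $n/k\le 2^{\sqrt{\log n}}$ leaves. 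I then contract each maximal path of $T^*$ whose internal vertices have degree at most two into a single node $v_P$; since a tree has fewer branch vertices than leaves, the resulting tree $\hat T^*$ has $2^{O(\sqrt{\log n})}$ vertices, and every $v_P$ has only branch vertices as neighbours in $\hat T^*$. Each contracted node $v_P$ receives weight $w(v_P):=\opt P$, the cost of searching the weighted path $P$ on its own, which is computable in $O(|P|^2)$ time \cite{CicaleseJLV12}; over all paths this is $O(|T^*|^2)$.

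The technical heart is the bound $\opt{\hat T^*}\le O(1)\cdot\opt{T^*}$. I would prove it through the interval/schedule characterization recalled in Section~\ref{sec:schedules}: starting from an interval assignment $I$ on $T^*$ of makespan $\opt{T^*}$, observe that for each contracted path $P$ the restriction $I|_{V(P)}$ is itself a valid interval assignment for $P$ (the separating vertices lie on the $P$-path, since $P$ is connected), so $\opt P\le\gamma_P:=\max_{v\in V(P)}\max I(v)$; hence one may delete the (spread-out) intervals of $V(P)$ and insert in their place a single interval of length $\opt P$ for $v_P$, aligned to finish at $\gamma_P$. The obstacle is that a contracted node has no vertex left to separate it from the former path-endpoints' neighbours, whereas those neighbours could previously overlap interior path vertices; to create the necessary slack one first dilates the time axis by a constant factor, and then must verify the separating-vertex condition for every pair involving some $v_P$. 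I expect this verification — together with pinning down the right constant — to be the main difficulty. Granting the bound, apply Theorem~\ref{thm:qptas} to $\hat T^*$ with $\eps=\tfrac12$: its running time is $|\hat T^*|^{O(\log|\hat T^*|)}=2^{O((\log|\hat T^*|)^2)}=2^{O(\log n)}=\mathrm{poly}(n)$ because $\log|\hat T^*|=O(\sqrt{\log n})$, and it returns a strategy $\hat\cA$ with $\cost_{\hat\cA}(\hat T^*)\le\tfrac32\,\opt{\hat T^*}=O(\opt{T^*})=O(\opt T)$, using $\opt{T^*}\le\opt T$ (monotonicity of $\opt$ under connected subtrees, by the same restriction argument applied to $I$).

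Next I would lift $\hat\cA$ to a strategy $\cA^*$ for $T^*$: whenever $\hat\cA$ queries a contracted node $v_P$, instead run an optimal search of the path $P$ inside $T^*$. Such a search either finds the target in $P$, or every one of its answers points toward a fixed endpoint of $P$, that is, toward the corresponding neighbour of $v_P$ in $\hat T^*$ — exactly the information a query to $v_P$ would return — so the simulation of $\hat\cA$ is faithful; and since $P$ is present in $T^*$ whenever $v_P$ is present in the current search subtree of $\hat T^*$, each such block costs at most $\opt P=w(v_P)$. Queries to non-contracted vertices cost the same in both trees, so $\cost_{\cA^*}(T^*,x)\le\cost_{\hat\cA}(\hat T^*)$ for every target $x$, giving $\cost_{\cA^*}(T^*)=O(\opt T)$. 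Running $\cA^*$ on $T$ either locates the target (if it lies in $T^*$) or reveals the component $C$ of $T-V(T^*)$ containing it; after re-enforcing the star condition on $C$ (which does not increase $\opt$ and whose strategies lift back to $C$ without cost increase) we recursively compute a strategy for $C$ of cost $O(\sqrt{\log|C|})\,\opt C\le O(\sqrt{\log|C|})\,\opt T$, again using $\opt C\le\opt T$. Concatenating, the worst-case cost $f$ of the whole strategy obeys $f(n)\le O(1)\cdot\opt T+f(k)$ with $k\le n/2^{\sqrt{\log n}}$, the base case $n=O(1)$ being solved by brute force.

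Finally, writing $a_i$ for $\log_2$ of the instance size at recursion depth $i$, we have $a_{i+1}\le a_i-\sqrt{a_i}$; since halving $a$ takes $O(\sqrt a)$ such steps and the resulting bounds form a geometric series, the recursion depth is $O(\sqrt{a_0})=O(\sqrt{\log n})$, so $f(n)=O(\sqrt{\log n})\cdot\opt T$ and the computed strategy is an $O(\sqrt{\log n})$-approximation. For the running time, the subinstances arising at a fixed depth are vertex-disjoint subtrees of $T$, each subinstance spends time polynomial in its own size (linear for the decomposition, quadratic for the path costs, and $\mathrm{poly}$ for the invocation of Theorem~\ref{thm:qptas}), and there are $O(\sqrt{\log n})$ depths, so the overall running time is $\mathrm{poly}(n)$.
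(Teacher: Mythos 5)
Your overall scheme (separating subtree with few leaves, contraction of degree-two chains, QPTAS on the contracted tree, recursion on the small components, depth $O(\sqrt{\log n})$, cost $O(\opt T)$ per level) is the same as the paper's. The decisive difference is how you weight a contracted chain, and this is where the argument breaks: you set $w(v_P):=\opt P$ and need the bound $\opt{\hat T^*}\leq O(1)\cdot\opt{T^*}$, which you explicitly leave unproved (``granting the bound''). This is not a missing technicality --- the bound is false for this weighting. Take $T^*$ to be a complete binary tree on $b$ branch vertices in which every edge is subdivided into a chain of $m$ unit-weight vertices, every branch vertex has weight $\log_2 m$ and carries one or two pendant leaves of weight $\log_2 m$ (the heavy pendants make the star condition hold); such a $T^*$ is realizable in your construction by hanging, below each pendant and below the bottom vertices, cheap subtrees split into pieces of size at most $k$, so that exactly these vertices have more than $k$ descendants. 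Then $\opt{T^*}=\Theta(\log b+\log m)$: unit-cost interior chain vertices serve as three-way separators for an $O(\log b)$-step binary search, binary search inside one chain costs $O(\log m)$, and only one final query of cost $\log_2 m$ is ever needed to separate a branch vertex from its pendant. But in your contracted tree every vertex has weight $\Theta(\log m)$ (each chain gets $\opt P=\Theta(\log m)$), there are $\Theta(b)$ vertices of bounded degree, and every query has $O(1)$ outcomes, so any strategy needs $\Omega(\log b)$ queries of cost $\Omega(\log m)$ each; hence $\opt{\hat T^*}=\Omega(\log b\cdot\log m)$. With $b=2^{\Theta(\sqrt{\log n})}$ and $m\approx n/b$ the ratio is $\Omega(\sqrt{\log n})$, which wipes out the claimed guarantee. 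This is also exactly why your interval-reassignment sketch cannot be completed: a pair separated in $T^*$ by a cheap chain vertex whose interval ends very early must, after contraction, be separated by an interval of length $\opt P$ ending equally early, and dilating the time axis by a constant factor cannot create that room, since the discrepancy is not a constant factor.

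The paper avoids this trap by giving $v_P$ the weight $\min_{u\in P}w(u)$, so that $\opt{\longchains{T^*}}\leq\opt{T^*}$ is immediate, and by paying the full cost of resolving a chain only \emph{once}: the strategy for $\longchains{T^*}$ is mimicked on $T^*$ by querying the minimum-weight vertex of $P$ whenever $v_P$ is queried, and only after this phase is a single optimal path search run on the one surviving subpath, adding at most $\opt{T^*}$ (Lemma~\ref{lem:no-long-chains}, giving a $(p+1)$-approximation for $T^*$ from a $p$-approximation for $\longchains{T^*}$). Your lifting instead runs a full $\opt P$-cost path search at \emph{every} query to a contracted node, which is precisely what forces the inflated weight $w(v_P)=\opt P$ and, with it, the false comparison; to repair the proof you would have to adopt the paper's ``resolve the chain once, at the end'' accounting rather than try to prove your contraction lemma. (A second, minor and fixable omission: when the strategy for $T^*$ is run inside $T$ and terminates at a vertex $x'$ of $T^*$, one additional query to $x'$ is still needed to decide whether $x'$ is the target or which attached component contains it, as in Lemma~\ref{lem:recursion-cost}.)
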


\section{\InJournal{Proof of Proposition~\ref{pro:dp}: }Quasi-Polynomial Computation of Strategies with Small \texorpdfstring{$\costo$}{Modified Cost}}\label{sec:dp}

\subsection{Preprocessing: Time Alignment in Schedules}

We adopt here a method similar but arguably more refined than rounding techniques in scheduling problems of combinatorial optimization, showing that we could discretise the starting and finishing time of jobs, as well as weights of vertices, in a way to restrict the size of state space for each node to $n^{O(\log n)}$, without introducing much error.

Fix $c \in \N$ and $\omega = \frac{a}{cn}$ for some $a\in \N$. (In subsequent considerations, we will have $c = \Theta(1/\eps)$, $a=O(\frac{n}{\log n})$ and $\omega = \Omega(\eps/\log n)$.) Given a tree $T = (V,E,w)$, let $T' = (V,E,w')$ be a tree with the same topology as $T$ but with weights rounded up as follows:
\begin{equation} \label{eq:wprime-def}
w'(v)=
\begin{cases}
\roundup{w(v)}{\omega}, &\text{if }w(v)>c\omega,\\
\roundup{w(v)}{\frac{1}{cn}}, &\text{otherwise.}
\end{cases}
\end{equation}
We will informally refer to vertices with $w(v)>c\omega$ (equivalently $w'(v)>c\omega$) as \emph{heavy vertices} and vertices with $w(v) \leq c\omega$ (equivalently $w'(v)\leq c \omega$) as \emph{light vertices}.
(Note that $w(v)\leq c\omega$ if and only if $w'(v)\leq c\omega$.)
When designing schedules, we consider time divided into \emph{boxes} of duration $\omega$, with the $i$-th box  equal to $[i\omega, (i+1) \omega]$. Each box is divided into $a$ identical \emph{slots} of length $\frac 1 {cn}$.

In the tree $T'$, the duration of a query to a heavy
vertex is an integer number of boxes, and the duration of a query to a light vertex is an integer number of
slots. We next show that, without affecting significantly the approximation ratio of the strategy, we can align each query to a heavy vertex in the schedule so that it occupies an interval of full adjacent boxes, and each query to a light
vertex in the schedule
so that it occupies an interval of full adjacent slots (possibly contained in more than one box).

We start by showing the relationship between the costs of optimal solutions for trees $T$ and $T'$.
\begin{lemma} \label{lem:costTTprime}
$\opt T \leq \opt {T'}\leq (1+\frac{2}{c})\opt T$.
\end{lemma}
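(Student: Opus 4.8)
The statement $\opt T \leq \opt{T'} \leq (1+\frac2c)\opt T$ has two directions. The lower bound $\opt T \le \opt{T'}$ is the easy one: since $w'(v) \ge w(v)$ for every vertex $v$ (we only ever round up), any search strategy for $T'$ is also a valid search strategy for $T$, and each query costs no more in $T$ than in $T'$; hence $\cost_\cA(T) \le \cost_\cA(T')$ for every strategy $\cA$, and taking the minimum over strategies gives $\opt T \le \opt{T'}$. So the real work is the upper bound.

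**Upper bound.** The plan is to take an optimal strategy for $T$ — more precisely, by the Observation at the end of Section~\ref{sec:schedules}, a consistent schedule assignment $\s$ on $T$ with $|\s| = \opt T$ — and transform it into a feasible schedule assignment $\s'$ on $T'$ whose duration is at most $(1+\frac2c)\opt T$. The idea is to inflate the time axis by a uniform multiplicative factor of $(1+\frac2c)$, i.e.\ map each job $(u,t)$ to $(u, (1+\frac2c)t)$, which blows up every completion time by the same factor and in particular makes $|\s'| \le (1+\frac2c)|\s|$. What must be checked is that after this dilation there is enough room to replace each job's processing time $w(u)$ by the rounded-up value $w'(u)$ without creating overlaps and without violating consistency condition~(ii). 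For a heavy vertex, $w'(u) = \roundup{w(u)}{\omega} \le w(u) + \omega < w(u)(1 + \frac1c)$, using $w(u) > c\omega$; for a light vertex, $w'(u) = \roundup{w(u)}{1/(cn)} \le w(u) + \frac1{cn}$, and here I will need to argue that the additive slack accumulated over a chain of light vertices is still absorbed — this is where the dilation factor and the slot granularity interact. Concretely, after dilating by $(1+\frac2c)$, each original unit of idle/occupied time $w(u)$ becomes $(1+\frac2c)w(u)$, which for heavy $u$ exceeds $w'(u)$ with room to spare, and I then left-shift jobs within each node's schedule to pack them against each other, preserving condition~(ii) because the parent's separating job only moved earlier relative to the dilated timeline.

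**Main obstacle.** The delicate point is the light vertices: a root-to-leaf path may contain many light vertices in sequence, each contributing an additive rounding error of $\frac1{cn}$, so the naive bound on accumulated error along a path of length up to $n$ is $\frac1c$ — and one must verify this is genuinely dominated by the $\frac2c$ of slack provided by the dilation (and does not compound multiplicatively with the heavy-vertex errors in a bad way). I expect the clean way to handle this is to treat the dilation as providing, for each job of weight $w(u)$, an extra budget of $\frac2c w(u)$ of free time: heavy jobs self-finance their own rounding ($\frac\omega c$ suffices since $w(u) > c\omega$ gives $\frac2c w(u) > 2\omega > \omega$), while the global light-vertex rounding error of at most $\frac1{cn}\cdot n = \frac1c$ along any path is covered because $\opt T \ge 1$ (Observation~\ref{obs:opt-bounds}) means the dilation frees up at least $\frac2c \opt T \ge \frac2c$ total along the critical path. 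Assembling these estimates carefully, verifying conditions~(i) and~(ii) for the constructed $\s'$, and concluding $\opt{T'} \le |\s'| \le (1+\frac2c)\opt T$ completes the argument.
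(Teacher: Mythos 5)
Your lower bound is exactly the paper's argument: $w'\geq w$ implies $\opt T\leq\opt{T'}$ by monotonicity. For the upper bound, however, you have routed yourself through the consistent-schedule formalism (time dilation, left-packing, re-verifying conditions (i) and (ii)) when none of that machinery is needed, and the result is a sketch with a real gap. The paper's proof is much shorter and avoids schedules entirely: take the optimal strategy $\cO$ for $T$, and run the \emph{same} time-ordering of queries $v_1,\ldots,v_k$ as a strategy $\cO'$ on $T'$. Then $\cost_{\cO'}(T',x)=\sum_i w'(v_i)$ is just a sum of weights; using the pointwise bound $w'(v)\leq(1+\frac1c)w(v)+\frac1{cn}$ (which follows from the two cases of \eqref{eq:wprime-def}, exactly as you derived), and the facts $k\leq n$ and $\opt T\geq 1$, one gets $\cost_{\cO'}(T',x)\leq(1+\frac1c)\opt T+\frac1c\leq(1+\frac2c)\opt T$, and taking the max over $x$ finishes the proof. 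No schedule to construct, no consistency to re-verify, no packing.

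The gap in your dilation route is precisely the point you flag as the ``main obstacle.'' After dilating the timeline by $(1+\frac2c)$, the slot available to a job of original weight $w(u)$ has length $(1+\frac2c)w(u)$, and you need $w'(u)$ to fit. For a light vertex this requires $\frac2c w(u)\geq\frac1{cn}$, i.e.\ $w(u)\geq\frac1{2n}$, which is not guaranteed. Your fallback is an aggregate budget argument — total slack $\frac2c\opt T\geq\frac2c$ along a path covers the total light-vertex rounding $\leq\frac1c$ plus the heavy-vertex rounding $\leq\frac1c\opt T$ — and that arithmetic is fine (it is in fact identical to the paper's $\frac1c+(1+\frac1c)\opt T$ bound). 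But to convert that aggregate accounting into an actual consistent schedule you would need to ``left-shift'' jobs in a way that is coordinated across the schedules of \emph{all} nodes simultaneously, otherwise condition~(ii) can break: a child's job $(u,t)$ must still either appear unchanged in the parent's schedule or start after the parent's own query finishes. You assert this is preserved ``because the parent's separating job only moved earlier,'' but an uncoordinated per-node left-shift does not obviously satisfy this, and making it precise amounts to rediscovering that the only quantity that matters is the sum of weights along a root-to-target path — which is exactly what the paper uses directly. In short, your aggregate estimate is morally correct and matches the paper, but the schedule-level construction you build around it is both unnecessary and not fully justified.
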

\InJournal{
\begin{proof}

The inequality $\opt{T}\leq \opt{T'}$ follows directly from the monotonicity of the cost of the solution with respect to vertex weights, since we have $w'(v) \geq w(v)$, for all $v\in V$. 

To show the second inequality, we note that by the definition of weights~\eqref{eq:wprime-def}, for any vertex $v$, $w'(v) \leq (1+\frac{1}{c})w(v) + \frac{1}{cn}$.

Consider an optimal strategy $\cO$ for tree $T$ and let $\Q_{\cO}(T,x) = (v_1, \ldots, v_k)$ be the time-ordering of queries performed by strategy $\cO$ on tree $T$ to find a target vertex $x$. Let $\cO'$ be the strategy which follows the same time-ordering of queries when locating target $x$ in $T'$. We have:
\begin{align*}
\cost_{\cO'}(T', x) & = \sum_{i=1}^k w'(v_i) \leq \sum_{i=1}^k \left(\left(1+\frac{1}{c}\right)w(v) + \frac{1}{cn}\right) \leq \frac{1}{c} + \left(1+\frac{1}{c}\right) \sum_{i=1}^k w(v) \leq \\
& \leq \left(1+\frac{2}{c}\right) \opt T,
\end{align*}
where we used the fact that, by Observation~\ref{obs:opt-bounds}, $\opt T \geq 1$. Since $\opt {T'} \leq \max_{x\in V}\cost_{\cO'}(T', x) $, the claim follows.
\end{proof}
}

\begin{lemma} \label{lem:rounding-eps}
There exists a consistent schedule assignment $\s$ for tree $T'$ such that $ \cost_{\cA_{\s}} (T')\leq (1+\frac{3}{c})\opt{T'}$ and for all $v\in V$ we have that
\begin{itemize}
\item if $w'(v)> c\omega$, ($v$ is heavy), then the starting time $t$ of any job $(v,t)$ in the schedule $\s(u)$ of any $u \in V$ is an integer multiple of $\omega$ (aligned to a box),
\item
if $w'(v)\leq c\omega$, ($v$ is light), then
the starting time $t$ of any query $(v,t)$ in the schedule $\s(u)$ of any $u \in V$ is an integer multiple of $\frac 1{cn}$ (aligned to a slot).
\end{itemize}
\end{lemma}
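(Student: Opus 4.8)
The plan is to start from the consistent schedule assignment for $T'$ of duration $\opt{T'}$ guaranteed by the characterization recalled before Observation~\ref{obs:opt-bounds} (applied to $T'$ rather than $T$), and then to stretch time so as to absorb all rounding of start times into full boxes/slots. Concretely, I would fix an optimal interval assignment $I : V \to \mathcal I_{\opt{T'}}$ for $T'$ and the induced consistent schedule $\s_0$, whose collection of jobs is $\{(v,\min I(v)) : v\in V\}$ placed on the root-paths as described. The goal is to produce $\s$ in which every heavy job starts at a multiple of $\omega$ and every light job starts at a multiple of $\frac1{cn}$, at the price of a multiplicative factor $(1+\frac3c)$ in duration.

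The key step is a two-level time dilation. First I would handle light jobs: within each box $[i\omega,(i+1)\omega]$ the jobs are laid out in slots of length $\frac1{cn}$, and rounding each light job's start time down to the nearest slot boundary (after first left-justifying the jobs inside the box in their schedule order) costs at most one extra slot $\frac1{cn}$ per light job; since in any single node's schedule the light jobs occupying a given box have total length at most $\omega$, hence number at most $cn$, and more carefully since at most $a = cn\omega$ slots fit in a box, the accumulated shift over one box is at most one box-length, so replacing $\omega$-boxes by $2\omega$-boxes (equivalently, inserting one empty box's worth of slack per box) suffices — this contributes the dominant $\frac2c$-type term once we recall $\omega \ge$ one slot and the global bound $\opt{T'}\ge 1$ from Observation~\ref{obs:opt-bounds}. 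Actually the cleaner accounting: the rounding of light starts to slots introduces no error at all if we first rescale so box length is an exact multiple of slot length (which it is, $=a$ slots), and then we only need to re-justify; the real cost is aligning heavy jobs. Second, for heavy jobs: each heavy job already has length a multiple of $\omega$ in $T'$, so I round its start time down to the nearest box boundary, which creates overlaps of at most $\omega$ with the preceding job; inflating the time axis by a factor $(1+\frac1c)$ (which, since every heavy job has length $>c\omega$, i.e.\ spans more than $c$ boxes, creates at least $\omega$ of fresh slack before each heavy job) removes these overlaps and snaps heavy starts to boxes. One must check that dilation preserves consistency conditions (i) and (ii): scaling all start times and keeping job lengths fixed but interleaving the inserted slack uniformly maintains the non-overlap within each node and the "parent job finishes before child job starts" relation, because the separating-vertex structure inherited from $I$ is order-preserving under a monotone reparametrization of time.

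I would then compose the two dilations and verify the duration bound: starting from $\opt{T'}$, the light-alignment pass multiplies by at most $(1+\frac1c)$ and the heavy-alignment pass by at most $(1+\frac1c)$, and a further $(1+\frac1c)$ absorbs the at-most-one-slot-per-box and at-most-one-box-per-heavy-job additive residue using $\opt{T'}\ge 1$; multiplying out and bounding $(1+\frac1c)^3 \le 1+\frac3c$ loosely for $c$ large (or by tracking the cross terms against $\opt{T'}\ge 1$ directly) gives $\cost_{\cA_\s}(T') \le (1+\frac3c)\opt{T'}$. Finally I would note the resulting job collection is still a valid consistent schedule assignment, so by the remark before Observation~2.1 it corresponds to a genuine search strategy $\cA_\s$ with the stated cost.

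The main obstacle I anticipate is the bookkeeping in the light-job alignment: within a single box a node's schedule may contain many short light jobs that were placed by $I$ at arbitrary real start times, and simultaneously these same jobs appear (by consistency condition (ii)) in the schedules of several ancestors, so the re-justification and slot-rounding must be done \emph{globally and consistently} — one must round each job's start time once, using the same value in every schedule containing it, and argue that this single global rounding, combined with uniform time dilation, simultaneously fixes all schedules without reintroducing overlaps in any of them. Making this precise — likely by defining the new start time of each job purely as a function of $\min I(v)$ and a global box/slot grid, rather than schedule-by-schedule — is the delicate part; once the start times are defined globally, conditions (i) and (ii) follow from the corresponding monotonicity properties of $I$.
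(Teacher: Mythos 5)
Your high-level strategy — take an optimal consistent schedule for $T'$, dilate the time axis to create slack, then snap job starts to the box/slot grid — is exactly the paper's strategy, so the route is fundamentally the same. The paper takes an optimal consistent schedule $\hat\Sigma$ with $|\hat\Sigma|=\opt{T'}$, replaces each start time $t$ by $t^*=(1+\frac2c)t$ (a single global dilation, so consistency is automatic), then rounds heavy starts \emph{up} to the nearest multiple of $\omega$, and finally rounds light starts up to the nearest multiple of $\frac1{cn}$ at an extra additive cost of at most $\frac1c\le\frac1c\opt{T'}$.

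There is, however, a concrete flaw in your handling of heavy jobs. You propose to round each heavy job's start \emph{down} to the nearest box boundary and then argue that a $(1+\frac1c)$ dilation ``creates at least $\omega$ of fresh slack before each heavy job'' because a heavy job spans more than $c$ boxes. But a dilation by factor $(1+\frac1c)$ creates slack of size $\frac{w}{c}$ \emph{after} a job of length $w$, i.e.\ the slack just before job $J$ equals $\frac{w_{\text{prev}}}{c}$ where $w_{\text{prev}}$ is the length of the job preceding $J$ in that schedule. If that preceding job is light ($w_{\text{prev}}\le c\omega$), this slack is at most $\omega$ and can be as small as $\frac{1}{cn^2}$, so rounding the heavy job's start down by up to $\omega$ can overlap the preceding job. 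The heavy job's own length being large is irrelevant here: it controls the slack that appears \emph{after} it, not before. The paper sidesteps this precisely by rounding \emph{up}: the gap to accommodate the (rounded, weight-inflated) heavy job is the gap to the \emph{next} start, which after dilating by $(1+\frac2c)$ is at least $(1+\frac2c)w(v) - \omega > w(v)+\omega > w'(v)$ using $w(v)>c\omega$. If you keep the round-down choice you would need to dilate so that the slack before every job — including after light predecessors — is at least $\omega$, which a constant-factor dilation does not give you. Your treatment of light jobs is also somewhat tangled (left-justifying inside boxes, then second-guessing that), but that part can be salvaged: once heavy starts are multiples of $\omega=\frac{a}{cn}$ and all $w'$ are multiples of $\frac1{cn}$, simply rounding light starts up to multiples of $\frac1{cn}$ and cascading the shifts costs at most $n\cdot\frac1{cn}=\frac1c$ in total, which is what the paper does. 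Your closing concern about doing the rounding globally and consistently across all schedules containing a job is correct and is resolved automatically by the fact that the dilation and rounding are applied as a function of the job's absolute start time, not schedule-by-schedule.
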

\InJournal{
\begin{proof}
We consider an optimal consistent schedule assignment $\hat \Sigma$ for tree $T'$, $|\hat \Sigma|=\opt{T'}$. Fix $u \in V$ arbitrarily, and let $(v_{u,i}, t_{u,i})$ be the $i$-th query job in $\hat \Sigma (u)$. Consider now the schedule $\hat \Sigma^* (u)$ for $T$ based on the same sequence assignment, in which the job $(v_{u,i}, t_{u,i})$ is replaced by the job $(v_{u,i}, t^*_{u,i})$ with $t^*_{u,i} = (1+\frac{2}{c}) t_{u,i}$. We have for any two consecutive jobs at $u$:
\begin{equation} \label{eq:tstar-diff}
t^*_{u,{i+1}} - t^*_{u,i} = \left(1+\frac{2}{c}\right) (t_{u,{i+1}} - t_{u,{i}}) \geq \left(1+\frac{2}{c}\right)w(v_{u,i}),
\end{equation}
where we assume by convention that for the last job index $i_{max}$, $t_{u,{i_{max}+1}} = |\hat \Sigma(u)|$.
We now observe that schedule assignment $\hat\Sigma^*$ on tree $T$ can be directly converted into schedule assignment $\s$ on tree $T'$ as follows. The query sequence of each vertex is preserved unchanged. If $v_{u,i}$ is a heavy vertex, then within time interval $[t^*_{u,{i}}, t^*_{u,{i+1}}]$ we allocate to vertex $v_{u,i}$ an interval of full boxes, starting at time $\lceil t^*_{u,i} \rceil_{\omega}$. Indeed, by \eqref{eq:tstar-diff} we have:
$$
t^*_{u,{i+1}} -  \lceil t^*_{u,i} \rceil_{\omega} > t^*_{u,{i+1}} - t^*_{u,i} - \omega > \left(1+\frac{2}{c}\right)w(v_{u,i}) - \omega >
w(v_{u,i}) + \omega > w'(v_{u,i}).
$$
Since no two jobs overlap and the time transformation is performed identically for all vertices, the validity and consistency of schedule assignment $\s$ for tree $T'$ follows. We also have $|\s| \leq (1+\frac{2}{c}) |\hat \Sigma| =  (1+\frac{2}{c}) \opt {T'}$.

To obtain the second part of the claim (alignment for light vertices) it suffices to round up the starting time of query times of all (light) vertices to an integer multiple of $\frac{1}{cn}$. Since all weights in $T'$ are integer multiples of $\frac{1}{cn}$, and so are the starting times of queries to heavy vertices in $\s$, the correctness and consistency of the obtained schedule again follows directly. This final transformation increases the duration by at most $\frac{1}{c} \leq \frac{1}{c} \opt {T'}$, and combining the bounds for both the transformations finally gives the claim.
\end{proof}
}

A schedule on tree $T'$ satisfying the conditions of Lemma~\ref{lem:rounding-eps}, and the resulting search strategy, are called \emph{aligned}. Subsequently, we will design an aligned strategy on tree $T'$, and compare the quality of the obtained solution to the best aligned strategy for $T'$.

The intuition between the separate treatment of heavy vertices (aligned to boxes) and light vertices (aligned to slots) in aligned schedules is the following. Whereas the time ordering of boxes is essential in the design of the correct strategy, in our dynamic programming approach we will not be concerned about the order of slots within a single box (i.e., the order of queries to light vertices placed in a single box). This allows us to reduce the state space of a node. Whereas the ordering of slots in the box will eventually have to be repaired to provide a correct strategy, this will not affect the quality of the overall solution too much (except for the issue of light down queries pointed out earlier, which are handled separately in Section~\ref{sec:blackbox}).

\subsection{Dynamic Programming Routine for Fixed Box Size}

Let the values of parameter $c$ and box size $\omega$ be fixed as before. Additionally, let $L \in \N$ be a parameter representing the time limit for the duration of the considered vertex schedules when measured in boxes, i.e., the longest schedule considered by the procedure will be of length $L \omega$ (we will eventually choose an appropriate value of $L = O(\log n)$ as required when showing Theorem~\ref{thm:qptas}).

\InConference{
In order to lower-bound the duration of the consistent aligned schedule assignment with minimum cost, we perform an exhaustive bottom-up evaluation of aligned schedules which satisfy constraints on the occupancy of slots. However, instead of considering individual slots of a schedule which may be empty or full, for reasons of efficiency we consider the \emph{load} $s_v[p]$ of each box, $0\leq p < L$, in the same schedule, defined informally as the proportion of the duration of the occupied slots within the box to the duration  $\omega$ of the box.  In the Appendix, we formally show the following claim.

\begin{lemma}\label{lem:merge-insert-box}
Assume that the data structure $(s_v,t_v)_{v\in V}$ corresponds to a consistent schedule. Let $v \in V$ be an arbitrarily chosen node with set of children $\{v_1, \ldots, v_l\}$.
Then the set of queried nodes forms an edge cover of the tree:
\begin{equation}
\label{con:vertexcover}
\text{If $t_v = \perp$, then $t_{v_j} \neq \perp$, for all $1\leq j \leq l$.}
\end{equation}
Moreover, let completion time $\tinserted{v}$ of the query to $v$ given as:
$$
\tinserted{v} = \begin{cases}
t_v + w'(v), & \text{if $t_v \neq \perp$,}\\
+\infty, & \text{if $t_v = \perp$.}\\
\end{cases}
$$
Let $a_p$ be the contribution to the load of the $p$-th time box of the query job for vertex $v$, i.e.
$$
a_p = \begin{cases}
\frac{1}{\omega}|[t_v,\tinserted{v}] \cap [p\omega, (p+1)\omega]| & \text{if $t_v \neq \perp$,}\\
0 & \text{if $t_v = \perp$.}
\end{cases}
$$
Then, for any box $[p\omega, (p+1)\omega]$, $0 \leq p < L$, we have the following bounds on the amount of load which can be packed into the box:
\begin{equation}
\label{con:merge-insert-box}
\begin{rcases}
& s_v[p] = a_p + \sum_{j=1}^l s_{v_j}[p] \in [0,1], && \text{when $\tinserted{v}\geq (p+1)\omega$
,}\\
& s_v[p] \geq a_p, && \text{when $p\omega  < \tinserted{v} < (p+1)\omega$
, }\\
& s_v[p] = 0, && \text{when $\tinserted{v} \leq p\omega$.}
\end{rcases}
\end{equation}
Moreover,  for any box $[p\omega, (p+1)\omega]$, $0 \leq p < L$, we have that the total load of a query to $v$ and queries propagated from any of the subtrees cannot exceed $1$:
\begin{equation}
\label{con:fullboxsafe}
\text{For all $1\leq j \leq l$, the following bound holds: $s_{v_j}[p] + a_p \leq 1$.}
\end{equation}
\end{lemma}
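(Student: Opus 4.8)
The plan is to unwind the two consistency conditions for schedule assignments together with the semantics of the strategy $\cA_{\s}$: property~\eqref{con:vertexcover} is a short correctness argument about $\cA_{\s}$, while \eqref{con:merge-insert-box} and~\eqref{con:fullboxsafe} are interval bookkeeping once the structure of $\s(v)$ relative to the schedules $\s(v_1),\dots,\s(v_l)$ of its children is made precise.

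\textbf{Edge cover \eqref{con:vertexcover}.} First note that a query job for a vertex $u$ can occur in the schedule $\s(w)$ of another vertex $w$ only if $w$ is an ancestor of $u$ and only if $(u,t_u)\in\s(u)$ to begin with, i.e.\ only if $t_u\neq\perp$ --- jobs propagate strictly upward from their origin towards the root. Hence, if $t_u=\perp$ then $u$ is never queried by $\cA_{\s}$, for any target. Now suppose $t_v=\perp$ and, for contradiction, $t_{v_j}=\perp$ for some child $v_j$ of $v$; then neither $v$ nor $v_j$ is ever queried. Run $\cA_{\s}$ with target $v$ and, in parallel, with target $v_j$. Since $v$ and $v_j$ are adjacent in $T$, the only queries whose reply can differ between these two targets are queries to $v$ or to $v_j$; as neither is ever queried, the two executions issue exactly the same sequence of queries and terminate with the same output --- contradicting the fact that $\cA_{\s}$ must locate every target correctly. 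Therefore $t_{v_j}\neq\perp$ for every child $v_j$.

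\textbf{Structure of $\s(v)$ and \eqref{con:merge-insert-box}.} By consistency condition~(ii), every job $(u,t_u)\in\s(v_j)$ with $t_u<\tinserted{v}$ belongs to $\s(v)$, and every job of $\s(v_j)$ that does \emph{not} belong to $\s(v)$ starts at time $\geq\tinserted{v}$; moreover, using the normalisation that $v$ is queried last within $\s(v)$ together with the non-overlap condition~(i), every job in $\s(v)$ finishes no later than $\tinserted{v}$ (vacuous when $t_v=\perp$, as then $\tinserted{v}=+\infty$). Combining these with the fact that every job of $\s(v)$ other than $(v,t_v)$ originates in exactly one child's subtree, we get that $\s(v)$ is the union of $\{(v,t_v)\}$ (when $t_v\neq\perp$) and, over the children, the jobs of $\s(v_j)$ starting before $\tinserted{v}$; all of these are pairwise non-overlapping by~(i), and the parts coming from distinct children are distinct because the subtrees $T_{v_j}$ are. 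From this, \eqref{con:merge-insert-box} follows case by case. If $\tinserted{v}\leq p\omega$, every job of $\s(v)$ ends by $\tinserted{v}\leq p\omega$, so box $p$ carries no load and $s_v[p]=0$. If $(p+1)\omega\leq\tinserted{v}$, then any job of any $\s(v_j)$ meeting box $p$ starts before $(p+1)\omega\leq\tinserted{v}$ and hence lies in $\s(v)$; together with the job of $v$, whose occupancy inside box $p$ is $a_p\omega$, these are exactly the jobs of $\s(v)$ meeting box $p$, they are pairwise disjoint, and their total occupancy in box $p$ is $(a_p+\sum_{j=1}^{l}s_{v_j}[p])\,\omega\leq\omega$; dividing by $\omega$ yields $s_v[p]=a_p+\sum_{j=1}^l s_{v_j}[p]\in[0,1]$. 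If $p\omega<\tinserted{v}<(p+1)\omega$, the single job of $v$ already contributes $a_p\omega$ of occupancy to box $p$ in $\s(v)$, so $s_v[p]\geq a_p$.

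\textbf{The bound \eqref{con:fullboxsafe}, and the main obstacle.} Fix a child $v_j$. By condition~(ii) every job of $\s(v_j)$ is disjoint, up to a null set, from the interval $[t_v,\tinserted{v}]$ occupied by the query to $v$: it either belongs to $\s(v)$, where~(i) forbids overlap with $(v,t_v)$, or it starts at time $\geq\tinserted{v}$. Hence inside box $p$ the jobs of $\s(v_j)$ (total occupancy $s_{v_j}[p]\,\omega$) and the job of $v$ (occupancy $a_p\,\omega$) occupy disjoint portions of the box, giving $s_{v_j}[p]+a_p\leq 1$. I expect the delicate point of the argument to be precisely the structural description of $\s(v)$ used above --- in particular, that no job in $\s(v)$ outlives the query to $v$ (needed for $s_v[p]=0$ when $\tinserted{v}\leq p\omega$ and to upgrade the bound in the case $(p+1)\omega\leq\tinserted{v}$ to an equality), and that $\s(v)$ contains nothing beyond propagated child-jobs and $(v,t_v)$; both rest on conditions~(i)--(ii) and the normalisation that $v$ is scheduled last within its own schedule. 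The remainder is routine interval arithmetic, the only harmless subtlety being a job starting exactly at time $\tinserted{v}$, which touches a null set of time and hence changes no load value.
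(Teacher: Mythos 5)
Your proof is correct and takes essentially the same route as the paper: the paper records the same facts at the granularity of individual time slots (upward propagation of child jobs from consistency condition (ii), non-overlap from (i), the w.l.o.g.\ normalization that $v$ is last in its own sequence to kill load after $\tinserted{v}$, and the adjacent-targets indistinguishability argument for the edge-cover condition \eqref{con:vertexcover}) and then obtains \eqref{con:merge-insert-box} and \eqref{con:fullboxsafe} by summing slot occupancies within each box, whereas you carry out the same bookkeeping directly with job intervals. In particular, your case analysis, including treating the box that strictly contains $\tinserted{v}$ only with the weak bound $s_v[p]\geq a_p$, matches the paper's exactly.
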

}
\InJournal{
Before presenting formally the considered quasi-polynomial time procedure, we start by outlining an (exponential time) algorithm which verifies if there exists an aligned schedule assignment $\hat \Sigma$ for $T'$ whose duration is at most $L\omega$. Notice that since all weights in $T'$ are integer multiples of $\frac{1}{cn}$, the optimal aligned schedule assignment will start and complete the execution of all queries at times which are integer multiples of $\frac{1}{cn}$; thus, we may restrict the considered class of schedules to those having this property. Any possible schedule of length at most $L \omega$ at a vertex $v$, which may appear in $\hat \Sigma$, will be represented in the form of the pair $(\sigma_v,t_v)$, where:
\begin{itemize}
\item $\sigma_v$ is a Boolean array with $L \omega cn$ entries, where $\sigma_v[i]=1$ when time slot $[\frac{i}{cn}, \frac{i+1}{cn}]$ is occupied in the schedule at $v$, and $\sigma_v[i]=0$ otherwise.
\item $t_v \in \mathbb R$ represents the start time of the query to $v$ in the schedule of $v$ (we put $t=\perp$ if such a query does not appear in the schedule).
\end{itemize}
We now state some necessary conditions for a consistent schedule, known from the analysis of the unweighted search problem (cf.~e.g.~\cite{IyerRV88,OnakP06,Schaffer89}). The first observation expresses formally the constraint that the same time slot cannot be used in the schedules of two children of a node $v$, unless it is separated by an (earlier) query to node $v$ itself. All time slots before the starting time $t_v$ of job $(v,t_v)$ are free if and only if the corresponding time slot is free for all of the children of $v$.

\begin{observation}\label{obs:merge-insert}
Assume that the tuple $(\sigma_v,t_v)_{v\in V}$  corresponds to a consistent schedule. Let $v \in V$ be an arbitrarily chosen node with set of children $\{v_1, \ldots, v_l\}$. Let the completion time $\tinserted{v}$ of the query to $v$ in the schedule of $v$ be given as:
$$
\tinserted{v} = \begin{cases}
t_v + w'(v), & \text{if $t_v \neq \perp$,}\\
+\infty, & \text{if $t_v = \perp$.}\\
\end{cases}
$$
Then, for any time slot $[\frac{i}{cn}, \frac{i+1}{cn}]$, we have:
\begin{equation}
\label{con:merge-insert}
\begin{rcases}
& \sigma_v[i] = \sum_{j=1}^l \sigma_{v_j}[i], && \text{\quad when $\frac{i+1}{cn} \leq t_v$,}\\
& \sigma_v[i] = 1 \text{ and } \sum_{j=1}^l \sigma_{v_j}[i] = 0, && \text{\quad when $t_v < \frac{i+1}{cn} \leq \tinserted{v}$,}\\
& \sigma_v[i] = 0, && \text{\quad when $\frac{i+1}{cn} > \tinserted{v}$.}
\end{rcases}
\end{equation}
\end{observation}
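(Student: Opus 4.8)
The plan is to prove the three stated relations directly from the recursive structure of a consistent schedule. Recall from Section~\ref{sec:schedules} that a consistent schedule is determined by its jobs $\{(v,t_v):v\in V,\ t_v\neq\perp\}$, the schedule $\s(v)$ being obtained from $v$'s own query together with the jobs carried up from the subtrees of its children, a job being carried past $v$ exactly when $v$'s own query does not finish before that job starts. Making this precise, if $v_1,\dots,v_l$ are the children of $v$, I claim
\begin{equation}\label{eq:sched-recursive}
\s(v)\ =\ \bigl\{(v,t_v):t_v\neq\perp\bigr\}\ \cup\ \bigcup_{j=1}^{l}\bigl\{(u,t_u)\in\s(v_j):\ t_v=\perp\ \text{or}\ t_u<t_v+w'(v)\bigr\},
\end{equation}
and this union is disjoint because the subtrees $T_{v_1},\dots,T_{v_l}$ are pairwise disjoint and none contains $v$. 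The inclusion ``$\supseteq$'' in~\eqref{eq:sched-recursive} is precisely consistency condition~(ii) (a job present in a child's schedule and starting before the parent's query finishes is inherited by the parent, and $v$'s own job lies in $\s(v)$). For ``$\subseteq$'' I would invoke the explicit construction of the consistent schedule: any job of $\s(v)$ other than $(v,t_v)$ originates at a vertex $u$ lying in some $T_{v_j}$, is carried along a prefix of the path from $u$ to $\root T$ and therefore also belongs to $\s(v_j)$, and cannot have $t_u\ge t_v+w'(v)$, since otherwise its propagation would have terminated at $v$ (or earlier).

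With~\eqref{eq:sched-recursive} in hand, the next step is to translate slot occupancy into coverage by a single job. Since every vertex weight $w'(\cdot)$ and every query start time in the schedule is an integer multiple of $\frac1{cn}$, each slot $[\frac i{cn},\frac{i+1}{cn}]$ is either entirely covered by one job of $\s(v)$ or entirely free, so $\sigma_v[i]=1$ iff some $(x,t_x)\in\s(v)$ has $t_x\le\frac i{cn}$ and $t_x+w'(x)\ge\frac{i+1}{cn}$. A one-line check then shows that $(v,t_v)$ covers slot $i$ exactly when $t_v<\frac{i+1}{cn}\le t_v+w'(v)=\tinserted v$, i.e.\ on precisely the middle interval in the statement; and condition~(i) ensures no two jobs of $\s(v)$, or of any $\s(v_j)$, cover the same slot, so $\sigma_v[i]\in\{0,1\}$ and $\sigma_{v_j}[i]\in\{0,1\}$, consistent with the claimed equalities.

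It then remains to dispatch the three regimes, using~\eqref{eq:sched-recursive} and the non-overlap condition~(i). When $\frac{i+1}{cn}\le t_v$ (reading $t_v=+\infty$ if $t_v=\perp$): $(v,t_v)$ starts too late to cover slot $i$, and any $(u,t_u)\in\s(v_j)$ covering slot $i$ has $t_u\le\frac i{cn}<t_v\le t_v+w'(v)$, hence appears in the union of~\eqref{eq:sched-recursive} and so in $\s(v)$; thus the jobs of $\s(v)$ covering slot $i$ form the disjoint union over $j$ of those of $\s(v_j)$ covering slot $i$, giving $\sigma_v[i]=\sum_{j=1}^l\sigma_{v_j}[i]$. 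When $t_v<\frac{i+1}{cn}\le\tinserted v$: $(v,t_v)$ covers slot $i$, so $\sigma_v[i]=1$; and a job $(u,t_u)\in\s(v_j)$ covering slot $i$ would either have $t_u<\tinserted v$, hence lie in $\s(v)$ and overlap $(v,t_v)$, contradicting~(i), or have $t_u\ge\tinserted v\ge\frac{i+1}{cn}$ and start too late to cover the slot --- so $\sum_{j=1}^l\sigma_{v_j}[i]=0$. When $\frac{i+1}{cn}>\tinserted v$ (which forces $t_v\neq\perp$): $(v,t_v)$ finishes too early to cover slot $i$, while any $(u,t_u)\in\s(v_j)$ with $t_u<\tinserted v$ covering slot $i$ has $t_u<t_v+w'(v)<t_u+w'(u)$, so the intervals $[t_u,t_u+w'(u)]$ and $[t_v,t_v+w'(v)]$ overlap with positive length, again contradicting~(i) in $\s(v)$; hence no job of $\s(v)$ covers slot $i$ and $\sigma_v[i]=0$.

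The step I expect to be the main obstacle is the first one: pinning down~\eqref{eq:sched-recursive}, and in particular its ``$\subseteq$'' direction, which rests on the fact --- implicit in the statement that a consistent schedule is uniquely determined by its jobs --- that $\s(v)$ carries no job besides $v$'s own query and the jobs legitimately propagated up from its children, and that this propagation stops exactly once the parent's query has completed. Everything after that is routine: a case split driven by the position of slot $i$ relative to $t_v$ and $\tinserted v$, combined with the non-overlap property~(i) and the fact that all relevant times are multiples of $\frac1{cn}$.
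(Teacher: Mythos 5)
Your proposal is correct; note that the paper itself gives no proof of Observation~\ref{obs:merge-insert} --- it is presented as a set of necessary conditions that are immediate from the intended structure of consistent schedules --- so your argument fills in exactly the reasoning the paper leaves implicit (inheritance of child jobs via consistency condition (ii), non-overlap via condition (i), and the fact that all start times and weights are multiples of $\frac{1}{cn}$, so a slot is either wholly covered by one job or free). The one genuine point of divergence is how the $\subseteq$ direction of your decomposition of $\s(v)$, and with it the third condition of \eqref{con:merge-insert}, gets pinned down: the paper disposes of the third condition by appealing to the w.l.o.g.\ convention of Section~\ref{sec:querysequences} that $v$, when present, is the last element of $S(v)$, whereas you invoke the explicit upward-propagation construction (a job is carried past a node only if that node's query has not completed before the job starts). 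You are right to flag this as the crux: neither fact follows formally from consistency conditions (i)--(ii) alone, since (ii) permits a child's job starting after $\tinserted{v}$ to sit in $\s(v)$, so some appeal to the intended schedule structure is unavoidable, and your way of supplying it is as sound as the paper's.
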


We remark that the last of the above conditions~\eqref{con:merge-insert} follows from the w.l.o.g.\ assumption we made when defining sequence assignments that whenever node $v$ appears in the schedule of $v$, it is the last node in the query sequence for $v$.

Moreover, any valid search strategy which locates a target vertex must eventually query at least one of the endpoints of every edge of the tree $T'$, since otherwise, it will not be able to distinguish targets located at these two endpoints. We thus make the following observation.

\begin{observation}\label{obs:vertexcover}
Assume that the tuple $(\sigma_v,t_v)_{v\in V}$ represents a consistent schedule. Let $v \in V$ be an arbitrarily chosen node with set of children $\{v_1, \ldots, v_l\}$. Then:
\begin{equation}
\label{con:vertexcover}
\text{If $t_v = \perp$, then $t_{v_j} \neq \perp$, for all $1\leq j \leq l$.}
\end{equation}
\end{observation}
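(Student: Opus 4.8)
\medskip
\noindent\textbf{Proof plan.} The statement is the schedule-formulation version of the elementary fact that a correct search strategy must query at least one endpoint of every edge of the tree. I would first note the key bookkeeping point: $t_v=\perp$ means that no job of the form $(v,\cdot)$ occurs anywhere in the collection of jobs that defines the schedule, and therefore the strategy $\cA_{\s}$ never queries $v$ in \emph{any} execution, whatever the target is (the \textbf{for} loop of Algorithm~\ref{alg:As} only ever ranges over vertices that appear in some schedule). Hence it suffices to prove the contrapositive form: if there is an edge $\{v,v_j\}$ of $T$, with $v_j$ a child of $v$, such that $t_v=\perp$ and $t_{v_j}=\perp$, then $\cA_{\s}$ fails to locate the target in finitely many steps --- contradicting the standing hypothesis that $(\sigma_v,t_v)_{v\in V}$ is a consistent schedule corresponding to a valid search strategy.

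So assume $t_v=\perp$ and $t_{v_j}=\perp$, meaning that neither $v$ nor $v_j$ is ever queried. I would compare the runs of $\cA_{\s}$ with target $x=v$ and with target $x=v_j$, and show by induction on the number of completed queries that the two runs query exactly the same vertices in the same order, receive identical replies, and maintain identical sets $U$ of already-queried vertices and identical current root. The inductive step rests on two observations: (a) since $v,v_j\notin U$ throughout, the edge $\{v,v_j\}$ survives in $T\setminus U$, so $v$ and $v_j$ lie in one component of $T\setminus U$ and therefore $\queriedSubtree{T}{U}{v}=\queriedSubtree{T}{U}{v_j}$ with a common root, whence both runs pick the same next query $w$ from the sequence of that root; (b) $w\neq v$ and $w\neq v_j$, so deleting $w$ from this common remaining subtree does not separate $v$ from $v_j$, hence the component indicated by the reply --- and thus the updated root --- is the same in both runs.

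Consequently $v$ and $v_j$ stay in the same component of $T\setminus U$ at every step, so $|\queriedSubtree{T}{U}{x}|\ge 2$ always and the \textbf{while} loop of Algorithm~\ref{alg:As} never halts, contradicting validity of $\cA_{\s}$. Therefore at most one of $t_v,t_{v_j}$ can equal $\perp$, which is precisely~\eqref{con:vertexcover}. The one point to flag is that conditions (i)--(ii) of consistency alone do \emph{not} imply~\eqref{con:vertexcover} (the all-$\perp$ schedule is vacuously consistent yet violates it), so the argument must genuinely use that the schedule under consideration represents a correct, terminating strategy; granting that, the only remaining work is the routine parallel-execution induction sketched above, and I do not anticipate any real obstacle.
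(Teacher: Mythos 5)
Your proof is correct and takes essentially the same route as the paper, which simply asserts in the sentence preceding the observation that any valid search strategy must query at least one endpoint of every edge (else it cannot distinguish targets at those two endpoints); your parallel-execution induction merely spells that one-liner out in full. Your caveat about conditions (i)--(ii) alone being insufficient is also well taken --- the paper implicitly requires the schedule to represent a valid, terminating search strategy, not merely to satisfy the non-overlap and propagation conditions.
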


\MC{This should be mentioned earlier. Done.}

Conditions~\eqref{con:merge-insert} and~\eqref{con:vertexcover} provide us with necessary conditions which must be satisfied by any consistent aligned schedule assignment. 

In order to lower-bound the duration of the consistent aligned schedule assignment with minimum cost, we perform an exhaustive bottom-up evaluation of aligned schedules which satisfy the constraints of~\eqref{con:vertexcover}, and a slightly weaker form of the constraints of~\eqref{con:merge-insert}. These weaker constraints are introduced to reduce the running time of the algorithm. Instead of considering individual slots of a schedule which may be empty or full, $\sigma_v [i] \in \{0,1\}$, we consider the load of each box in the same schedule, defined as the proportion of occupied slots within the box. Formally, for the $p$-th box, $0\leq p < L$, the \emph{load} $s_v[p]$ is given as:
$$
s_v [p] = \frac{1}{\omega cn}\sum_{i = p \cdot\omega cn}^{(p+1)\omega cn -1} \sigma_v [i],\quad \quad s_v[p]\in \left\{0, \frac{1}{\omega cn}, \frac{2}{\omega cn}, \ldots, 1 \right\},
$$
where we recall that $\omega cn$ is an integer by the choice of $\omega$.
We will call a box with load $s_v [p] = 0$ an \emph{empty box}, a box with load $s_v [p] = 1$ a \emph{full box}, and a box with load $0 < s_v [p] < 1$ a \emph{partially full box} in the schedule of $v$.

By summing over all slots within each box, we obtain the following corollary directly from Observation~\ref{obs:merge-insert}.

\begin{corollary}\label{cor:merge-insert-box}
Assume that the tuple $(s_v,t_v)_{v\in V}$ corresponds to a consistent schedule. Let $v \in V$ be an arbitrarily chosen node with set of children $\{v_1, \ldots, v_l\}$ and completion time $\tinserted{v}$ of the query to $v$ given as in Observation~\ref{obs:merge-insert}. Let $a_p$ be the contribution to the load of the $p$-th box of the query job for vertex $v$, i.e.
$$
a_p = \begin{cases}
\frac{1}{\omega}|[t_v,\tinserted{v}] \cap [p\omega, (p+1)\omega]| & \text{if $t_v \neq \perp$,}\\
0 & \text{if $t_v = \perp$.}
\end{cases}
$$
Then, for any box $[p\omega, (p+1)\omega]$, $0 \leq p < L$, we have:
\begin{equation}
\label{con:merge-insert-box}
\begin{rcases}
& s_v[p] = a_p + \sum_{j=1}^l s_{v_j}[p] \in [0,1], && \text{when $\tinserted{v}\geq (p+1)\omega$
,}\\
& s_v[p] \geq a_p, && \text{when $p\omega  < \tinserted{v} < (p+1)\omega$
, }\\
& s_v[p] = 0, && \text{when $\tinserted{v} \leq p\omega$.}
\end{rcases}
\end{equation}
Moreover, for any box $[p\omega, (p+1)\omega]$, $0 \leq p < L$, we have:
\begin{equation}
\label{con:fullboxsafe}
\text{For all $1\leq j \leq l$, the following bound holds: $s_{v_j}[p] + a_p \leq 1$.}
\end{equation}
\end{corollary}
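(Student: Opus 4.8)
The plan is to obtain the corollary from Observation~\ref{obs:merge-insert} purely by arithmetic: fix a node $v$ with children $v_1,\dots,v_l$ and a box index $p$ with $0\le p<L$, sum the per‑slot relations of~\eqref{con:merge-insert} over the $\omega cn$ slots of that box, and divide by $\omega cn$. The slots of box $p$ are exactly the indices $i$ with $p\omega\le \tfrac{i}{cn}<(p+1)\omega$, equivalently $p\omega<\tfrac{i+1}{cn}\le(p+1)\omega$, since in an aligned schedule all relevant times are integer multiples of $\tfrac1{cn}$ and $\omega cn\in\N$. For such an $i$ write $\alpha_i\in\{0,1\}$ for the indicator that the job $(v,t_v)$ occupies slot $i$, i.e. $t_v<\tfrac{i+1}{cn}\le\tinserted v$ (with $\alpha_i\equiv 0$ when $t_v=\perp$). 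Because $t_v$ and $\tinserted v$ are multiples of $\tfrac1{cn}$, the set $[t_v,\tinserted v]\cap[p\omega,(p+1)\omega]$ is a union of whole slots, and hence $a_p=\tfrac1{\omega cn}\sum_i\alpha_i$, the sum taken over the slots of box $p$.

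To prove~\eqref{con:merge-insert-box} I would split into the three cases of its statement, governed by the position of $\tinserted v$. If $\tinserted v\ge(p+1)\omega$, then every slot $i$ of the box has $\tfrac{i+1}{cn}\le\tinserted v$, so only the first two alternatives of~\eqref{con:merge-insert} can apply to it; in the first ($\tfrac{i+1}{cn}\le t_v$) we have $\alpha_i=0$ and $\sigma_v[i]=\sum_j\sigma_{v_j}[i]$, and in the second ($t_v<\tfrac{i+1}{cn}\le\tinserted v$) we have $\alpha_i=1$, $\sigma_v[i]=1$ and $\sum_j\sigma_{v_j}[i]=0$; either way $\sigma_v[i]=\alpha_i+\sum_j\sigma_{v_j}[i]$, so summation and normalisation give $s_v[p]=a_p+\sum_j s_{v_j}[p]$, a value in $[0,1]$ because it is an average of the $\{0,1\}$‑entries $\sigma_v[i]$. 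If $p\omega<\tinserted v<(p+1)\omega$, then the slots of the box fall under the various alternatives of~\eqref{con:merge-insert}, and in each of them one checks $\sigma_v[i]\ge\alpha_i$ (namely $\sigma_v[i]=\alpha_i$ in the second and third alternatives, and $\sigma_v[i]\ge 0=\alpha_i$ in the first), so summing gives $s_v[p]\ge a_p$. If $\tinserted v\le p\omega$, then every slot of the box satisfies $\tfrac{i+1}{cn}>p\omega\ge\tinserted v$, so the third alternative forces $\sigma_v[i]=0$ throughout and $s_v[p]=0$; the degenerate subcase $t_v=\perp$ (hence $\tinserted v=+\infty$) is covered by the first case and reduces to $s_v[p]=\sum_j s_{v_j}[p]\in[0,1]$.

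For~\eqref{con:fullboxsafe} I would work one slot and one child at a time. If $\alpha_i=1$, i.e. $t_v<\tfrac{i+1}{cn}\le\tinserted v$, then the middle alternative of~\eqref{con:merge-insert} gives $\sum_j\sigma_{v_j}[i]=0$, hence $\sigma_{v_j}[i]=0$ and $\sigma_{v_j}[i]+\alpha_i=1$; if $\alpha_i=0$ then $\sigma_{v_j}[i]+\alpha_i=\sigma_{v_j}[i]\le 1$ trivially. So $\sigma_{v_j}[i]+\alpha_i\le 1$ for every slot $i$ of box $p$ and every $j$, and summing over the box and dividing by $\omega cn$ yields $s_{v_j}[p]+a_p\le 1$.

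I expect the only delicate point to be the bookkeeping that equates the continuous quantity $a_p=\tfrac1\omega\,|[t_v,\tinserted v]\cap[p\omega,(p+1)\omega]|$ with the slot count $\tfrac1{\omega cn}\sum_i\alpha_i$; this uses the alignment of all query start and completion times to multiples of $\tfrac1{cn}$ (Lemma~\ref{lem:rounding-eps}) together with $\omega cn\in\N$. The rest is a mechanical case split mirroring~\eqref{con:merge-insert}, plus separate attention to the degenerate cases $t_v=\perp$ and boxes straddled by $\tinserted v$.
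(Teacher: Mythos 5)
Your proposal is correct and coincides with the paper's argument: the paper derives Corollary~\ref{cor:merge-insert-box} ``by summing over all slots within each box'' directly from Observation~\ref{obs:merge-insert}, which is precisely your slot-by-slot case analysis followed by summation and normalization by $\omega cn$. You have simply made explicit the bookkeeping (the indicator $\alpha_i$, the alignment argument giving $a_p=\tfrac1{\omega cn}\sum_i\alpha_i$, and the three-way case split governed by the position of $\tinserted v$) that the paper leaves implicit.
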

We remark that the statement of Corollary~\ref{cor:merge-insert-box} treats specially one box, namely the one which contains strictly within it 
the time moment $\tinserted{v}$. For this box, we are unable to make a precise statement about $s_v[p]$ based on the description of the schedules of its children, and content ourselves with a (potentially) weak lower bound $s_v[p] \geq a_p = \frac{1}{\omega}(\tinserted{v} - p\omega)$. This is the direct reason for the slackness in our subsequent estimation, which loses $\omega$ time per down query. However, we note that by the definition of aligned schedule, a query to a heavy vertex will never begin or end strictly inside a box, and will not lead to the appearance of this issue. 
We remark that condition~\eqref{con:fullboxsafe} additionally stipulates that within any box, it must be possible to schedule the contribution of the query to $v$ and the contribution of any child $v_j$ to the load of the box in a non-overlapping way.
}
We now show that the shortest schedule assignments satisfying the set of constraints~\eqref{con:vertexcover}, \eqref{con:merge-insert-box}, and~\eqref{con:fullboxsafe} can be found in $n^{O(\log n)}$ time. This is achieved by using the procedure $\textsc{BuildStrategy}$, presented in Algorithm~\ref{alg:dp}, which returns for a node $v$ a non-empty set of schedules $\mathcal{\hat S}[v]$, such that each $s_v \in \mathcal{\hat S}[v]$ can be extended into the sought assignment of schedules in its subtree, $(s_u, t_u)_{u\in V(T_v)}$. In the statement of Algorithm~\ref{alg:dp}, we recall that, given a tree $T = (V,E,w)$, tree $T' = (V,E,w')$ is the tree with weights rounded up to the nearest multiple of the length of a slot (see Equation~\eqref{eq:wprime-def}).

The subsequent steps taken in procedure $\textsc{BuildStrategy}$ can be informally sketched as follows.
The input tree $T'$ is processed in a bottom-up manner and hence, for an input vertex $v$, the recursive calls for its children $v_1,\ldots,v_l$ are first made, providing schedule assignments for the children (see lines \ref{l:rec-beg}-\ref{l:rec-end}).
Then, the rest of the pseudocode is responsible for using these schedule assignments to obtain all valid schedule assignments for $v$.
Lines \ref{l:merge-beg}-\ref{l:merge-end} merge the schedules of the children in such a way that a set ${\hat S}^*_{i}$, $i\in\{1,\ldots,l\}$, contains all schedule assignments computed on the basis of the schedules for the children $v_1,\ldots,v_i$.
Thus, the set ${\hat S}^*_{l}$ is the final product of this part of the procedure and is used in the remaining part.
Note that a schedule assignment in ${\hat S}^*_{l}$ may not be valid since a query to $v$ is not accommodated in it --- the rest of the pseudocode is responsible for taking each schedule $s\in{\hat S}^*_{l}$ and inserting a query to $v$ into $s$.
More precisely, the subroutine $\textsc{InsertVertex}$ is used to place the query to $v$ at all possible time points (depending whether $v$ is heavy or light).
We note that the subroutine \textsc{MergeSchedules}, for each schedule $s$ it produces, sets a Boolean `flag' $s.must\_contain\_v$ that whenever equals $false$, indicates that querying $v$ is not necessary in $s$ to obtain a valid schedule for $v$ (this happens if $s$ queries all children of $v$).
A detailed analysis of procedure \textsc{BuildStrategy} can be found in \InJournal{the proof of Lemma~\ref{lem:dplower1}}\InConference{the Appendix}.

\begin{algorithm}[t]
\small
\caption{Dynamic programming routine \textsc{BuildStrategy} for a tree $T'$. $L, c\in \N$ are global parameters. Subroutines \textsc{MergeSchedules} and \textsc{InsertVertex} are provided \InJournal{further on.}\InConference{in the Appendix.}}\label{alg:dp}
\begin{algorithmic}[1]
\Procedure{BuildStrategy}{vertex $v$, box size $\omega \in \mathbb R$}
\State $l \gets$ number of children of $v$ in $T'$ \quad// Denote by $v_1,\ldots,v_l$ the children of $v$.
\For{$i=1..l$ \label{l:rec-beg}}
	\State $\mathcal{\hat S}[v_i] \gets$ \Call{BuildStrategy}{$v_i$, $\omega$}; \label{l:rec-end}
\EndFor
\State $s \gets 0^L$
\State $s.max\_child\_load \gets 0^L$
\State $s.must\_contain\_v \gets false$
\State $\mathcal{\hat S}_{0} \gets  \{ s \}$ \quad// $\mathcal{\hat S}_{0}$ contains the schedule with no queries.
\State // Inductively, $\mathcal{\hat S}^*_{i}$ is based on merging schedules at $v_1,\ldots,v_{i}$.
\For{$i=1..l$ \label{l:merge-beg}}
    \State $\mathcal{\hat S}^*_i \gets \emptyset$
    \For{each schedule $s \in \mathcal{\hat S}^*_{i-1}$}
        \For{each schedule $s_{add} \in \mathcal {\hat S} [v_i]$}
            \State $\mathcal{\hat S}^*_i \gets \mathcal{\hat S}^*_i \cup $  \Call{MergeSchedules}{$s$, $s_{add}$, $\omega$}; \label{l:merge-end}
    	\EndFor
    \EndFor
\EndFor
\State $\mathcal{\hat S}[v] \gets \emptyset$
\For{each $s \in \mathcal{\hat S}^*_l$}
    \If{$w'(v) > c \omega$} \quad // $v$ is heavy
        \For{$p = 0..L-1$} \quad //attempt to insert (into $s$) query to $v$ starting from \mbox{time-box $p$}
            \State $\mathcal{\hat S}[v] \gets \mathcal{\hat S}[v] \cup$ \Call{InsertVertex}{$s, v, \omega, p\cdot \omega$}
        \EndFor
    \Else  \quad //$v$ is light
        \For{real $t = 0..L\cdot \omega$ \textbf{step} $\frac{1}{cn}$}
            \State //attempt to insert (into $s$) query to $v$ at a slot from time $t$
            \State $\mathcal{\hat S}[v] \gets \mathcal{\hat S}[v] \cup$ \Call{InsertVertex}{$s, v, \omega, t$}
        \EndFor
    \EndIf
    \If{$s.must\_contain\_v = false$}
        \State $\mathcal{\hat S}[v] \gets \mathcal{\hat S}[v] \cup$ \Call{InsertVertex}{$s, v, \omega, \perp$}
    \EndIf
\EndFor

\State \textbf{return} $\mathcal{\hat S}[v]$
\EndProcedure
\end{algorithmic}
\end{algorithm}

\InJournal{\begin{algorithm*}[t]
\small
\caption{Subroutines \textsc{MergeSchedules} and \textsc{InsertVertex}  of procedure \textsc{BuildStrategy} from Algorithm~\ref{alg:dp}.}\label{alg:dp-sub}
\begin{algorithmic}[1]
\Procedure{MergeSchedules}{schedule $s_{orig}$, schedule $s_{add}$, box size $\omega \in \mathbb R$}
    \State $s \gets s_{orig}$ \quad // copy schedule and its properties to answer
    \For{$p = 0..L-1$} \quad // for each time-box add load of $s_1$ and $s_2$
        \State $s[p] \gets s_{orig}[p] + s_{add}[p]$
        \If{$s[p] > 1$}
            \State {$s[p] \gets +\infty$}
        \EndIf
        \State $s.max\_child\_load[p] \gets \max\{s.max\_child\_load[p], s_{add} [p]\}$
    \EndFor
    \If{$s_{add}.t_v = \perp$}
        \State $s.must\_contain\_v \gets true$
    \EndIf
	\State \textbf{return} $s$
\EndProcedure
\\

\Procedure{InsertVertex}{schedule $s_{orig}$, vertex $v$, box size $\omega \in \mathbb R$, time $t\in \mathbb R \cup \{\perp\}$}
	\State $s \gets 0^L$ \quad // initialize empty schedule for answer
    \If{$t \neq \perp$}
        \State $I \gets [t, t+w'(v)]$ \quad // time interval into which query to $v$ is being inserted
        \State $s.t_v \gets t$
        \State $\tinserted{v} \gets t + w'(v)$
    \Else
        \State $I \gets \emptyset$
        \State $s.t_v \gets \perp$
        \State $\tinserted{v} \gets +\infty$
    \EndIf
    \For{$p = 0..L-1$} \quad // for each time-box
    	\State $a_p \gets \frac{1}{\omega}|I \cap [p \cdot \omega, (p+1)\cdot \omega]|$ \quad // contribution of query to $v$ to load of box $p$
        \If {$s.max\_child\_load [p] + a_p>1$}
            \State \textbf{return} $\emptyset$
        \EndIf
        \If {$\tinserted{v}\geq(p+1)\omega$
        } \quad
            \State $s[p] \gets s_{orig}[p] + a_p$ // add load from children in box $p$
            \If {$s[p] > 1$} \quad //insertion failed
                \State \textbf{return} $\emptyset$
            \EndIf
        \Else
            \State $s[p] \gets a_p$
        \EndIf
    \EndFor
	\State \textbf{return} $\{s\}$
\EndProcedure
\end{algorithmic}
\end{algorithm*}
}

\begin{lemma}\label{lem:dplower1}
For fixed constants $L, c\in \N$, calling procedure $\textsc{BuildStrategy}(r(T),\omega)$, where $r(T)$ is the root of the tree, determines if there exists a tuple $(s_v,t_v)_{v\in V}$ which satisfies constraints~\eqref{con:vertexcover}, \eqref{con:merge-insert-box}, and~\eqref{con:fullboxsafe}, or returns an empty set otherwise.
\end{lemma}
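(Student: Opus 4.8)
\medskip\noindent\textbf{Proof plan.}\ The plan is to prove, by structural induction on $T$ from the leaves upward, a statement stronger than the lemma: for every vertex $v$, the set $\mathcal{\hat S}[v]$ returned by $\textsc{BuildStrategy}(v,\omega)$ (Algorithm~\ref{alg:dp}) is \emph{exactly} the collection of box-load vectors $s_v$ --- each carrying the auxiliary fields $s_v.t_v$, $s_v.max\_child\_load$ and $s_v.must\_contain\_v$ maintained by the procedure --- for which there is an \emph{admissible} assignment $(s_u,t_u)_{u\in V(T_v)}$ to the subtree $T_v$ inducing $s_v$ at the root $v$; here an assignment is admissible if every schedule has duration at most $L\omega$, every start time $t_u$ is aligned (a multiple of $\omega$ if $u$ is heavy, a multiple of $\tfrac1{cn}$ if $u$ is light, or $\perp$), and constraints~\eqref{con:vertexcover}, \eqref{con:merge-insert-box} and~\eqref{con:fullboxsafe} hold at every node of $T_v$ relative to its children. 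Instantiating this at $v=r(T)$, where $T_{r(T)}=T$, yields the lemma immediately: $\mathcal{\hat S}[r(T)]\neq\emptyset$ exactly when a tuple $(s_v,t_v)_{v\in V}$ of the required form exists, and $\textsc{BuildStrategy}$ returns the empty set otherwise.

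First I would dispose of the base case. For a leaf $v$ there are no children, so the merge loop of lines~\ref{l:merge-beg}--\ref{l:merge-end} is vacuous and we reach the insertion phase with the single candidate $s=0^L$, $max\_child\_load=0^L$, $must\_contain\_v=false$; the procedure then calls \textsc{InsertVertex} once for every aligned candidate start time of the query to $v$ enumerated by the relevant loop (multiples of $\omega$ for heavy $v$, multiples of $\tfrac1{cn}$ for light $v$) and once with $\perp$. Each of these calls succeeds, and the outputs are precisely the box-load vectors of the one-job assignments on $T_v$ (the $\perp$ call giving the empty assignment); conversely, any admissible assignment on a leaf is exactly such a single aligned job or the empty assignment, so the two collections coincide.

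Next I would carry out the inductive step, proving the two inclusions separately for $v$ with children $v_1,\dots,v_l$, assuming the invariant for every $v_i$. For soundness, I would unwind how an element $s\in\mathcal{\hat S}[v]$ is built: one picks $s_{add}^{(i)}\in\mathcal{\hat S}[v_i]$ for each child, merges them in order with \textsc{MergeSchedules} into some member of $\mathcal{\hat S}^*_l$, and then places or omits the query to $v$ via \textsc{InsertVertex}; using the induction hypothesis to extend each $s_{add}^{(i)}$ to an admissible assignment on $T_{v_i}$, combining these over the disjoint subtrees $T_{v_i}$ and adding the job $(v,t_v)$ (or none) gives an assignment on $T_v$ whose admissibility I would verify by matching each test of the procedure with a clause of a constraint: the $\perp$ branch is reached only when \textsc{MergeSchedules} left $must\_contain\_v=false$, i.e.\ all $t_{v_i}\neq\perp$, which is~\eqref{con:vertexcover}; the box-wise writes of \textsc{InsertVertex} --- $s[p]=s_{orig}[p]+a_p$ on boxes with $(p+1)\omega\leq\tinserted{v}$ and $s[p]=a_p$ on the remaining boxes, with the tests $s[p]>1\Rightarrow\emptyset$ --- reproduce the three cases of~\eqref{con:merge-insert-box} together with its range requirement $s_v[p]\in[0,1]$; and the test $s.max\_child\_load[p]+a_p>1\Rightarrow\emptyset$ is exactly~\eqref{con:fullboxsafe}, since \textsc{MergeSchedules} keeps $max\_child\_load[p]=\max_i s_{v_i}[p]$. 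For completeness, I would take any admissible assignment $(s_u,t_u)_{u\in V(T_v)}$, restrict it to each $T_{v_i}$ so that $s_{v_i}\in\mathcal{\hat S}[v_i]$ by the induction hypothesis, merge these $l$ vectors into an element of $\mathcal{\hat S}^*_l$ whose load on every box with $(p+1)\omega\leq\tinserted{v}$ equals $s_v[p]-a_p$ (by the first case of~\eqref{con:merge-insert-box} at $v$) and whose $max\_child\_load$ and $must\_contain\_v$ fields agree with the assignment, and then observe that \textsc{InsertVertex} called with the true, aligned value of $t_v$ clears every test --- exactly because the assignment satisfies~\eqref{con:vertexcover}, \eqref{con:merge-insert-box} and~\eqref{con:fullboxsafe} --- and outputs $s_v$ with the correct annotations.

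The step I expect to be the main obstacle is the careful bookkeeping around the single box that strictly contains the completion time $\tinserted{v}$ of the query to $v$: there the box-load abstraction is deliberately lossy, since~\eqref{con:merge-insert-box} only records $s_v[p]\geq a_p$ while \textsc{InsertVertex} writes $s[p]\gets a_p$, discarding the children's contribution to that box. I would need to argue both that nothing accepted violates the (already weakened) constraints and that nothing admissible is rejected, and, in tandem, to note that such a box can arise only for a light vertex --- because in $T'$ a heavy weight is an integer number of boxes, so a heavy query never starts or ends strictly inside a box --- so that the step sizes of the two enumeration loops, whole boxes for heavy vertices and slots of length $\tfrac1{cn}$ for light vertices, cover exactly the aligned start positions permitted in the invariant. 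Once this is settled, the finiteness of the per-node state space (each $s_v$ has $L$ coordinates from a set of size $\omega cn+1$, and $t_v$ ranges over $O(L\omega cn)$ aligned values together with $\perp$), together with the structural recursion over the finite tree, makes \textsc{BuildStrategy} terminate and return a well-defined set, completing the argument.
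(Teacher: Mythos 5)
Your opening invariant is too strong, and it is false exactly at the place you yourself flag as the main obstacle. You claim that $\mathcal{\hat S}[v]$ is \emph{exactly} the set of box-load vectors $s_v$ admitting an admissible extension to $T_v$. But when $v$ is light and $\tinserted{v}$ lies strictly inside box $p$, constraint~\eqref{con:merge-insert-box} only asserts $s_v[p]\geq a_p$; consequently, for a fixed $t_v$ and fixed admissible children assignments there is a whole range of admissible values $s_v[p]\in[a_p,1]$, whereas \textsc{InsertVertex} unconditionally writes $s[p]\gets a_p$ (discarding the children's contribution to that box) and so outputs exactly one of them. Your completeness step then fails at its final sentence: given an admissible assignment with $s_v[p]>a_p$, \textsc{InsertVertex} will \emph{not} output $s_v$, so this $s_v$ is rejected, contradicting ``nothing admissible is rejected.'' In other words, the resolution you sketch for the obstacle is heading in the wrong direction --- it cannot be achieved under the invariant you stated.

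The paper avoids this by deliberately weakening the invariant to \emph{minimal} schedules: define $(s_v,t_v)\leq(s'_v,t'_v)$ iff $t_v=t'_v$ and $s_v[p]\leq s'_v[p]$ for all $p$, and claim only that $\mathcal{\hat S}[v]$ is the set of all minimal extendable schedules. With this version, soundness is as you argued, and in the completeness direction the restriction of an admissible assignment to $T_{v_i}$ need not itself be in $\mathcal{\hat S}[v_i]$ --- but it dominates a minimal one that is, and one must then check that replacing $s_{v_i}$ by a pointwise-smaller $s'_{v_i}$ with the same $t_{v_i}$ keeps all three constraint families satisfiable at $v$ (for \eqref{con:fullboxsafe} and the ``$\in[0,1]$'' part of \eqref{con:merge-insert-box} this is a direct monotonicity check; for the equality in case one of \eqref{con:merge-insert-box} one re-defines $s_v[p]$ accordingly, which only shrinks it). This monotonicity argument is the missing ingredient in your proposal. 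Once the invariant is weakened this way, the lemma follows because the existence question is monotone: $\mathcal{\hat S}[r(T)]$ is non-empty iff some minimal admissible tuple exists iff some admissible tuple exists.

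One smaller point: your base case is correct, and your identification that $\tinserted{v}$ can fall strictly inside a box only for light $v$ (because heavy weights and heavy start times are multiples of $\omega$) is exactly the right observation and worth making explicit, since it shows the lossiness is confined to light queries and motivates the minimal-schedule workaround rather than a case analysis on heavy versus light vertices in the merge step.
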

\InJournal{
\begin{proof}
The formulation of procedure $\textsc{BuildStrategy}$ directly enforces that the constraints~\eqref{con:vertexcover}, \eqref{con:merge-insert-box}, and~\eqref{con:fullboxsafe} are fulfilled at each level of the tree, in a bottom-up manner.

For each vertex $v \in V$, we show by induction on the tree size that upon termination of procedure $\textsc{BuildStrategy}(v,\omega)$, the returned variable $\mathcal{\hat  S} [v]$ is the set of all \emph{minimal} schedules $(s_v, t_v) \in  \mathcal{\hat S} [v]$ which can be extended within the subtree $T_v$ to a data structure $(s_u, t_u)_{u \in V(T_v)}$, for some $ (s_u, t_u)\in \mathcal {\hat S} [u]$, $u \in V(T_v)$, in such a way that the conditions~\eqref{con:vertexcover}, \eqref{con:merge-insert-box}, and~\eqref{con:fullboxsafe} hold within subtree $T_v$. Here, minimality of a schedule is a trivial technical assumption, understood in the sense of the following very restrictive partial order: we say $(s_v, t_v) \leq (s'_v, t'_v)$ if $s_v[p] \leq s'_v[p]$ for all $0 \leq p \leq L-1$ and $t_v = t'_v$. (In the pseudocode, rather than write $(s_v, t_v)$ as a pair variable, we include $t_v$ within the structure $s_v$ as its special field $s_v.t_v$.)

The algorithm proceeds to merge together exhaustively all possible choices of schedules $ (s_{v_i}, t_{v_i})\in \mathcal {\hat S} [v_i]$ of all children $v_i$ of $v$, $1\leq i \leq l$. The merge is performed by computing, for any fixed choice $(s_{v_i}, t_{v_i})_{1\leq i \leq l}$, the combined load of each box in the resultant schedule $s$:
\begin{equation}\label{eq:sumload}
s[p] \gets \sum_{i=1}^l s_{v_i} [p],
\end{equation}
where, as a technicality, we also put $s[p] \gets +\infty$ whenever we obtain excessive load in a box ($s[p]>1$), as to avoid inflating the size of the state space and consequently, the running time of the algorithm. In Algorithm~\ref{alg:dp}, the computation of $s[p]$ through the sum~\eqref{eq:sumload} proceeds by a processing of successive children $v_i$, $1\leq i \leq l$, so that a schedule $s$ stored in the data structure $\mathcal{\hat S}^*_i$ represents $s[p] = \sum_{i=1}^i s_{v_j} [p]$. The summation of load is performed within the subroutine $\textsc{MergeSchedules}$, which merges a schedule $s_{orig} \in \mathcal{\hat S}^*_{i-1}$ with a schedule $s_{add} \in \mathcal{\hat  S} [v_i]$ to obtain the new schedule $s \in \mathcal{\hat S}^*_{i}$.

Eventually, the set of schedules $\mathcal{\hat S}^*_{l}$, obtained after merging the schedules of all children of $v$, contains an element $s$ satisfying~\eqref{eq:sumload}. Next, we test all possible values of $t_v \in \mathbb R \cup \{\perp\}$, which are feasible for an aligned schedule. These values depend on whether vertex $v$ is heavy or light, for which $t_v$ should represent the starting time of a box or slot, respectively.
Using procedure $\textsc{InsertVertex}$, we then set the load of each box following~\eqref{con:merge-insert-box}:
\begin{equation}\label{eq:insertmerge}
s_v[p] \gets\begin{cases}
a_p + \sum_{j=1}^l s_{v_j}[p], & \text{when $\tinserted{v}\geq(p+1)\omega$
,}\\
a_p, & \text{when $p\omega  < \tinserted{v} < (p+1)\omega$
, }\\
0, & \text{when $\tinserted{v} \leq p\omega$,}
\end{cases}
\end{equation}
where $a_p$ is defined as in~\eqref{con:merge-insert}. In the pseudocode of function $\textsc{InsertVertex}$, for compactness we replace the second and third condition by equivalently setting $s_v[p] \gets a_p$ when the first condition does not hold. We additionally constrain in procedures $\textsc{MergeSchedules}$ and $\textsc{InsertVertex}$ the possibility of the condition $t_v = \perp$ occurring by enforcing the constraints of~\eqref{con:vertexcover} (corresponding of the setting of parameter $s.must\_contain\_v$ to $false$). Condition~\eqref{con:fullboxsafe} is enforced through procedures $\textsc{MergeSchedules}$ and $\textsc{InsertVertex}$ using the auxiliary array $s.max\_child\_load[p]$, $0\leq p \leq L-1$, defined so that $s.max\_child\_load[p] \gets \max_{1\leq j \leq l} s_{v_j}[p]$.

Since $\mathcal {\hat S} [v_i]$, for all $1\leq i \leq l$, contains all minimal schedules satisfying~\eqref{con:vertexcover}, \eqref{con:merge-insert-box}, and~\eqref{con:fullboxsafe}, the same holds for $\mathcal {\hat S} [v]$, which was constructed by enforcing only the required constraints. We remark that we obtain only the set of minimal (and not all) schedules due to the slight difference between~\eqref{eq:insertmerge} and~\eqref{con:merge-insert-box} in the second condition: instead of requiring $s_v[p] \geq a_p$, we put $s_v[p] \gets a_p$, thus setting the $p$-th coordinate of the schedule at its minimum possible value.
\end{proof}
}

It follows directly from Lemma~\ref{lem:dplower1} that, for any value $\omega^*$, tree $T$ may only admit an aligned schedule assignment of duration at most $\omega^* L$ if a call to procedure $\textsc{BuildStrategy}\allowbreak(r(T),\omega^*)$ returns a non-empty set. Taking into account Lemmas~\ref{lem:costTTprime} and~\ref{lem:rounding-eps}, we directly obtain the following lower bound on the length of the shortest aligned schedule in tree $T'$.

\begin{lemma}\label{lem:lb}
If $\textsc{BuildStrategy}(r(T),\omega^*) = \emptyset$, then:\InConference{\\}
\mathx
\omega^* L < \left (1+\frac{3}{c}\right) \opt {T'}  \leq \left(1+\frac{3}{c}\right)\left(1+\frac{2}{c}\right) \opt T \leq \left(1+\frac{11}{c}\right) \opt T.
\mathx
\eop
\end{lemma}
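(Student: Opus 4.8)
\medskip
The plan is to prove the three inequalities from left to right, with essentially all the work hidden in the lemmas already established. The first inequality, $\omega^* L < (1+\frac 3c)\opt{T'}$, I would prove by contraposition: assuming $\omega^* L \geq (1+\frac 3c)\opt{T'}$, I will exhibit a tuple $(s_v,t_v)_{v\in V}$ satisfying constraints~\eqref{con:vertexcover}, \eqref{con:merge-insert-box}, and~\eqref{con:fullboxsafe}, so that by Lemma~\ref{lem:dplower1} the call $\textsc{BuildStrategy}(r(T),\omega^*)$ returns a non-empty set.

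To build such a tuple, I would invoke Lemma~\ref{lem:rounding-eps} with box size $\omega = \omega^*$. It produces a consistent aligned schedule assignment $\s$ for $T'$ with $|\s| = \cost_{\cA_{\s}}(T') \leq (1+\frac 3c)\opt{T'} \leq \omega^* L$, so every query job of $\s$ completes within the first $L$ boxes, and by alignment heavy-vertex jobs occupy a run of full boxes while light-vertex jobs occupy a run of full slots. Reading off from $\s$, for each $v$, its box loads $s_v[0],\dots,s_v[L-1]$ and the start time $t_v$ of the query to $v$ in $\s(v)$ (setting $t_v = \perp$ if $\s(v)$ contains no such job), I obtain the desired tuple; its consistency is inherited from $\s$, and Observation~\ref{obs:vertexcover} together with Corollary~\ref{cor:merge-insert-box} show that it satisfies constraints~\eqref{con:vertexcover}, \eqref{con:merge-insert-box}, and~\eqref{con:fullboxsafe}. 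This closes the contrapositive and gives the first inequality.

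For the second inequality I would apply Lemma~\ref{lem:costTTprime}, which gives $\opt{T'} \leq (1+\frac 2c)\opt T$; the third is the elementary estimate $(1+\frac 3c)(1+\frac 2c) = 1 + \frac 5c + \frac{6}{c^2} \leq 1 + \frac{11}{c}$, valid for every integer $c \geq 1$ since then $\frac{6}{c^2}\leq \frac 6c$.

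I do not anticipate a genuine obstacle: the substantive ingredients --- the weight-rounding bound (Lemma~\ref{lem:costTTprime}), the existence of a near-optimal \emph{aligned} schedule for $T'$ (Lemma~\ref{lem:rounding-eps}), and the correctness of the dynamic program with respect to the box-load constraints (Lemma~\ref{lem:dplower1}) --- are already in place. The only two points meriting attention are (i) using the same value $\omega^*$ as the box size in Lemma~\ref{lem:rounding-eps} and in the call to $\textsc{BuildStrategy}$, and (ii) observing that here only the ``easy'' direction of the schedule/tuple correspondence is needed, namely that an aligned schedule of duration at most $\omega^* L$ yields a feasible tuple; hence the looseness of constraint~\eqref{con:merge-insert-box} in the box straddling a completion time $\tinserted{v}$ --- the source of the $O(\omega)$-per-down-query slack that must be dealt with later in Proposition~\ref{pro:dp} --- is immaterial for this lemma.
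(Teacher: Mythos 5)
Your proof is correct and takes essentially the same route as the paper, which leaves the argument implicit (``it follows directly from Lemma~\ref{lem:dplower1} \ldots\ Taking into account Lemmas~\ref{lem:costTTprime} and~\ref{lem:rounding-eps}, we directly obtain \ldots''). You have simply spelled out the contrapositive of the first inequality --- extracting the tuple $(s_v,t_v)_{v\in V}$ from the aligned schedule of Lemma~\ref{lem:rounding-eps} and feeding it to Lemma~\ref{lem:dplower1} via Observation~\ref{obs:vertexcover} and Corollary~\ref{cor:merge-insert-box} --- and the two remaining inequalities are Lemma~\ref{lem:costTTprime} and elementary arithmetic, exactly as the paper intends.
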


\InJournal{
Finally, we bound the running time of procedure $\textsc{BuildStrategy}$.}
\begin{lemma}
The running time of procedure $\textsc{BuildStrategy} (r(T), \omega)$ is at most $O((cn)^{\gamma L})$, for some absolute constant $\gamma = O(1)$, for any $\omega \leq n$.
\end{lemma}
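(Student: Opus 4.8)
The plan is a standard ``size of the state space'' times ``work per state'' accounting, carried out over the recursion tree of \textsc{BuildStrategy}, which is exactly the input tree: the procedure is invoked once for each $v\in V$, so the total running time equals $\sum_{v\in V} W(v)$, where $W(v)$ is the work charged to $v$ and a vertex is not charged for the recursive calls it makes on its children. First I would bound the number $N$ of distinct schedule objects that can ever be stored in any of the sets $\mathcal{\hat S}[v]$, $\mathcal{\hat S}^*_i$, $\mathcal{\hat S}[v_i]$. Such an object is completely described by the load array $(s[p])_{0\le p<L}$, a companion length-$L$ array recording the maximum per-box load over the children, the start time $t_v\in\mathbb R\cup\{\perp\}$, and a Boolean flag. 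Since $\omega=a/(cn)$ with $a\in\N$, each box consists of exactly $a=\omega cn$ slots, so each load entry takes one of the $a+2$ values in $\{0,\tfrac1a,\ldots,1\}\cup\{+\infty\}$, and $\omega\le n$ gives $a\le cn^2$. Hence the two length-$L$ arrays together contribute a factor $(a+2)^{L}(a+1)^{L}$, the value $t_v$ ranges over $O(L\omega cn)$ possibilities (box boundaries if $v$ is heavy, slot boundaries if $v$ is light, or $\perp$), and the flag contributes a factor $2$; multiplying and absorbing the polynomial factor into the exponent yields $N\le (cn)^{O(L)}$. Since each of the sets above is maintained as a set of schedule objects, each of them has size at most $N$.

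Next I would bound $W(v)$ for a fixed $v$ with children $v_1,\ldots,v_l$. The merge phase (lines~\ref{l:merge-beg}--\ref{l:merge-end}) runs, for each $i\in\{1,\ldots,l\}$, over all pairs $(s,s_{add})\in\mathcal{\hat S}^*_{i-1}\times\mathcal{\hat S}[v_i]$, invoking \textsc{MergeSchedules} (cost $O(L)$) once per pair; as each factor has size $\le N$, iteration $i$ costs $N^2\cdot O(L)=(cn)^{O(L)}$, so the merge phase costs $l\cdot(cn)^{O(L)}$. The insertion phase iterates over the $\le N$ schedules of $\mathcal{\hat S}^*_l$ and, for each, over the $O(L)$ box boundaries (if $v$ is heavy) or $O(L\omega cn)$ slot boundaries (if $v$ is light), plus the case $t_v=\perp$, calling \textsc{InsertVertex} (cost $O(L)$) each time; using $\omega\le n$ this is again $(cn)^{O(L)}$. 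Adding the $\mathrm{poly}(L,\log(cn))$ bookkeeping needed per set insertion and per deduplication, we get $W(v)=(1+l)\cdot(cn)^{O(L)}$. Summing over all vertices and using $\sum_{v\in V}(1+l_v)=n+(n-1)=O(n)$, the total running time is $n\cdot(cn)^{O(L)}=O((cn)^{\gamma L})$ for an absolute constant $\gamma=O(1)$.

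The step I expect to require the most care is the merge phase: read naively, merging $l$ children seems to risk producing a set of size $\prod_i|\mathcal{\hat S}[v_i]|$, which would be doubly exponential. The key point to pin down is that merged schedules still inhabit the same bounded universe of schedule objects — per-box loads are summed and, the instant a sum exceeds $1$, collapsed to the single sentinel value $+\infty$ (the ``$s[p]\gets+\infty$'' line of \textsc{MergeSchedules}) — so $|\mathcal{\hat S}^*_i|\le N$ for every $i$, independently of $l$; only the \emph{time} to assemble $\mathcal{\hat S}^*_i$ from $\mathcal{\hat S}^*_{i-1}$ and $\mathcal{\hat S}[v_i]$ is quadratic in $N$, and the number of merge steps summed over the whole tree is $\sum_v l_v=n-1$. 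A secondary point worth making explicit is where the hypothesis $\omega\le n$ (together with $\omega cn\in\N$) enters: it is precisely what keeps the number of slots per box, $a=\omega cn\le cn^2$, and the number of candidate insertion times, $O(L\omega cn)$, polynomial in $n$, so the base of the exponential stays $cn$ rather than growing with $\omega$. That the sets really do have size at most $N$ — including the minimality restriction invoked in Lemma~\ref{lem:dplower1} — is inherited from that lemma and needs no re-proof.
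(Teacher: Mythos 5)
Your proof is correct and takes essentially the same approach as the paper's: bound the number of distinct schedule objects $(s_v, t_v, s_v.max\_child\_load, \text{flag})$ by $(cn)^{O(L)}$ using the fact that each of the $L$ box-loads ranges over $\omega cn + 1 \le cn^2 + 1$ values, then charge $\mathrm{poly}$-in-state-space-size work to each of the $n$ invocations. If anything you are a bit more careful than the paper's own proof, which treats the per-call cost as linear in $|\mathcal{\hat S}|$ (the merge loop is actually quadratic) and counts only $L+1$ candidate values for $t_v$ (correct for heavy vertices, whereas light vertices contribute $O(L\omega cn)$ slot boundaries); neither imprecision affects the final $(cn)^{\gamma L}$ bound.
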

\InJournal{
\begin{proof}
The procedure $\textsc{BuildStrategy}$ is run recursively, and is executed once for each node of the tree. The time of each execution is upper-bounded, up to multiplicative factors polynomial in $n$, by the size of the largest of the schedule sets named $\mathcal{\hat S}[u]$, $u\in V$, or $\mathcal{\hat S}^*_i$, appearing in the procedure. We further focus only on bounding the size $|\mathcal{\hat S}|$ of the state space of distinct possible schedules in the $(s_v, t_v)$ representation. The array $s_v$ has size $L$, with each entry $s_v [p]$, $0\leq p \leq L-1$, taking one of the values $s_v [p] \in \{0, \frac{1}{\omega cn}, \frac{2}{\omega cn},\ldots,  1 \}$, where the size of the set of possible values is $\omega cn+1 \in \N$. Additionally, in some of the auxiliary schedules, the additional array field $s_v.max\_child\_load$ has length $L$, with each entry $s_v.max\_child\_load [p]$, $0\leq p \leq L-1$, likewise taking one of the values from the set  $\{0, \frac{1}{\omega cn}, \frac{2}{\omega cn},\ldots,  1 \}$. Finally, for the time $t_v$, we have: $t_v \in \{0, \omega, 2\omega, \ldots, (L-1)\omega, \perp\}$, where the size of the set of possible values is $L+1$.

Overall, we obtain:
$$
|\mathcal{\hat S}| \leq (L+1) \left(\omega cn+1\right)^{L} \left(\omega cn+1\right)^{L} \leq (L+1) \left(cn^2+1\right)^{2L} < (cn)^{L \gamma'},
$$
where $\gamma'>0$ is a suitably chosen absolute constant. Accommodating the earlier omitted multiplicative $O(\mathrm{poly}(n))$ factors in the running time of the algorithm, we get the claim for some suitably chosen absolute constant $\gamma > \gamma'$.
\end{proof}
}

\InJournal{
\subsection{Sequence Assignment Algorithm with Small \texorpdfstring{$\costo$}{Modified Cost}}
}

\InJournal{
The procedure for computing a sequence assignment $S$ which achieves a small value of $\costo$ is given in Algorithm~\ref{alg:S}.}
\InConference{To complete the proof of Proposition~\ref{pro:dp}, we can now provide a strategy which achieves a small value of $\costo$. }
This relies on procedure $\textsc{BuildStrategy}(r(T), \omega)$ as an essential subroutine, first determining the minimum value of $\omega = \frac{i}{cn}$, $i\in \N$, for which $\textsc{BuildStrategy}$ produces a schedule. \InJournal{Since the schedule of a parent node $v$ is based on an insertion of a query to $v$ into the schedules of its children, a standard backtracking procedure allows us to determine the representation $(s_v, t_v)_{v \in V}$ of the schedules of all nodes of the tree.}
\InConference{Details of the approach are provided in the Appendix.}

\InJournal{
\DD{I do not quite see why we reconstruct query assignment in lies 7-9 the way we do. The problematic part for me is that \textsc{BuildStrategy}, and more precisely \textsc{InsertVertex}, is taking care of the fact that if one vertex "hides" query to a descendant, then the descendant is not propagated upwards? So it seems that $C(v)$ can be constructed on the basis of $s_v$: according to the definition of a schedule, is is the same as sequence assignment modulo time stamps, so perhaps we can just say that the schedule computed by \textsc{BuildStrategy} has for each vertex $v$ a sequence $((v_1,t_1),\ldots,(v_k,t_k))$ and define $S(v):=(v_1,\ldots,v_k)$. This should potentially replace lines 5-10. What am I missing?
}
\AK{It is quite possible that you are right (you are definitely right in terms of high level intuition); to be honest, I do not remember all the boundary cases / issues with rounding up and down to multiples of omega, cutting partially covered jobs, etc. Perhaps it's OK to simplify it.}
\begin{algorithm}
\small
\caption{Construction of sequence assignment $S$}\label{alg:S}
\begin{algorithmic}[1]
\State $\omega \gets \frac{1}{cn}$
\While{\Call{BuildStrategy}{$r(T)$, $\omega$} $= \emptyset$}
	\State $ \omega \gets \omega + \frac{1}{cn}$
\EndWhile
\State $(s_v, t_v)_{v \in V} \gets $ schedule assignment of duration at most $L \omega$, satisfying constraints~\eqref{con:vertexcover}, \eqref{con:merge-insert-box}, and~\eqref{con:fullboxsafe},
\StatexIndent[1] reconstructed by backtracking through the sets $(\mathcal {\hat S}[v])_{v\in V}$ computed in the last call
\StatexIndent[1] to procedure \Call{BuildStrategy}{$r(T)$, $\omega$}.
\For {$v \in V$}
  \State $C(v) \gets \emptyset$
  \For {$u \in V(T_v)$}
    \If{there is no vertex $z\neq u$ on the path from $v$ to $u$ s.t.\ $t_z < \lfloor t_u +  w'(u) \rfloor_\omega + \omega$ \label{ln:Cvcond}}
      \State\label{ln:Cv} $C(v) \gets C(v) \cup \{(\lfloor t_u \rfloor_{\omega}, \lceil t_u + w'(u) \rceil_{\omega}, u)\}$
    \EndIf
  \EndFor
  \State\label{ln:Cvsort} $S(v) \gets$ sequence of vertices (third field) of $C(v)$ sorted in non-decreasing
  \StatexIndent[2] order, with tuples compared by first field, then second field, then third field.
\EndFor
\State \textbf{return} $(S(v))_{v\in V}$
\end{algorithmic}
\end{algorithm}

We start by observing in Algorithm~\ref{alg:S} that if a node $v$ is not queried ($t_v = \perp$), then all of the children of $v$ belong to the schedules produced by procedure $\textsc{BuildStrategy}$ following condition~\eqref{con:vertexcover}, and thus they will also appear in $S(v)$. This guarantees the validity of the solution.

\begin{lemma} \label{lem:finalalg-correct}
Algorithm~\ref{alg:S} returns a correct query sequence assignment $S$ for tree $T$.
\eop
\end{lemma}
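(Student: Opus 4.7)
The plan is to verify that when the strategy $\cA_S$ of Algorithm~\ref{alg:As} is invoked with the sequence assignment $S$ produced by Algorithm~\ref{alg:S}, it terminates for every choice of target $x \in V$ with $|\queriedSubtree{T}{U}{x}|=1$, i.e., correctly locates $x$. I proceed in three steps.

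First, I would verify that the tuple $(s_v, t_v)_{v \in V}$ reconstructed by backtracking through \textsc{BuildStrategy} corresponds to a consistent schedule assignment in the sense of Section~\ref{sec:schedules}. Constraints~\eqref{con:vertexcover}, \eqref{con:merge-insert-box}, and~\eqref{con:fullboxsafe} enforced by \textsc{BuildStrategy} are bottom-up reformulations of conditions (i) and (ii) of consistency: no two jobs in the schedule of a single node overlap, and each query $(u,t_u)$ appearing in the schedule of a child $v'$ of $v$ propagates upward to $v$ unless blocked by an earlier query at $v$. Thus backtracking yields a valid consistent schedule assignment $\s$ with $\s(v)$ collecting exactly those jobs $(u, t_u)$ which must appear in the schedule of $v$.

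Second, I would show that the set $C(v)$ computed in Algorithm~\ref{alg:S} coincides with the set of nodes $u$ whose query job belongs to $\s(v)$. The condition in line~\ref{ln:Cvcond} holds for $u$ iff no intermediate vertex $z \neq u$ on the path from $v$ to $u$ has a query time early enough (after the box-aligned rounding) to intercept the upward propagation of $u$'s job; this is the contrapositive of consistency condition (ii) applied iteratively along the path. The sort in line~\ref{ln:Cvsort} then reproduces the time-order of queries within $\s(v)$, so $S(v)$ is precisely the time-ordered sequence of vertices whose jobs appear in $\s(v)$.

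Third, I would conclude by induction on $|\queriedSubtree{T}{U}{x}|$ that $\cA_S$ correctly locates $x$. In the inductive step, assume the current root is $v$ and $x$ lies in the remaining search subtree. If $x=v$, then condition~\eqref{con:vertexcover} guarantees that either $v \in C(v)$ (so $v$ is eventually queried and returns `true') or every child of $v$ carries a job in $\s(v)$, so that successive queries of $S(v)$ eliminate every proper descendant and leave $\queriedSubtree{T}{U}{x}=\{v\}$. If $x \neq v$, I would invoke the separator characterization of consistent schedules recalled just before Observation~2.2 (stating that for any two overlapping-in-time vertices, the path between them contains a separating vertex with an earlier interval). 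Applied to $v$ and $x$, this yields some vertex $u$ on the $v$-to-$x$ path with $(u, t_u)\in \s(v)$ which, once queried, produces a `down' reply and moves the root strictly closer to $x$; the inductive hypothesis then applies to the new, smaller search subtree.

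The main obstacle I anticipate is handling the rounding artifacts in line~\ref{ln:Cvcond} (the $\lfloor t_u + w'(u) \rfloor_\omega + \omega$ term) and the slot-versus-box granularity distinction between heavy and light vertices: one must check that the discretized condition used by Algorithm~\ref{alg:S} coincides with the unrounded propagation condition of Section~\ref{sec:schedules}, or at least enlarges $C(v)$ (never shrinks it), so that no query that would be needed to separate $v$ from $x$ is omitted from $S(v)$. Once this is settled, the induction closes cleanly.
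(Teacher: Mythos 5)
There is a genuine gap, and it lies in your Step 1, on which Step 3 also leans. You assert that the tuple $(s_v,t_v)_{v\in V}$ reconstructed by backtracking through \textsc{BuildStrategy} is a consistent schedule assignment in the sense of Section~\ref{sec:schedules}. That is not justified and is in general false: the dynamic program only enforces \emph{per-box aggregate load} constraints \eqref{con:merge-insert-box} and \eqref{con:fullboxsafe}, so two light jobs appearing in the same node's schedule may overlap in time inside a box (the paper's own development acknowledges this --- non-overlap is only proved when a heavy vertex is involved, in Lemma~\ref{lem:heavy_zero}, and the possible light-light collisions are exactly why the modified cost $\costo$ and the $\comega$ slack per down query are introduced). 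Consequently your Step 3 for the case $x\neq v$, which invokes the interval/separator characterization of consistent schedules to produce a vertex $u$ on the $v$--$x$ path with $(u,t_u)\in\s(v)$, is doubly broken: it presupposes the unproved consistency, and the characterization as stated (about pairs with overlapping intervals) does not deliver membership of a path vertex in $\s(v)$ anyway. Your anticipated "main obstacle" about the rounding in line~\ref{ln:Cvcond} is also misplaced for this lemma: omissions of deep descendants from $C(v)$ do not threaten correctness; what matters is only that the unblockable memberships survive, and they trivially do.

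The argument the paper has in mind (and which would repair your proof) is much lighter and needs no scheduling consistency at all. By constraint \eqref{con:vertexcover}, if $t_v=\perp$ then every child $u$ of $v$ has $t_u\neq\perp$, and since the $v$--$u$ path has no internal vertex, the test in line~\ref{ln:Cvcond} cannot exclude $u$, so every child of $v$ lands in $C(v)\subseteq S(v)$; symmetrically, if $t_v\neq\perp$ then $v\in C(v)\subseteq S(v)$. Hence whenever the current search tree $\queriedSubtree{T}{U}{x}$ rooted at $v$ has more than one vertex, it contains either $v$ (if $t_v \neq \perp$) or an unqueried child of $v$ (if $t_v = \perp$), and in both cases that vertex occurs in $S(v)$, so Algorithm~\ref{alg:As} never stalls; since each executed query removes at least the queried vertex from the search tree, the process terminates with a single remaining candidate, i.e., the target is located. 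Your case $x=v$ in Step 3 is essentially this argument, but the $x\neq v$ case should be replaced by the same non-stuckness observation (you do not need the root to move strictly closer to $x$ after one query, only that some query in the current tree is always available).
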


For the purposes of analysis, we extend the notion of backtracking procedure  $\textsc{BuildStrategy}$ in a natural way, so that, for every node $v\in V$ and box $0 \leq p \leq L-1$, we describe precisely the contribution $c_v[p,u]$ of each vertex $u \in V(T_v)$ to the load $s_v[p]$. \InJournal{(See Fig.~\ref{fig:group-1} for an illustration.)}
\InJournal{
\begin{figure}
\begin{minipage}[b]{.5\linewidth}
\centering
\includegraphics[width=2in]{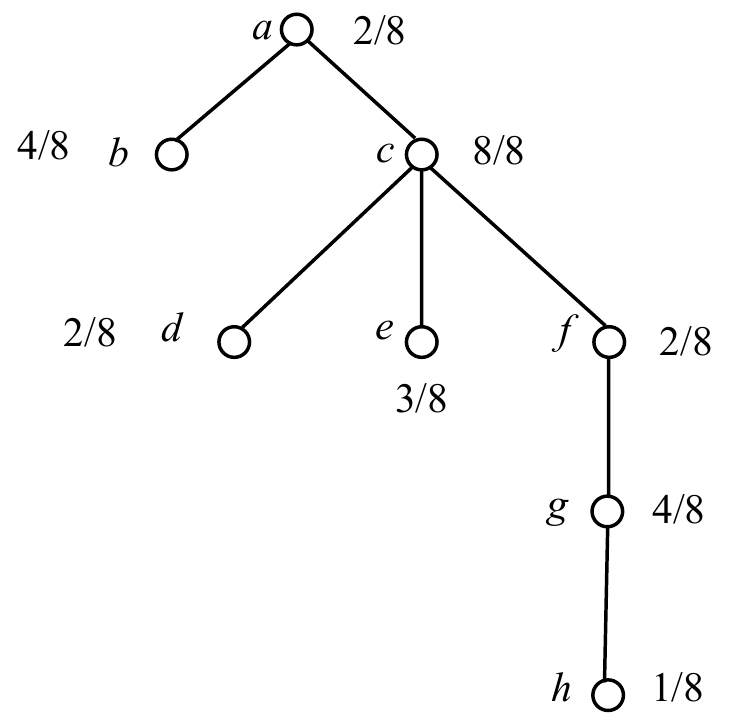}
\subcaption{Tree $T'$ with vertex weights.          \ \ \\ \ \ \\ \ \\ \ \\ \ }\label{fig:1a}
\end{minipage}%
\begin{minipage}[b]{.5\linewidth}
\centering
\includegraphics[width=3in]{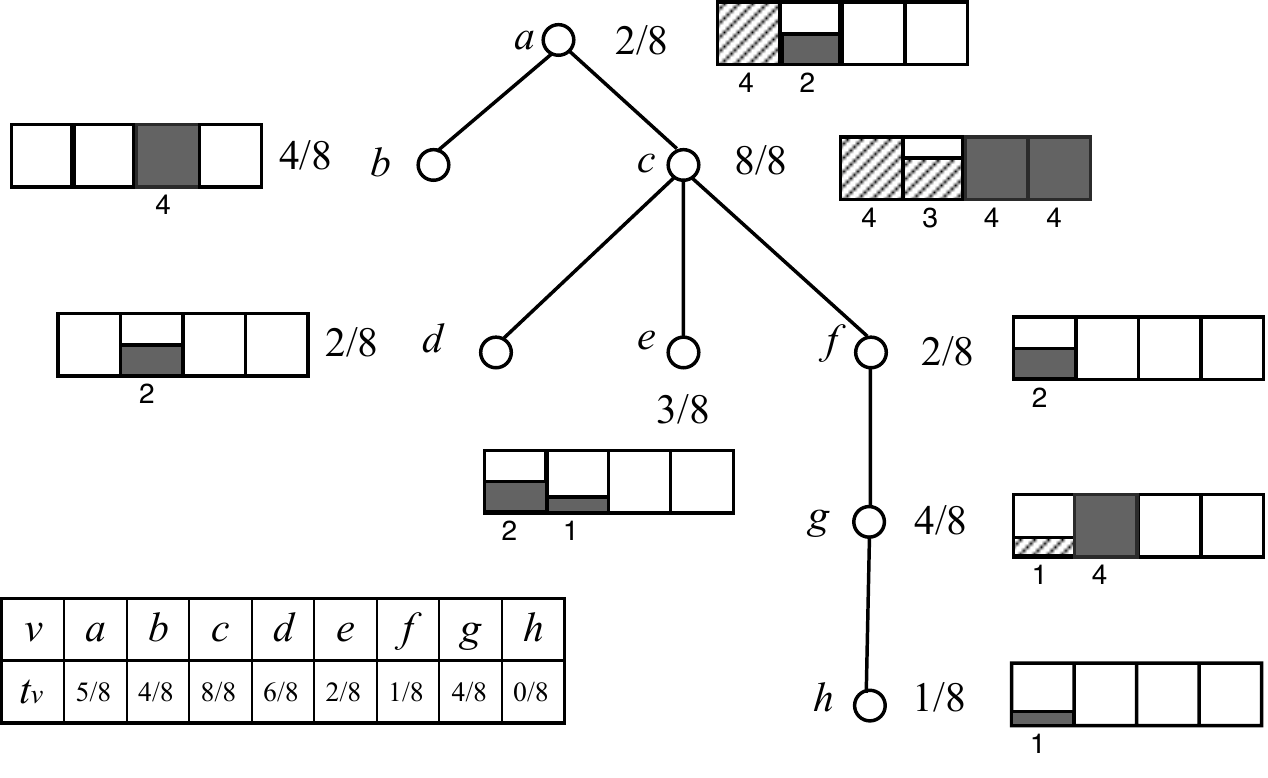}
\subcaption{Sample schedule $(s_v,t_v)_{v\in V}$ obtained by backtracking procedure BuildStrategy with parameters $c=1,n=8,\omega=\frac{4}{8}$, (box size $\frac{4}{8}$, slot size $\frac{1}{8}$, $4$ slots per box), $L=4$. Note that the schedules $(s_v)_{v\in V}$ may correspond to different starting times of jobs within the prescribed box; the provided $t_v$ are an example.}\label{fig:1b}
\end{minipage}

\begin{minipage}[b]{.5\linewidth}
\centering
\includegraphics[width=2.5in]{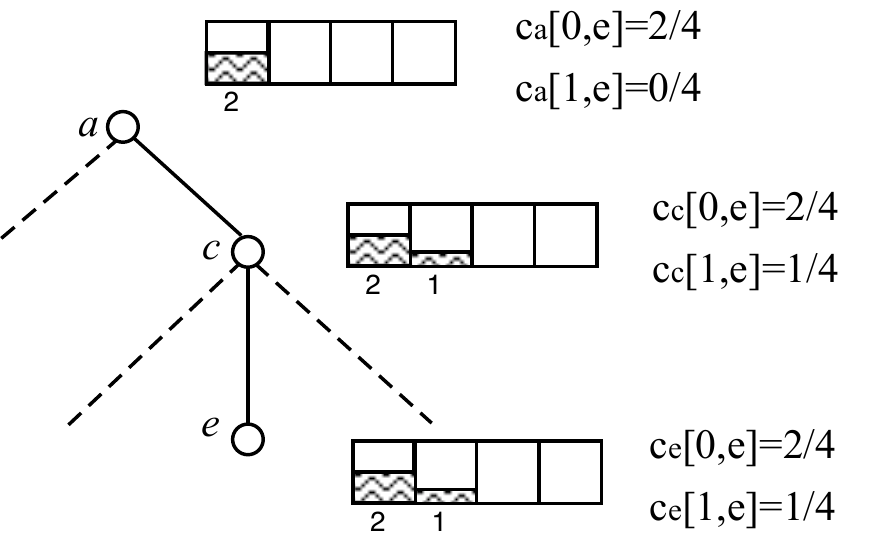}
\subcaption{Contribution of load of vertex $e$ to different vertices of the tree. \\ \ \\ \  }\label{fig:1c}
\end{minipage}
\begin{minipage}[b]{.5\linewidth}
\centering
\includegraphics[width=1.8in]{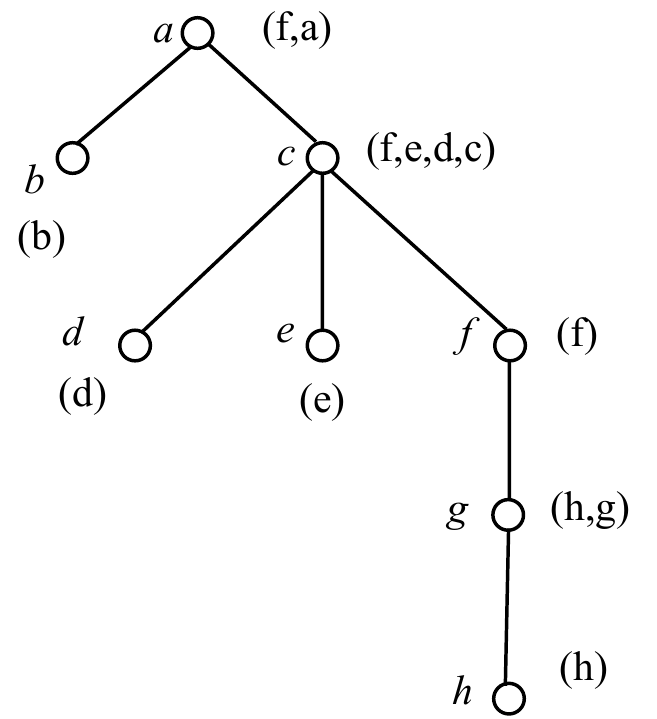}
\subcaption{Sequences $S(v)$ computed by Algorithm 3 based on provided $(s_v,t_v)_{v\in V}$. Note that vertex $e$ does not appear in $S(a)$ because of the query to $c$ on the way.}\label{fig:1d}
\end{minipage}

\caption{Illustration of Algorithm ~\protect\ref{alg:dp} and ~\protect\ref{alg:dp-sub}. The depicted tree $T'$ has vertex set $V=\{a,b,c,d,e,f,g,h\}$ and vertex weights:} \begin{tabular}{c|c|c|c|c|c|c|c|c}
$v$&a&b&c&d&e&f&g&h\\
\hline
 $w'(v)$&$\frac{2}{8}$&$\frac{4}{8}$&$\frac{8}{8}$&$\frac{2}{8}$&$\frac{3}{8}$&$\frac{2}{8}$&$\frac{4}{8}$&$\frac{1}{8}$\\
\end{tabular}\label{fig:1}\label{fig:group-1}

\end{figure}
}
Formally, for $u=v$ we have $c_v[p,v]  \gets a_p = |[t_v, t_v + w'(v)] \cap [p\omega, (p+1)\omega]|$ if $t_v\neq \perp$, and $c_v[p,v] \gets 0$, otherwise. Next, if $u \neq v$ and $u$ belongs to the subtree of child $v_i$ of $v$, we put:
$$
c_v[p,u]\gets\begin{cases}
c_{v_i}[p,u], & \text{if $\tinserted{v} > p \omega$,}\\
0, & \text{otherwise,}
\end{cases}
$$
where the insertion time $\tinserted{v}$ for $v$ is defined as in Observation~\ref{obs:merge-insert}. Comparing with~\eqref{eq:insertmerge}, we have directly for all $0 \leq p < L$:
$$
s_v [p] = \sum_{u \in V(T_v)} c_{v} [p,u].
$$
Let $p_s(u)$ and $p_f(u)$ be the indices of the starting and final box, respectively, to which  vertex $u$ adds load, formally $p_s(u) = \min P_u$ and $p_f(u) = \max P_u$, where $P_u = \{p: |[t_u, t_u + w'(u)]\cap\ [p\omega, (p+1)\omega]| >0\}$.
From the statement of Algorithm~\ref{alg:S}, we show immediately by inductive bottom-up argument that if $u \in S(v)$, then $\omega \sum_{p = p_s(u)}^{p_f(u)}  c_{v} [p,u] = w'(u).$


\begin{lemma} \label{lem:heavy_zero}
Let $(s_v,t_v)_{v\in T(V)}$ be a schedule assignment computed by $\textsc{BuildStrategy}$.
For any vertices $u$ and $z$ such that $(u,t_u)$ and $(z,t_z)$ belong to the schedule at $v$, if either $u$ or $z$ is heavy, then $|[t_u,t_u+w'(u)]\cap[t_z,t_z+w'(z)]|=0$.
\end{lemma}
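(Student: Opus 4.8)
Here is how I would prove Lemma~\ref{lem:heavy_zero}.

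The plan is to argue by contradiction, bridging the time-interval formulation of the statement with the box-discretised data produced by \textsc{BuildStrategy}, via the per-box contributions $c_v[p,\cdot]$ defined just before the lemma and the fact that a \emph{heavy} vertex completely fills every box its query occupies. By the symmetry of the claim in $u$ and $z$, I would assume $u$ is heavy; we may also assume $u\neq z$, since otherwise the statement is vacuous.

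First I would collect two elementary facts about contributions. For a queried vertex $y$, let $a^{(y)}_p=\frac1\omega\bigl|[t_y,t_y+w'(y)]\cap[p\omega,(p+1)\omega]\bigr|$ be the native load placed by $y$'s job in box $p$; then $a^{(y)}_p=0$ outside the range $p_s(y)\le p\le p_f(y)$ and $\sum_p a^{(y)}_p = w'(y)/\omega$. A straightforward induction along the path from $y$ up to $v$, using the recursive rule $c_v[p,y]\in\{c_{v_i}[p,y],0\}$, gives $c_v[p,y]\le a^{(y)}_p$ for every box $p$. Now $(u,t_u),(z,t_z)$ belonging to the schedule at $v$ means $u,z\in S(v)$, so the bottom-up identity stated right before the lemma yields $\omega\sum_p c_v[p,u]=w'(u)=\omega\sum_p a^{(u)}_p$ (and likewise for $z$); together with $c_v[p,\cdot]\le a^{(\cdot)}_p$ this forces equality box by box, i.e.\ $c_v[p,u]=a^{(u)}_p$ and $c_v[p,z]=a^{(z)}_p$ for all $p$. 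Finally, since $u$ is heavy it is aligned to boxes and $w'(u)$ is an integer multiple of $\omega$, so $[t_u,t_u+w'(u)]$ is exactly the union of boxes $p_s(u),\dots,p_f(u)$, and hence $c_v[p,u]=a^{(u)}_p=1$ for each such $p$ --- this is the only place heaviness is used, and it is what makes the argument fail for light vertices, which merely fill part of a box.

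With these in place the contradiction is short. Suppose $\bigl|[t_u,t_u+w'(u)]\cap[t_z,t_z+w'(z)]\bigr|>0$. Then $[t_z,t_z+w'(z)]$ meets some box $p^{\circ}\in\{p_s(u),\dots,p_f(u)\}$ in positive measure, so $a^{(z)}_{p^{\circ}}>0$, hence $c_v[p^{\circ},z]=a^{(z)}_{p^{\circ}}>0$. Using $s_v[p]=\sum_{y\in V(T_v)}c_v[p,y]$ and the box-capacity bound $s_v[p^{\circ}]\le 1$, we obtain
\[
1\ \ge\ s_v[p^{\circ}]\ \ge\ c_v[p^{\circ},u]+c_v[p^{\circ},z]\ =\ 1+a^{(z)}_{p^{\circ}}\ >\ 1,
\]
which is absurd; therefore the two intervals intersect in measure zero, as claimed.

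The step requiring the most care --- and the place I expect the real work --- is justifying $s_v[p^{\circ}]\le 1$ for this specific box in the schedule reconstructed by \textsc{BuildStrategy}. The cleanest route is to note that \textsc{MergeSchedules} and \textsc{InsertVertex} never store a box whose load would exceed $1$ (it is set to $+\infty$ and discarded), so every box of a reconstructed schedule has load in $[0,1]$; alternatively one invokes constraint~\eqref{con:merge-insert-box}, after checking that $p^{\circ}$ is \emph{not} the single exceptional box strictly containing $t_v+w'(v)$ --- indeed $c_v[p^{\circ},u]\neq 0$ forces $t_v+w'(v)>p^{\circ}\omega$, while $c_v[p^{\circ},u]=1$ leaves no room in box $p^{\circ}$ for $v$'s own job, so necessarily $t_v\ge(p^{\circ}+1)\omega$ and box $p^{\circ}$ falls under the first case of~\eqref{con:merge-insert-box}, which does assert membership in $[0,1]$.
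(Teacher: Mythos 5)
Your proof is correct, but it takes a genuinely more explicit route than the paper's. The paper's own proof is a three-line argument: it observes that \textsc{InsertVertex} is always invoked for a heavy vertex at a time that is a multiple of $\omega$ and that $w'(u)$ is itself a multiple of $\omega$, so the interval $[t_u,t_u+w'(u)]$ is exactly a union of full boxes, and then simply invokes Constraint~\eqref{con:fullboxsafe} to conclude. You reach the same conclusion by working through the per-box contribution arrays $c_v[p,\cdot]$: the monotonicity $c_v[p,y]\le a^{(y)}_p$ along the path, the equality forced by the bottom-up identity $\omega\sum_p c_v[p,u]=w'(u)$ when $u\in S(v)$, the fact that heaviness makes $a^{(u)}_p=1$ on every box $u$ touches, and finally the capacity bound $s_v[p]\le 1$ applied to $s_v[p]=\sum_y c_v[p,y]$. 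The shared engine is the same --- a heavy job fills its boxes completely, so capacity forbids any overlapping job --- but your version buys something concrete: it transparently covers the case where $u$ and $z$ lie in subtrees of two \emph{different} children of $v$, where \eqref{con:fullboxsafe} (which only bounds a single child's load against $v$'s own job) is not by itself the right constraint and one really needs the aggregate bound from \eqref{con:merge-insert-box}, exactly what your sum $\sum_y c_v[p,y]\le 1$ encodes. The price is that you must rule out the exceptional box (the one containing $\tinserted{v}$ strictly in its interior, where the algorithm drops the children's load and the identity $s_v[p]=\sum_y c_v[p,y]$ does not literally hold); your closing paragraph does this correctly, though it is worth noting that the step ``$c_v[p^\circ,u]=1$ leaves no room for $v$'s own job'' is itself just \eqref{con:fullboxsafe} in disguise, so you have not actually eliminated reliance on that constraint, only relocated it.
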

\begin{proof}
Note that procedure $\textsc{InsertVertex}$ is called for a heavy input vertex with its last parameter (insertion time) being a multiple of $\omega$, and the weight $w'(u)$ is a multiple of $\omega$ by definition.
Thus, the interval $[t_u,t_u+w'(u)]$ starts and ends at the beginning and end of a box, respectively.
Hence, Constraint~\eqref{con:fullboxsafe} gives the lemma.
\end{proof}
\MC{Here $u,z$ are on the same schedule? Else heavy vertices could overlap. But the notation $t_u,t_z$ are the starting time of $u,z$ on different schedule $s_u,s_z$. I think you mean the starting time of two vertices on the same schedule? May need to change notations.}
\DD{I wrote the sequence to which you refer as the lemma above and clarified --- please check, and is it ok now?}
As a consequence of Lemma~\ref{lem:heavy_zero}, if these two jobs $(u,t_u)$ and $(z,t_z)$ overlap, where $u$ and $z$ belong to the sequence assignment $S(v)$, then both of the vertices $u$ and $z$ must be light, thus:
$$
t_u > t_z -  w'(u) - \omega =  t_z + w'(z) - w'(z) -  w'(u) - \omega  \geq  t_z + w'(z) - (2c+1)\omega.
$$
We now define the measure of progress $M(x,i)$ of strategy $\cA_S$ when searching for target $x$ after $i$ queries as follows. Let $Q_i $ be the set of the first $i$ queried vertices.
Let $v_i$ be the current root of the tree, $v_i=\root{\queriedSubtree{T}{Q_i}{x}}$. Let $S_i(v) \subseteq S(v)$ be the subsequence (suffix) of $S(v)$ consisting of those vertices which have not yet been queried. Now, we define:
$$
M(x,i) =
\begin{cases}
\min_{u \in S_i(v_i)} p_s(u), & \textup{if }S_i(v_i)\neq\emptyset, \\
L,                            & \textup{if }S_i(v_i)=\emptyset.
\end{cases}
$$
We have by definition, $M(x,i) \in \{0, 1, \ldots, L-1, L\}$. We obtain the next Lemma from a following straightforward analysis of the measure of progress: every time following sequence $S(v)$ we successively complete queries with an `up' result with a total duration of at least $a$ boxes, since the queried vertices are ordered in the first place according to minimum query time, and in the second place according to query duration, the value of the minimum $p_s(u)$, for $u \in S(v)$ remaining to be queried, advances by at least $a$ boxes.
\begin{lemma} \label{lem:progress}
The measure of progress $M(x,i)$ has the following properties:
\begin{enumerate}
\item If the $(i+1)$-st query returns an `up' result, then $M(x, i+1) \geq M(x,i)$.
\item If the $(i+1)$-st query returns a `down' result, then $M(x, i+1) \geq M(x,i) - (2c+1) \omega$.
\DD{I have some problems with this. First, measure of progress is expressed in boxes, so shouldn't we have here just $2c+1$? I also think we should have a proof for this one, but I'm not sure what is the right argument; I thought that we lose one box for a down query?}
\item Suppose that between some two steps of the strategy, $i_2 > i_1$, each of the queries $(q_{i_1+1}, \ldots, q_{i_2})$ returns an `up' result, and moreover, the total cost of queries performed was at least $a\omega$, for some $a\in \N$:
$$
\sum_{j=i_1 + 1}^{i_2} w'(q_j) \geq a\omega,
$$
where $q_j = \Q_{\cA_S,j}(T,x)$. Then, $M(x, i_2) \geq M(x, i_1) +a$.
\end{enumerate}
\qed
\end{lemma}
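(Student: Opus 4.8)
\textbf{Proof plan for Lemma~\ref{lem:progress}.}
The plan is to analyze the three cases in turn, using the crucial fact (established just above the lemma via Lemma~\ref{lem:heavy_zero} and the star condition) that if two overlapping jobs $(u,t_u)$ and $(z,t_z)$ appear in the same schedule $S(v)$ then both $u$ and $z$ are light, and consequently $t_u > t_z + w'(z) - (2c+1)\omega$; equivalently, $p_s(u) \geq p_f(z) - (2c+1)$ up to rounding. The other structural ingredient is the sorting rule in line~\ref{ln:Cvsort} of Algorithm~\ref{alg:S}: the vertices of $S(v)$ are ordered by $\lfloor t_u\rfloor_\omega$, then by $\lceil t_u+w'(u)\rceil_\omega$, then by index. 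This means that when we process $S(v_i)$ in Algorithm~\ref{alg:As} and keep getting `up' replies, we always query the vertex $u$ with the current smallest $p_s(u)$ among the not-yet-queried vertices of $S(v_i)$.

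For part (1), an `up' reply does not change the current root, $v_{i+1}=v_i$, and it only removes the just-queried vertex (and possibly some others now outside the remaining tree) from the active suffix $S_{i+1}(v_i)\subseteq S_i(v_i)$. Since $p_s$ is nonnegative integer-valued and the minimum over a subsequence of the surviving vertices can only increase — here one must check that removing the minimizer $u$ cannot decrease $\min_{u'\in S_i(v_i)} p_s(u')$, which is immediate as it is a minimum over a smaller set — we get $M(x,i+1)\geq M(x,i)$ (with the convention that when the suffix becomes empty, $M=L$ is the largest possible value). For part (2), a `down' reply changes the root from $v_i$ to a child, and the strategy switches to processing $S(v_{i+1})$. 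Here $v_{i+1}$ is a descendant of $v_i$, and by consistency of the underlying schedule (condition (ii) of a consistent schedule, and the construction of $C(v)$ in lines~\ref{ln:Cvcond}--\ref{ln:Cv} of Algorithm~\ref{alg:S}) the jobs in $S(v_{i+1})$ that survive are either jobs that were also in $S(v_i)$ or jobs starting strictly after the just-queried vertex $q_{i+1}$. The just-queried vertex $q_{i+1}$ had $p_s(q_{i+1}) = M(x,i)$ (it was the minimizer), and any surviving job $u$ in $S_{i+1}(v_{i+1})$ that overlapped $q_{i+1}$ in the schedule is light-with-light-$q_{i+1}$, giving $p_s(u) \geq p_f(q_{i+1}) - (2c+1) \geq p_s(q_{i+1}) - (2c+1) = M(x,i)-(2c+1)$, while jobs not overlapping $q_{i+1}$ that start later only help; jobs inherited unchanged from $S(v_i)$ had $p_s \geq M(x,i)$ already. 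Thus $M(x,i+1)\geq M(x,i)-(2c+1)$ — and I would reconcile the ``$\omega$'' in the statement with the box-count interpretation exactly as the author's own \texttt{DD} margin note flags, i.e.\ the clean bound is $M(x,i+1)\geq M(x,i)-(2c+1)$ in box units (the statement's $(2c+1)\omega$ being the same quantity in time units under the identification used elsewhere in Section~\ref{sec:dp}).

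For part (3), I would argue by induction on $i_2 - i_1$, peeling off maximal runs of `up' queries within a single root. While the root stays equal to some $v$ and we perform `up' queries to vertices $u^{(1)}, u^{(2)}, \ldots$ of $S(v)$ in the sorted order, the key monotonicity is: because the sort is by starting box first and by completion box second, once we have completed queries whose jobs collectively span a total duration of at least $a\omega$ (i.e.\ their weights sum to at least $a\omega$) and all returned `up', every remaining unqueried vertex $u'$ of $S(v)$ must have $p_s(u') \geq M(x,i_1) + a$. Intuitively: the queried prefix occupies, within the schedule of $v$, at least $a$ full boxes worth of load starting no earlier than box $M(x,i_1)$, these boxes cannot be reused by later vertices of $S(v)$ by condition~\eqref{con:merge-insert-box} (load in a box is additive and bounded by $1$), hence later vertices start only after those boxes. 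Summing a telescoping chain of such increments across all the maximal same-root runs inside $(i_1, i_2]$ — and using part (1) to glue the runs at the root-change boundaries, noting no `down' queries occur in this window so the root never changes at all — yields $M(x,i_2) \geq M(x,i_1) + a$.

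\textbf{Main obstacle.} The delicate point is part (3): making precise the claim that ``$a$ boxes of completed `up'-load force the minimum start box of the remainder to advance by $a$.'' This requires carefully relating the total \emph{weight} $\sum w'(q_j)$ of queried vertices to the number of \emph{full boxes} they occupy in the schedule of the current root, exploiting that within one root's schedule the jobs of $S(v)$ are (after the rounding in line~\ref{ln:Cv}) box-aligned packings whose per-box loads sum to at most $1$ by~\eqref{con:merge-insert-box}, together with the sorting order guaranteeing that the earliest-starting unqueried job cannot sit in a box already saturated by the completed prefix. The secondary nuisance, already noted in the margin, is bookkeeping the units ($\omega$ vs.\ box count) consistently between the statement of the lemma and its use; I would fix the convention once at the start of the proof and state all three bounds in box units.
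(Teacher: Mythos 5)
The paper does not actually prove this lemma: it appears with a terminal \qed and is preceded only by a one-sentence informal sketch ("every time following sequence $S(v)$ we successively complete queries with an `up' result with a total duration of at least $a$ boxes, \ldots the value of the minimum $p_s(u)$ \ldots advances by at least $a$ boxes"). Moreover, the source contains an author margin note expressing exactly the same doubt you raise about part (2) — whether the decrement should be $(2c+1)$ (in box units) rather than $(2c+1)\omega$. So your plan is filling in a hole that the authors themselves left open, and your unit-reconciliation comment is well-taken: the bound as stated is dimensionally inconsistent with the definition $M(x,i)\in\{0,\ldots,L\}$, and the downstream use ("$\cost_{\cA_S}(T',x)\leq L\omega + (2c+1)\omega d_x$") only makes sense with the $(2c+1)$-box reading.

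Your part (1) argument (minimum over a shrinking suffix is monotone) is correct and is the only sensible argument. Your part (2) argument is also the intended one — the decrement is governed by the light/light overlap bound $t_u > t_z + w'(z) - (2c+1)\omega$ established just before the lemma via Lemma~\ref{lem:heavy_zero} and the star condition. One spot to tighten: the claim that every surviving job in $S(v_{i+1})\setminus S(v_i)$ starts "strictly after" $q_{i+1}$ is not quite what consistency or Algorithm~\ref{alg:S} gives; what line~\ref{ln:Cvcond} actually tells you is that such a job $u$ was \emph{suppressed} from $C(v_i)$ because some $z$ on the $v_i$–$v_{i+1}$ path has $t_z < \lfloor t_u + w'(u)\rfloor_\omega + \omega$, and combining this with the non-overlap/light-light inequality is what yields $p_s(u)\geq p_s(q_{i+1})-(2c+1)$. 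You should route the bound through that observation rather than asserting a strict ordering of starting times.

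For part (3) your instinct is right but the route via box-load accounting is heavier than necessary. The direct argument that matches the paper's sketch is: within the schedule $\s(v_i)$ of the current root, all jobs of $S(v_i)$ are pairwise non-overlapping (consistency condition (i)), and the sort in line~\ref{ln:Cvsort} is, up to ties within a box, the time order; hence if the strategy processes (possibly skipping vertices that fell out of the remaining tree, which only helps) a prefix of $S_{i_1}(v_i)$ whose total $w'$-weight is at least $a\omega$, the next unqueried vertex's start time is at least $t_{\text{first}} + a\omega$, so its box index is at least $M(x,i_1)+a$. This avoids the load bound~\eqref{con:merge-insert-box} entirely and sidesteps the issue you flag about relating weight to "boxes occupied." Aside from this stylistic simplification and the small imprecision in part (2) noted above, your plan coincides with the paper's (unwritten) intent.
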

\DD{I would suggest to remove 1. in the lemma above as it is a special case of 3.}
Since the value of $M(x,i)$ is bounded from above by $L$, we obtain from Lemma~\ref{lem:progress} that the strategy $\cA_S$ necessarily terminates when looking for target $x$ with cost at most $L\omega + (2c+1) \omega d_x$,
\[\cost_{\cA_S}(T',x) \leq  L\omega + (2c+1) \omega d_x.\]
Thus, due to the definition of $\costo$ in \eqref{eq:costo-def} and the monotonicity of of the cost of a strategy with respect to vertex weights, we obtain the following:
\begin{corollary} \label{cor:progress-concl}
For the sequence assignment computed by Algorithm~\ref{alg:S} it holds
\[\costo_{\cA_S}(T) \leq \costo_{\cA_S}(T') \leq \omega L.\]
\qed
\end{corollary}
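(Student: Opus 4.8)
The plan is to prove the two inequalities of Corollary~\ref{cor:progress-concl} separately. The right-hand one, $\costo_{\cA_S}(T')\le\omega L$, is the substantive claim and will be obtained from a potential argument built on Lemma~\ref{lem:progress}; the left-hand one, $\costo_{\cA_S}(T)\le\costo_{\cA_S}(T')$, is a routine monotonicity observation, exploiting that passing from $T$ to $T'$ only raises vertex weights and does not alter which queries are light down queries.

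For the right-hand inequality I would fix a target $x$, execute $\cA_S$ on $T'$ to obtain the query sequence $q_1,\dots,q_k$, and track the potential $M(x,i)$, which starts with $M(x,0)\ge 0$ and never exceeds $L$. Splitting the steps into maximal runs of consecutive `up' queries separated by `down' queries, Lemma~\ref{lem:progress}(3) charges a run of total weight $W$ against an advance of $M$ by at least $\lfloor W/\omega\rfloor$ boxes, so each run's cost is at most $\omega$ times its $M$-advance plus an additive slack of less than $\omega$; a `down' query on a heavy vertex pushes the remaining frontier strictly beyond the full boxes occupied by that vertex (using the schedule-consistency conditions together with Lemma~\ref{lem:heavy_zero}), hence contributes no retreat of $M$ and behaves for accounting purposes like an `up' query; and by Lemma~\ref{lem:progress}(2) each of the $d_x$ light down queries costs at most $c\omega$ (the queried vertex being light) and retreats $M$ by at most $2c+1$ boxes. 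Telescoping the net change of $M$ over the whole execution against $0\le M(x,k)\le L$ and folding in this retreat budget yields $\cost_{\cA_S}(T',x)\le L\omega+\comega\, d_x$; then by the definition \eqref{eq:costo-def} of $\costo$ we get $\costo_{\cA_S}(T',x)=\cost_{\cA_S}(T',x)-\comega\, d_x\le L\omega$, and maximizing over $x\in V$ gives $\costo_{\cA_S}(T')\le\omega L$.

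For the left-hand inequality I would note that $\cA_S$ merely executes the fixed sequence assignment $S$ and that the reply to each query is determined by the tree topology and the target alone, so $\Q_{\cA_S}(T,x)=\Q_{\cA_S}(T',x)$ as ordered sequences for every $x$; since $w(v)\le w'(v)$ for all $v$ by \eqref{eq:wprime-def}, this gives $\cost_{\cA_S}(T,x)\le\cost_{\cA_S}(T',x)$. A query is a light down query exactly when its reply is `down' (a topological condition) and the queried vertex is light, and lightness is preserved ($w(v)\le c\omega\iff w'(v)\le c\omega$), so $d_x$ is the same computed in $T$ or in $T'$; hence $\costo_{\cA_S}(T,x)=\cost_{\cA_S}(T,x)-\comega\, d_x\le\cost_{\cA_S}(T',x)-\comega\, d_x=\costo_{\cA_S}(T',x)$ for every $x$, and taking maxima yields $\costo_{\cA_S}(T)\le\costo_{\cA_S}(T')$.

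The main obstacle is the accounting in the second paragraph: reconciling the integrality in Lemma~\ref{lem:progress}(3), which only charges whole-box advances of $M$, with the fact that a run of `up' queries need not have cost an exact multiple of $\omega$, and — more delicately — confirming that the potential can only be set back by light down queries, i.e.\ that a heavy down query never undoes progress. These are precisely the points that Lemma~\ref{lem:heavy_zero} (heavy jobs fill whole boxes and never overlap another job in the same schedule) and the schedule-consistency conditions are meant to settle; once they are in hand, the remainder is bookkeeping.
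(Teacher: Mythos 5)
Your proof follows essentially the same route as the paper's: derive $\cost_{\cA_S}(T',x)\le L\omega + \comega\, d_x$ from Lemma~\ref{lem:progress} and the bound $M(x,i)\le L$, then subtract $\comega\,d_x$ via the definition~\eqref{eq:costo-def} of $\costo$, and finally pass from $T'$ to $T$ by noting the query sequence and the down/light classification are identical on both trees while $w\le w'$. The paper states the cost bound as an immediate consequence of Lemma~\ref{lem:progress} without unwinding the bookkeeping; your elaboration --- in particular, the use of Lemma~\ref{lem:heavy_zero} and schedule consistency to argue that a heavy down query cannot retreat $M$, and the identification of the $\omega$-slack per up-run as the delicate point --- fills in precisely the steps the paper leaves implicit, so the approach and the dependency structure are the same.
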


To prove Proposition~\ref{pro:dp}, it remains to show only the stability of the sequence assignment $S$.

\begin{lemma} \label{lem:stable}
The query sequence assignment $S$ obtained by Algorithm~\ref{alg:S} is stable.
\end{lemma}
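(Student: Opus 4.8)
The plan is to translate stability into a statement about the \emph{sequences} of queried vertices and prove it by induction on the size of the subtree the search is confined to. For a vertex $u$ with $t_u\neq\perp$ set $f(u)=\lfloor t_u+w'(u)\rfloor_\omega+\omega$, and $f(u)=+\infty$ if $t_u=\perp$. The test in line~\ref{ln:Cvcond} of Algorithm~\ref{alg:S} says precisely that, for every $v$ and every $u\in V(T_v)$ with $t_u\neq\perp$, we have $u\in S(v)$ if and only if $t_z\ge f(u)$ for all $z\neq u$ on the $v$--$u$ path. Three consequences will be used repeatedly: (F1) for a child $v_i$ of $v$, $S(v)\cap V(T_{v_i})=\{u\in S(v_i):t_v\ge f(u)\}$, and since all sequences are sorted by the common key of line~\ref{ln:Cvsort}, $S(v)\cap V(T_{v_i})$ is a subsequence of $S(v_i)$; (F2) if $u'$ is a proper descendant of $u$ and $u,u'\in S(v)$, then $u'$ precedes $u$ in $S(v)$, since $u'\in S(v)$ forces $t_u\ge f(u')>t_{u'}$, hence $\lfloor t_u\rfloor_\omega>\lfloor t_{u'}\rfloor_\omega$; (F3) for every $z$ with $t_z\neq\perp$ one has $f(z)=\lfloor t_z+w'(z)\rfloor_\omega+\omega>t_z+w'(z)\ge t_z$, i.e.\ a job always begins strictly before the value $f$ of its own vertex.

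Fix a target $x$. For a vertex $w$ with $x\in V(T_w)$ and a set $A\subseteq V(T_w)\setminus\{w\}$ with $x\notin V(T_a)$ for all $a\in A$, write $\mathrm{Run}(w,A)$ for the sequence of vertices queried by $\cA_S$ run with current root $w$ and query history $A$ (processing $S(w)$, skipping vertices of $A$ or outside $\queriedSubtree{T}{A}{x}$, and obeying line~\ref{alg:As:linerootchange}); and for any $w^*$ on the $w$--$x$ path write $\mathrm{Run}'(w^*,A\cap V(T_{w^*}))$ for the corresponding sequence of the modified strategy started at root $w^*$, under any admissible choices of jump-aheads. I will prove by induction on $|V(T_w)|$ that $\mathrm{Run}'(w^*,A\cap V(T_{w^*}))$ is a subsequence of $\mathrm{Run}(w,A)$; taking $w=w^*=\root{T}$ and $A=\emptyset$ gives the lemma. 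The base case $|V(T_w)|=1$ is vacuous. For the induction step it is enough to settle two sub-cases, $w^*=w$ and $w^*=r_1$ (the child of $w$ on the $x$-path): the general case follows by interposing $\mathrm{Run}(r_1,A\cap V(T_{r_1}))$ and composing, which is legitimate since $|V(T_{r_1})|<|V(T_w)|$. When $w^*=w$ both strategies coincide while processing $S(w)$ until the first query that changes the root; the root then moves (for both) to the child of the queried vertex $u_0$, the modified strategy additionally being allowed to jump further, and applying the induction hypothesis to the strictly smaller subtree hanging below $u_0$ finishes this sub-case.

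The substantive sub-case is $w^*=r_1$. Here $\cA_S$ from root $w$ processes $S(w)$ while $\cA'_S$ from root $r_1$ processes $S(r_1)$; by (F1), $S(w)\cap V(T_{r_1})$ is a subsequence of $S(r_1)$, and $S(r_1)\setminus S(w)$ consists exactly of the vertices $u\in V(T_{r_1})$ with $t_w<f(u)$. Tracking the two executions until their first root changes and invoking the induction hypothesis on the smaller subtrees they enter, the whole sub-case reduces to a single assertion: no vertex of $S(r_1)\setminus S(w)$ that $\cA'_S$ would query can have been removed from the search subtree by an $\mathtt{up}$ query that $\cA_S$ performed while at root $w$. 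This is where (F3) is essential. If such a removal occurred, the removing vertex $z$ would lie in $S(w)$ and be a proper ancestor of the queried vertex $u\in S(r_1)\setminus S(w)$; since $z\neq w$ and $z$ is an ancestor of $u\in V(T_{r_1})$, also $z\in V(T_{r_1})$, so $z$ lies on the $r_1$--$u$ path. Hence $u\in S(r_1)$ gives $t_z\ge f(u)$, $z\in S(w)$ gives $t_w\ge f(z)$, and $u\notin S(w)$ with (F1) gives $t_w<f(u)$; then $f(z)\le t_w<f(u)\le t_z$, so $f(z)<t_z$, contradicting (F3).

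The main obstacle is completing this last sub-case in full: one must additionally check, by cases on where the first $\mathtt{down}$ query of each strategy lands (on $w$ itself; on a vertex that still has the contested vertex in its subtree, so that the induction hypothesis on a proper subtree applies; or on a vertex past which the contested vertex is dropped, in which case (F1)--(F2) force it to be dropped by $\cA'_S$ as well), that every query of $\cA'_S$ is matched by a query of $\cA_S$; the bookkeeping is kept manageable by the fact that both strategies process vertices in the single global order of line~\ref{ln:Cvsort}, so the processing position only ever advances. Once $S$ is shown stable, Proposition~\ref{pro:dp} follows by combining this with Corollary~\ref{cor:progress-concl} and the preceding bounds on $\costo$.
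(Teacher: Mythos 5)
Your plan follows essentially the same route as the paper's own proof: induct along the path from the current root to the target, and compare the original run started at the true root $w$ with the modified run started further down at $r_1$, using the membership criterion of line~\ref{ln:Cvcond} and the fact that all sequences of Algorithm~\ref{alg:S} are sorted by one common key (line~\ref{ln:Cvsort}). Your facts (F1)--(F3) are correct, and the contradiction $f(z)\le t_w<f(u)\le t_z<f(z)$ is a valid and useful observation. The problem is that the sub-case $w^*=r_1$ does \emph{not} reduce to your ``single assertion''. That assertion rules out exactly one failure mode: a vertex of $S(r_1)\setminus S(w)$ being pruned by an `up' query of the original strategy at a proper ancestor belonging to $S(w)$. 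It says nothing about \emph{order preservation}, which is what stability requires. Fact (F1) gives only the inclusion $S(w)\cap V(T_{r_1})\subseteq S(r_1)$; nothing in your argument excludes a vertex $q_1\in S(r_1)\setminus S(w)$ whose sort key precedes that of some $q_2\in S(w)\cap V(T_{r_1})$ (for instance, a job ending exactly on a box boundary has $f(q_1)=t_{q_1}+w'(q_1)+\omega$, so $f(q_1)>t_w$ is compatible with $q_1$'s job finishing no later than $q_2$'s, hence with $q_1$ preceding $q_2$ in $S(r_1)$). In that situation the modified run, processing $S(r_1)$, queries $q_1$ before $q_2$, while the original run queries $q_2$ already while at root $w$ and can only reach $q_1$ after its own root has descended; the two runs would then perform $q_1,q_2$ in opposite orders, and this purely-`up'-query interleaving is not touched by your closing case analysis, which is organized around where the first `down' query lands.

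This is precisely where the paper's proof does its real work: besides the inclusion corresponding to your (F1), it argues the \emph{reverse} direction, namely that the triples inserted into $C(r_1)$ are also inserted into $C(w)$ (using the relation between $t_w$ and $t_{r_1}$ together with the condition of line~\ref{ln:Cvcond}), so that, restricted to $T_{r_1}$ and up to the possible appearance of $w$ itself, the two sorted sequences coincide; this yields a deterministic coupling of the two executions inside $T_{r_1}$, after which the induction closes the argument. Your proposal never establishes anything in that direction, and your final paragraph explicitly leaves the query-matching as unfinished bookkeeping. So, while the skeleton and several ingredients are right and match the paper, the proof is incomplete exactly at the step that makes the lemma nontrivial: showing that every query of the modified run is matched \emph{in order} by the original run across the two runs' different root transitions.
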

\begin{proof}

We perform the proof by induction. Following the definition of stability, assume that $v$ is the root of the remaining subtree \MC{For induction, I think we should not assume $v$ is the root. And we don't need $v$ to be root here, right?} \DD{I think it is OK -- we just consider an arbitrary subtree, skipping actually the (trivial) base case when $v$ is a leaf.} at some moment of executing $\cA_S$ on $T'$, and let $u$ be a vertex such that $u$ is a child of $v$ lying on the path from $v$ to the target $x$. We will show that following $S(u)$ always results in a subsequence of the sequence of queries performed by following $S(v)$.

Let $S^+(v)$ be the subsequence of vertices of $S(v)$ which lie in $T_u$, and let $S^-(v)$ be the subsequence of all remaining vertices of $S(v)$.
Note that $x$ belongs to $T_u$ and hence any query to a node in $S^-(v)$ gives an `up' reply.
\MC{I agree but what is "since the root of the tree may only change to a subtree of $T_u$"?}
\DD{I changed this sequence, without changing the point, I think -- is it ok now?}

We now observe the first (leftmost) difference $v'$ of the sequences $S^+(v)$ and $S(u)$. Suppose that before such a difference occurs, the common fragment of the sequences contains a query to any vertex $y$ on the path from $u$ to $x$. Then, the root of both trees moves to the same child of $y$, and the process continues identically regardless of the initial root of the tree.
Thus, such a vertex $y$ cannot occur prior the difference in sequences $S^+(v)$ and $S(u)$.

Next, suppose that the first difference between the two sequences consists in the appearance of vertex $v$ in sequence  $S^+(v)$, i.e., $v'=v$. Then, the root of the tree moves from $v$ to $u$, and the two processes proceed identically as required.
This also implies that $t_v>t_u$.

Finally, we observe that no other first difference between the sequences $S^+(v)$ and $S(u)$ is possible by the formulation of Algorithm~\ref{alg:S}.
In particular, if a triple $(\lfloor t_z\rfloor_{\omega},\lceil t_z+w'(z)\rceil_{\omega},z)$ is added to $C(u)$ in line~\ref{ln:Cv}, then the condition in line~\ref{ln:Cvcond} and $t_v>t_u$ imply that the triple $(\lfloor t_z\rfloor_{\omega},\lceil t_z+w'(z)\rceil_{\omega},z)$ is added also to the set $C(v)$.
Similarly, an insertion of a triple $(\lfloor t_z\rfloor_{\omega},\lceil t_z+w'(z)\rceil_{\omega},z)$ for $z\in V(T_u)$ into $C(v)$ implies that this triple also belongs to $C(u)$.
Due to the sorting performed in line~\ref{ln:Cvsort} of Algorithm~\ref{alg:S}, $S^+(v)=S(u)$.
\AK{some more details would perhaps be welcome --- but this is not a priority}
\DD{I rephrased a bit using only arguments from Algorithm~\ref{alg:S}; Mengchuan: I did not understand all your arguments; I could say more if you would write this part formally, if you think the above is not correct.}
\MC{Maybe:1. Let $(s_v,t_v)$ and $(s_u,t_u)$ be the  schedules from the schedule assignment obtained by Algorithm \ref{alg:S}, define $s^+_v$ be the schedule by extracting $c_v[p,x], x\in S^-(v)$ from $s_v$. According to the algorithm, there is no first difference other than query to $v$ is possible between $s^+_v$ and $s_u$. 2. Thus when translating from $(s_v,t_v), v\in V$ to $S(v), v\in V$, $s^+_v$ will be translated into a sequence that all vertices other than $v$ have the same order than the sequence translated from $s_u$, and these two are $S^+(v)$ and $S(u)$. }


The eventual deterministic coupling, which is obtained in all cases for the strategies starting at $v$ and $u$, extends by induction to the execution of $\cA_S$ for trees rooted at a vertex $v$ and its arbitrary descendant $u'$ lying on the path from $v$ to $x$, hence the claim holds.
\end{proof}



For the chosen value $\omega$, we can apply Lemma~\ref{lem:lb} with $\omega^* = \omega - \frac{1}{cn}$, obtaining:
$$
\left(\omega - \frac{1}{cn}\right) L = \omega^* L < \left(1+\frac{11}{c}\right) \opt T,
$$
thus, by Corollary~\ref{cor:progress-concl},
$$
\costo_{\cA_S}(T) \leq \left(1+\frac{11}{c}\right) \opt T + \frac{L}{cn} \leq \left(1+\frac{12}{c}\right) \opt T,
$$
where we took into account that trivially $L\leq n$ and $\opt T \geq 1$.
We thus, by Lemmas~\ref{lem:finalalg-correct}, and~\ref{lem:stable} obtain the claim of Proposition~\ref{pro:dp}.
}
\InConference{\enlargethispage{0.9cm}}

\section{\InJournal{Proof of Proposition~\ref{pro:cost-prime}: }Reducing the Number of Down-Queries}\label{sec:blackbox}
We start with defining a function $\labell\colon V\to \{1,\ldots,\lceil\log_2 n\rceil\}$ which in the following will be called a \emph{labeling of $T$} and the value $\labell(v)$ is called the \emph{label of $v$}.
We say that a subset of nodes $H\subseteq V$ is an \emph{extended heavy part} in $T$ if $H=\{v\}\cup H'$, where all nodes in $H'$ are heavy, no node in $H'$ has a heavy neighbor in $T$ that does not belong to $H'$ and $v$ is the parent of some node in $H'$.
Let $H_1,\ldots,H_l$ be all extended heavy parts in $T$.
Obtain a tree $T_C=(V_C,E_C)$ by contracting, in $T$, the subgraph $H_i$ into a node denoted by $h_i$ for each $i\in\{1,\ldots,l\}$.
In the tree $T_C$, we want to find its labeling $\labell'\colon V_C\to \{1,\ldots,\lceil\log_2 |V_C|\rceil\}$ that satisfies the following condition: for each two nodes $u$ and $v$ in $V_C$ with $\labell'(u)=\labell'(v)$, the path between $u$ and $v$ has a node $z$ satisfying $\labell'(z)<\labell'(u)$.
One can obtain such a labeling by a following procedure that takes a subtree $T_C'$ of $T_C$ and an integer $i$ as an input.
Find a central node $v$ in $T_C'$, set $\labell'(v)=i$ and call the procedure for each subtree $T_C''$ of $T_C'-v$ with input $T_C''$ and $i+1$.
The procedure is initially called for input $T$ and $i=1$.
We also remark that, alternatively, such a labeling can be obtained via vertex rankings \cite{IyerRV88,Schaffer89}.

Once the labeling $\labell'$ of $T_C$ is constructed, we extend it to a labeling $\labell$ of $T$ in such a way that for each node $v$ of $T$ we set $\labell(v)=\labell'(v)$ if $v\notin H_1\cup\cdots\cup H_l$ and $\labell(v)=\labell'(h_i)$ if $v\in H_i$, $i\in\{1,\ldots,l\}$.

Having the labeling $\labell$ of $T$, we are ready to define a query sequence $R(v)$ for each node $v\in V$.
The $R(v)$ contains all nodes $u$ from $T_v$ such that $\labell(u)<\labell(v)$ and each internal node $z$ of the path connecting $v$ and $u$ in $T$ satisfies $\labell(z)>\labell(u)$.
Additionally, the nodes in $R(v)$ are ordered by increasing values of their labels.
See Figure~\ref{fig:labeling} for an example.\InConference{\enlargethispage{1.4cm}}
\begin{figure}[!!!b]
\begin{center}
\includegraphics[scale=0.8]{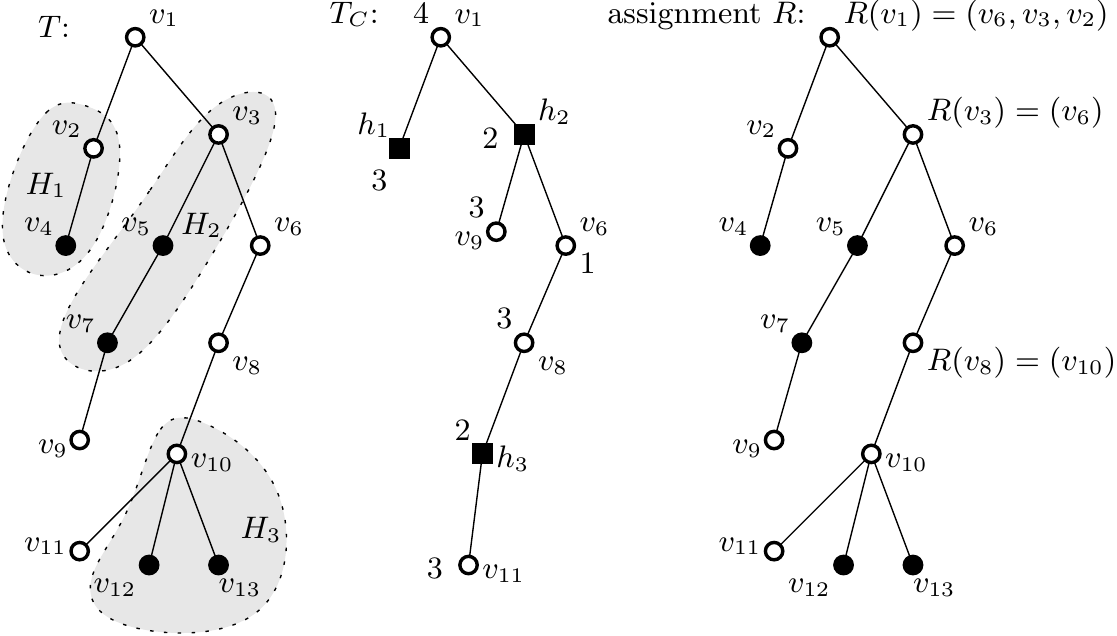}
\caption{A tree $T$ (on the left) has light vertices (marked as white nodes) and heavy ones (dark circles); also heavy extended parts are marked. The tree $T_C$ (in the middle) is used together with its labeling (integers are the labels) to obtain the sequence assignment $R$ (on the right); here we skip the sequence assignment for each node $v$ for which $R(v)=\emptyset$.}
\label{fig:labeling}
\end{center}
\end{figure}

\InJournal{
We start by making some simple observations regarding the sequence assignment $R$.
\begin{observation} \label{obs:light-in-R}
For each $v\in V$ and for each $u\in R(v)$, $\w(u)\leq c\omega$.
\eop
\end{observation}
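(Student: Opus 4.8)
The plan is to prove the statement by contradiction: suppose that for some $v\in V$ the sequence $R(v)$ contains a \emph{heavy} vertex $u$, i.e.\ $\w(u)>c\omega$, and derive a contradiction with the defining property of $R(v)$.

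First I would locate $u$ inside an extended heavy part and exploit the label inheritance from $T_C$. Since $u$ is heavy, it lies in a unique maximal connected set $H'$ of heavy vertices of $T$; let $w_0$ be the vertex of $H'$ nearest to $\root T$, so that $H'\subseteq T_{w_0}$ and, since a connected subgraph of a tree is a subtree, the downward $T$-path from $w_0$ to any vertex of $H'$ stays inside $H'$. The parent $v^{*}$ of $w_0$ (which exists once we normalise $T$ so that $\root T$ is light --- the only boundary case, handled separately below) is light by maximality of $H'$, so $H_i:=\{v^{*}\}\cup H'$ is one of the extended heavy parts, contracted in $T_C$ to the node $h_i$. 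By the way $\labell$ is extended from the labelling $\labell'$ of $T_C$, every vertex of $H_i$ --- in particular both $u$ and $v^{*}$ --- receives the common label $\labell'(h_i)$, so $\labell(v^{*})=\labell(u)$; note also $v^{*}\neq u$, because $v^{*}\notin T_{w_0}$ while $u\in H'\subseteq T_{w_0}$.

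Next I would combine the membership $u\in R(v)$ with this structure. By definition, $u\in R(v)$ gives $u\in T_v$ and $\labell(u)<\labell(v)$, hence $v$ is a \emph{proper} ancestor of $u$, and every internal vertex $z$ of the $v$--$u$ path satisfies $\labell(z)>\labell(u)$. Since all of $H_i$ carries the label $\labell'(h_i)=\labell(u)<\labell(v)$, we get $v\notin H_i$; in particular $v\neq v^{*}$ and $v\notin H'$. A proper ancestor of $u$ that lies outside $H'$ cannot lie on the $w_0$--$u$ subpath (which is contained in $H'$), so $v$ must be a proper ancestor of $w_0$, hence an ancestor of $v^{*}=\mathrm{parent}(w_0)$; together with $v\neq v^{*}$ this shows that $v^{*}$ is simultaneously a proper descendant of $v$ and a proper ancestor of $u$, i.e.\ an internal vertex of the $v$--$u$ path. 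Applying the $R(v)$-condition to $z=v^{*}$ yields $\labell(v^{*})>\labell(u)$, contradicting $\labell(v^{*})=\labell(u)$. Hence $u$ is light, $\w(u)\le c\omega$, as claimed.

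The heart of the argument --- and the step I expect to require the most care --- is the structural claim that the anchor $v^{*}$ of the extended heavy part of $u$ is forced to appear as an internal vertex of the $v$--$u$ path; everything else is a direct consequence of how $\labell$ is read off from $\labell'$ on the contraction $T_C$. The remaining subtlety, a heavy $\root T$, is disposed of either by the w.l.o.g.\ assumption that the root is light (so that every maximal heavy component has a light anchor), or, alternatively, by observing that if the heavy component of $u$ reached $\root T$ then $v$, being a proper ancestor of $u$, would itself lie in that component --- which is then contracted to a single node of $T_C$ --- and so would share $u$'s label, again contradicting $\labell(u)<\labell(v)$.
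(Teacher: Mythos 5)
Your proof is correct, and since the paper leaves this observation unproved (flagged as trivial), your argument is a faithful expansion of what is implicit in the construction. The essential idea matches the paper's intent: a heavy vertex $u$ lies in some extended heavy part $H_i$, all of whose vertices --- including the light anchor $v^{*}$ --- carry the same label $\labell'(h_i)$, and this forces a vertex with label equal to $\labell(u)$ onto the $v$--$u$ path, contradicting the definition of $R(v)$.

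One remark: the route through the topmost heavy vertex $w_0$ and the anchor $v^{*}$ is slightly longer than necessary. A shorter version observes that the \emph{immediate parent} $p$ of a heavy vertex $u$ always lies in the same extended heavy part $H_i$ (if $u$ is the top of its heavy component, $p$ is the anchor $v^{*}$; otherwise $p$ is heavy and in $H'$), so $\labell(p)=\labell(u)$. Now if $u\in R(v)$ then $v$ is a proper ancestor of $u$, so either $v=p$ (contradicting $\labell(v)>\labell(u)$) or $p$ is an internal vertex of the $v$--$u$ path (contradicting the requirement $\labell(p)>\labell(u)$). This sidesteps the need to reason about where $v$ sits relative to $w_0$. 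Your handling of the heavy-root boundary case is also correct and worth noting, since the paper's definition of extended heavy part tacitly presumes the anchor exists.
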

\begin{observation} \label{obs:visibility}
For each $v\in V$, any two nodes in $R(v)$ have different labels.
\eop
\end{observation}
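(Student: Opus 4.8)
The plan is to argue directly from the definitions of the labeling $\labell$ and of the sequence $R(v)$, reducing the claim about $T$ to the defining property of the labeling $\labell'$ on the contracted tree $T_C$. Recall that $R(v)$ consists of vertices $u \in T_v$ with $\labell(u) < \labell(v)$ such that every internal vertex $z$ on the $v$–$u$ path satisfies $\labell(z) > \labell(u)$; and recall that $\labell$ is constant on each extended heavy part $H_i$, equal to $\labell'(h_i)$, and agrees with $\labell'$ elsewhere.

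First I would suppose for contradiction that there exist two distinct vertices $u_1, u_2 \in R(v)$ with $\labell(u_1) = \labell(u_2) =: \lambda$. Both lie in $T_v$, so consider the (unique) path $P$ connecting $u_1$ and $u_2$ in $T$; let $z^*$ be the vertex on $P$ closest to $v$ (the "meeting point"). Since $u_1, u_2 \in T_v$ and $\labell(u_j) < \labell(v)$, the vertex $z^*$ lies on the $v$–$u_1$ path and on the $v$–$u_2$ path, and $z^* \neq u_j$ would force $z^*$ to be an internal vertex of one of those paths, hence $\labell(z^*) > \lambda$ by the defining property of $R(v)$ (unless $z^* = v$, but then $\labell(z^*) = \labell(v) > \lambda$ as well); and if $z^* = u_1$ (say) then $u_1$ is an internal vertex of the $v$–$u_2$ path, again giving $\labell(u_1) > \labell(u_2)$, contradicting $\labell(u_1)=\labell(u_2)$. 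In every case we conclude $\labell(z^*) > \lambda$. Now project $P$ to the contracted tree $T_C$: each maximal subpath of $P$ inside an extended heavy part $H_i$ collapses to the single vertex $h_i$, and because $\labell$ is constant on $H_i$ this collapsing does not change any label values along the path. Thus in $T_C$ we obtain a path $P_C$ between the images $\bar u_1, \bar u_2$ of $u_1, u_2$, with $\labell'$-label sequence a sub-multiset of the $\labell$-labels along $P$; in particular every vertex of $P_C$ has label $\geq \lambda$, with $\bar u_1$ and $\bar u_2$ having label exactly $\lambda$.

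If $\bar u_1 = \bar u_2$ (i.e., $u_1$ and $u_2$ lie in the same heavy part $H_i$), then $u_1$ is an internal vertex of the $v$–$u_2$ path — since $v \notin H_i$ as $v$ is the parent of a node of $H_i'$ and not heavy-contracted with it, one of $u_1,u_2$ separates $v$ from the other inside $H_i$ — giving $\labell(u_1) > \labell(u_2)$, a contradiction. Otherwise $\bar u_1 \neq \bar u_2$ are distinct vertices of $T_C$ with the same $\labell'$-label $\lambda$, so by the defining property of $\labell'$ the path $P_C$ must contain some vertex $z$ with $\labell'(z) < \lambda$; but we just showed every vertex of $P_C$ has label $\geq \lambda$, a contradiction. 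This completes the argument.

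The main obstacle I anticipate is the careful bookkeeping around the heavy-part contraction: one must check that the labels along the path are genuinely preserved under collapsing $H_i \mapsto h_i$ (which follows because $\labell\equiv\labell'(h_i)$ on $H_i$), and one must handle separately the degenerate cases where the two witnesses $u_1,u_2$ land in the same contracted node or where the meeting point $z^*$ coincides with one of them or with $v$. None of these is deep, but they need to be itemized so the reduction to the defining property of $\labell'$ is clean. Everything else is a one-line appeal to how $\labell'$ was constructed (a centroid-decomposition / vertex-ranking labeling, in which any two equally-labeled vertices are separated by a strictly smaller label).
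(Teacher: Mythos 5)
Your argument is sound and, since the paper states Observation~\ref{obs:visibility} without a proof, it plugs a real gap: you correctly reduce the claim to the separating property of $\labell'$ on $T_C$ by showing that every internal vertex of the $u_1$--$u_2$ path in $T$ has label $>\lambda$ (using the defining condition of $R(v)$ on both $v$--$u_1$ and $v$--$u_2$ subpaths), projecting that path to $T_C$, and observing the projection does not lower labels because $\labell$ is constant on each $H_i$.

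Two local slips, neither fatal: first, the parenthetical justifying $v\notin H_i$ in the $\bar u_1=\bar u_2$ case (``$v$ is the parent of a node of $H_i'$'') confuses the $v$ whose $R(v)$ you are analyzing with the distinguished parent vertex in the definition of an extended heavy part; the correct and cleaner reason is that $\labell$ is constant on $H_i$ and $\labell(u_1)<\labell(v)$, so $v$ cannot lie in $H_i$. Second, the claim ``one of $u_1,u_2$ separates $v$ from the other inside $H_i$'' need not hold --- the vertex $e$ of $H_i$ closest to $v$ (the entry point of the $v$--$u_j$ paths into $H_i$) could be distinct from both $u_1$ and $u_2$. But then $e$ is an internal vertex of the $v$--$u_1$ path with $\labell(e)=\lambda$, already violating the $R(v)$ condition, so the same contradiction is reached; you should split on whether $e\in\{u_1,u_2\}$ rather than asserting one of $u_1,u_2$ is the separator. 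With these two fixes the proof is complete.
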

\begin{observation} \label{obs:R-linear}
The sequence assignment $R$ can be computed in time $O(n\log n)$.
\eop
\end{observation}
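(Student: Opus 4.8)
The plan is to walk through the four stages of the construction of $R$ and charge the work to each. Stage~(1) is to classify every vertex of $T$ as heavy or light and to group the heavy vertices into maximal connected pieces; stage~(2) is to contract the extended heavy parts into single nodes to obtain $T_C$; stage~(3) is to run the centroid-style procedure producing the labeling $\labell'$ of $T_C$ and to extend it to $\labell$ on $T$; stage~(4) is to read off the sequences $R(v)$ from $\labell$. Stages~(1) and~(2) run in $O(n)$ time: one traversal of $T$ decides heaviness by comparing $\w(v)$ with $c\omega$, a second traversal finds the connected components of the subgraph induced by the heavy vertices, and contracting each extended heavy part into one node of $T_C$ (with $|V_C|\le n$) is a standard linear-time operation — the way a light ``hub'' vertex shared by several heavy components is handled does not affect this bound. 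For stage~(3), the recursive labeling procedure, when invoked on a subtree $T_C'$ of size $s$, computes subtree sizes and walks to a centroid in $O(s)$ time, assigns it the current label, and recurses on the components of $T_C'-v$, each of size at most $s/2$; hence the recursion has depth $O(\log n)$, the subtrees processed at a fixed recursion depth are pairwise disjoint with total size at most $|V_C|$, and the whole procedure runs in $O(|V_C|\log|V_C|)=O(n\log n)$ time and uses $O(\log n)$ distinct labels, as required. Extending $\labell'$ to $\labell$ just copies one label to each vertex of $T$ and is $O(n)$.

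The substantive step is stage~(4). I would first reformulate membership: by definition $u\in R(v)$ iff $v$ is a proper ancestor of $u$ and every node $z\ne u$ on the $v$--$u$ path satisfies $\labell(z)>\labell(u)$ (this subsumes the requirement $\labell(u)<\labell(v)$, since $v$ itself is such a $z$). Equivalently, writing $a(u)$ for the deepest ancestor of $u$ with $\labell(a(u))\le\labell(u)$ (of conventional depth $-1$ if none exists), one checks that $u\in R(v)$ exactly for those ancestors $v$ of $u$ with $\mathrm{depth}(a(u))<\mathrm{depth}(v)<\mathrm{depth}(u)$: every node $\ne u$ on the $v$--$u$ path then lies strictly below $a(u)$ and so has label $>\labell(u)$, whereas an ancestor $v$ with $\mathrm{depth}(v)\le\mathrm{depth}(a(u))$ either equals $a(u)$ (so $\labell(v)\le\labell(u)$) or has $a(u)$ as an internal node on the $v$--$u$ path, in both cases violating the condition. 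This is computed in one DFS of $T$: keep the current root-to-node path on a stack together with an array $\mathrm{last}[1\ldots\lceil\log_2 n\rceil]$, where $\mathrm{last}[\lambda]$ is the depth of the deepest ancestor on the current path with label $\lambda$ (or $-1$). On entering $u$ we read $\mathrm{depth}(a(u))=\max\{\mathrm{last}[\lambda]:1\le\lambda\le\labell(u)\}$ in $O(\log n)$ time, then move up the stack appending $u$ to $R(v)$ for every stored ancestor $v$ whose depth lies in that open interval, and finally set $\mathrm{last}[\labell(u)]\gets\mathrm{depth}(u)$ (restoring the old value when leaving $u$). A final pass sorts each $R(v)$ by label, e.g.\ by bucket sort into $\lceil\log_2 n\rceil$ buckets, at cost $O(\log n)$ per vertex.

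For the running time of stage~(4), Observation~\ref{obs:visibility} tells us all nodes in $R(v)$ carry distinct labels, so $|R(v)|\le\lceil\log_2 n\rceil$ and $\sum_{v\in V}|R(v)|=O(n\log n)$; this bounds the total cost of all ``append'' operations, and the $O(\log n)$ per-vertex overhead for evaluating $a(u)$ and for the final bucket sort also sums to $O(n\log n)$. Adding the contributions of the four stages yields the claimed $O(n\log n)$ bound. I expect the only mildly delicate points to be getting the $a(u)$-characterization of the visibility condition exactly right, and observing that the overall time rests on the per-$v$ bound $|R(v)|=O(\log n)$ from Observation~\ref{obs:visibility} rather than on any per-$u$ bound — a single descendant $u$ may well belong to $R(v)$ for super-logarithmically many ancestors $v$, so it is the ``vertex $v$'' side of the double sum that must be controlled.
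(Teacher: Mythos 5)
Your proof is correct. The paper gives no argument for this Observation (the terminal \texttt{$\backslash$eop} marks it as self-evident), so there is nothing to diverge from; what you have done is fill in the routine accounting in the natural way, and it is sound: the $a(u)$-characterization of the visibility condition is right (taking the $\max$ over $\mathrm{last}[\lambda]$ \emph{before} pushing $u$ gives the deepest \emph{proper} ancestor with label $\leq \labell(u)$), and the key bookkeeping observation — that the total append cost must be charged via $\sum_{v}|R(v)| \leq n\lceil\log_2 n\rceil$ (Observation~\ref{obs:visibility}) rather than via any per-$u$ bound, since a single $u$ can lie in $\Theta(n)$ sets $R(v)$ — is exactly the point one needs to be careful about.
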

}

By $x$ we refer to the target node in $T$.
Fix $S$ to be a stable sequence assignment in the remaining part of this section and by $R$ we refer to the sequence assignment constructed above.
Then, we fix $S^+$ to be $S^+(v)=R(v)\circ S(v)$ for each $v\in V$.
\InJournal{Denote by $U_i$ the first $i$ nodes queried by $\cA_{S^+}$ and let $C_i=\min \labell(\queriedSubtree{T}{U_{i-1}}{x})$ for each $i\geq 1$.
For brevity we denote $U_0=\emptyset$ and $C_0=0$; we also denote by $u_i$ the node in $U_i\setminus U_{i-1}$, $i\geq 1$.}
A query made by $\cA_{S^+}$ to a node that belongs to $R(v)$ for some $v\in V$ is called an \emph{$R$-query}; otherwise it is an \emph{$S$-query}.
\InJournal{
\begin{lemma} \label{lem:min-label-conn}
For each $i\geq 0$, the nodes in $\queriedSubtree{T}{U_i}{x}$ with minimum label induce a connected subtree.
\eop
\end{lemma}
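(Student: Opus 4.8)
The idea is to pull back the separation property of the labeling $\labell'$ of the contracted tree $T_C$ to the original tree $T$. Fix $i\geq 0$, write $R=\queriedSubtree{T}{U_i}{x}$ for the remaining subtree, and put $\lambda=\min_{v\in R}\labell(v)$. The goal is to show that $B=\{v\in R:\labell(v)=\lambda\}$ induces a connected subtree of $T$.

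First I would record the key combinatorial consequence of the defining property of $\labell'$: in any connected component of the forest obtained from $T_C$ by deleting every vertex whose label is strictly less than a fixed threshold $\mu$, there is at most one vertex of label exactly $\mu$. This is immediate, since two equal-label vertices of $T_C$ are separated on their connecting path by a vertex of smaller label, and in a tree that separating vertex lies in the same component, contradicting that the component retains only vertices of label at least $\mu$. Next I would transport this statement to $T$. Because every vertex of an extended heavy part $H_j$ receives the label $\labell'(h_j)$, deleting from $T$ all vertices of label smaller than $\lambda$ produces a forest whose components correspond exactly to those of the analogous forest on $T_C$, each surviving contracted vertex $h_j$ being replaced by the connected set $H_j$ (every $H_j$ is a connected subtree of $T$, as an extended heavy part). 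Consequently, for each connected component $K$ of the forest $T-\{v:\labell(v)<\lambda\}$, the set $\{v\in K:\labell(v)=\lambda\}$ is either a single ordinary vertex or exactly one full extended heavy part $H_j$; in both cases it is a connected subtree of $T$.

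Then I would conclude as follows. The subtree $R$ is connected, and by the choice of $\lambda$ it contains no vertex of label smaller than $\lambda$; hence $R$ is contained in a single component $K$ of the forest $T-\{v:\labell(v)<\lambda\}$, and therefore $B=R\cap\{v\in K:\labell(v)=\lambda\}$. Both $R$ and $\{v\in K:\labell(v)=\lambda\}$ are subtrees of the tree $T$, so their intersection is again connected, because the unique $T$-path between two vertices of a subtree stays inside that subtree; and the intersection is non-empty since $\lambda$ is attained on $R$. This proves the claim. For $i=0$ the argument specializes to $R=T$, with $\lambda$ the unique smallest label, used at the center vertex chosen first in the construction of $\labell'$.

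The step I expect to be the main obstacle is the transport of the separation property across the contraction: one must verify that contracting the extended heavy parts into single vertices neither merges two distinct components nor yields a component containing two different minimum-label blocks, so that the component of $T-\{v:\labell(v)<\lambda\}$ containing $R$ genuinely has its label-$\lambda$ vertices confined to one connected block. Once this correspondence between components of $T$ and of $T_C$ is pinned down, the rest rests only on the elementary fact that the intersection of two subtrees of a tree is connected.
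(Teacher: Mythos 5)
Your argument is correct. The paper does not supply a proof for this lemma (it is stated with a terminal $\qed$, as an assertion considered immediate from the construction), so there is no reference proof to compare against; your write-up is a valid and complete verification filling that gap.

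The two points you flag as the crux are indeed the ones that need care, and you handle them properly: (a) each extended heavy part $H_j$ is a connected, uniformly labeled subtree of $T$, so deleting all vertices of label $<\lambda$ from $T$ either removes $H_j$ entirely or leaves it intact, giving a component-by-component correspondence with the deletion in $T_C$; and (b) the separation property of $\labell'$, applied inside a component of $T_C$ that retains only labels $\geq\lambda$, leaves at most one label-$\lambda$ vertex there, which pulls back to a single ordinary vertex or a single $H_j$ in the corresponding component of $T$. Since $\queriedSubtree{T}{U_i}{x}$ is a subtree that avoids all labels $<\lambda$, it sits inside one such component, and the final step — that the intersection of two subtrees of a tree is a subtree — is standard and applied correctly.

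One small point worth making explicit if you polish this: the claim that the projection of the $T$-path from $u$ to $v$ onto $T_C$ is exactly the $T_C$-path between the images (used implicitly when transporting the separating vertex back into $T$) holds because each $H_j$ is connected, so a simple path in $T$ enters and exits any given $H_j$ at most once, and contracting yields a walk without backtracking, hence the unique $T_C$-path. This is precisely the "pinning down" you anticipated, and it does go through.
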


The next two lemmas will be used to conclude that the number of light queries performed by $\cA_{S^+}$ is bounded by $2\log_2 n$ (see Lemma~\ref{lem:R-logn}).
\begin{lemma} \label{lem:R-up}
If the $i$-th query of $\cA_{S^+}$ is an $R$-query resulting in an `up' reply, then $C_{i+1}\geq C_{i}+1$.
\end{lemma}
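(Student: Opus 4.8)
The plan is to pin down exactly which vertex is queried: I claim the $i$-th query is to the unique vertex through which the current root ``sees'' the minimum-label part of the remaining tree, and that an `up' reply severs this whole part from the component still containing $x$. Write $v=\root{\queriedSubtree{T}{U_{i-1}}{x}}$ for the current root and $T'=\queriedSubtree{T}{U_{i-1}}{x}$ for the remaining tree, so $C_i=\min\labell(T')$, and let $Z=\{w\in V(T'):\labell(w)=C_i\}$ be the (nonempty) set of minimum-label vertices of $T'$. By Lemma~\ref{lem:min-label-conn} applied with index $i-1$, $Z$ induces a connected subtree of $T'$. First I would rule out $v\in Z$: if $\labell(v)=C_i$ then every vertex of $R(v)$ has label $<C_i$ and hence lies outside $T'$, so the algorithm would skip the $R(v)$-prefix of $S^+(v)=R(v)\circ S(v)$ and the $i$-th query could not be an $R$-query; therefore $\labell(v)>C_i$.

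Next I would root $T'$ at $v$ and let $z^*$ be a vertex of $Z$ of minimum depth. Connectedness of $Z$ makes $z^*$ unique (two incomparable minimum-depth vertices of $Z$ would have their lowest common ancestor in $Z$, at strictly smaller depth) and forces $Z\subseteq V(T'_{z^*})$, where $T'_{z^*}$ is the subtree of $T'$ rooted at $z^*$; in particular every internal vertex of the $v$--$z^*$ path lies in $V(T')$, is a proper ancestor of $z^*$, hence lies outside $Z$ and has label $>C_i=\labell(z^*)$. Since the root of $T'$ is an ancestor in $T$ of every vertex of $T'$, $z^*$ is a descendant of $v$ in $T$, and together with $\labell(z^*)=C_i<\labell(v)$ this yields $z^*\in R(v)$. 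Conversely, for any $z\in Z\setminus\{z^*\}$ the $v$--$z$ path passes through $z^*$, an internal vertex of label $C_i=\labell(z)$, so $z\notin R(v)$; hence $Z\cap R(v)=\{z^*\}$.

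I then conclude $u_i=z^*$: vertices of $R(v)$ of label $<C_i$ are absent from $T'$ since $C_i=\min\labell(T')$, $R(v)$ is listed in order of increasing label, and among the label-$C_i$ vertices of $R(v)$ only $z^*$ lies in $T'$ (a label-$C_i$ vertex of $R(v)$ lying in $T'$ would belong to $Z\cap R(v)=\{z^*\}$); so the first vertex of $R(v)$ lying in $T'$ — which, as the query is an $R$-query, is $u_i$ — equals $z^*$. Finally, an `up' reply means $\queriedSubtree{T}{U_i}{x}$ is the component of $T'\setminus\{u_i\}$ containing $v$ (and $x$), namely $V(T')\setminus V(T'_{z^*})$. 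Since $Z\subseteq V(T'_{z^*})$, this component contains no vertex of label $C_i$; as all its vertices have label $\geq C_i$, they in fact have label $\geq C_i+1$, i.e.\ $C_{i+1}=\min\labell(\queriedSubtree{T}{U_i}{x})\geq C_i+1$.

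The step I expect to need the most care is the identification $u_i=z^*$. It rests on the two-sided statement that $z^*\in R(v)$ while no other minimum-label vertex of $T'$ belongs to $R(v)$, and on the remark that the label-sorted order of $R(v)$, combined with the absence in $T'$ of all strictly-lower-label vertices of $R(v)$, makes $z^*$ the first surviving vertex; getting the quantifiers right here (over which $v$, and over possible duplicate labels inside a heavy part) is the delicate part. The rest — uniqueness and location of $z^*$ from connectedness of $Z$, and the explicit description of the component left by an `up' query — is routine once Lemma~\ref{lem:min-label-conn} and the convention that the root of a subtree is an ancestor of all its vertices are invoked.
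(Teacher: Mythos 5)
Your proof is correct and follows the same overall strategy as the paper's: identify the queried vertex $u_i$ as the unique depth-minimal element $z^*$ of the set $Z$ of minimum-label vertices of $T'$ (using connectedness from Lemma~\ref{lem:min-label-conn}), then observe that an `up' reply discards the entire subtree $T'_{z^*}$, which contains all of $Z$. Where you differ is in being more explicit: the paper treats the two facts that $\labell(u_i)=C_i$ and that $u_i$ is the depth-minimal such vertex as ``by construction,'' and then describes the structure of $Z$ via extended heavy parts; you instead derive $u_i=z^*$ directly from the definition of $R(v)$ and the label-sorted ordering (including ruling out $\labell(v)=C_i$ so that the query can be an $R$-query at all), and you never need the extended-heavy-part structure — connectedness of $Z$ suffices. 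Both proofs are valid; yours is slightly more self-contained at the cost of being longer.
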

\begin{proof}
By construction, $u_i$ has the minimum label among all nodes in $\queriedSubtree{T}{U_{i-1}}{x}$.
By Lemma~\ref{lem:min-label-conn}, either $u_i$ is the unique node with label $\labell(u_i)$ in the tree $\queriedSubtree{T}{U_{i-1}}{x}$ or there are more nodes with this label and they all belong to a single extended heavy part in $\queriedSubtree{T}{U_{i-1}}{x}$ with $u_i$ being closest to the root of $\queriedSubtree{T}{U_{i-1}}{x}$.
In both cases, since the reply is `up', we obtain that $\queriedSubtree{T}{U_{i}}{x}$ has no node with label $\labell(u)$, which proves the lemma.
\end{proof}

\begin{lemma} \label{lem:R-down}
If the $i$-th query of $\cA_{S^+}$ is an $R$-query that results in a `down' reply, then one of the two cases holds:
\begin{enumerate}[label={\normalfont{(\roman*)}},ref={(\roman*)}]
 \item\label{it:heavy-child1} if $u_i$ has no heavy child that belongs to $\queriedSubtree{T}{U_{i}}{x}$, then $C_{i+1}\geq C_{i}+1$,
 \item\label{it:heavy-child2} if $u_i$ has a heavy child that belongs to $\queriedSubtree{T}{U_{i}}{x}$, then $C_{i+1}=C_{i}=\labell(u_i)$ and for each $j>i$ such that $C_i=C_{j+1}$, all queries $i+1,\ldots,j$ are $S$-queries.
\end{enumerate}
\end{lemma}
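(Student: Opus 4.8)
The plan is to start from the combinatorial description of the $i$-th query. Exactly as in the proof of Lemma~\ref{lem:R-up}, since it is an $R$-query the queried vertex $u_i$ is the vertex of minimum label in $\queriedSubtree{T}{U_{i-1}}{x}$ lying closest to its root, and in particular $\labell(u_i)=C_i$. A `down' reply means $x\in V(T_{u_i})$, so $\queriedSubtree{T}{U_i}{x}$ is rooted at the child $v_i$ of $u_i$ on the path towards $x$, and $v_i$ is the unique child of $u_i$ contained in $\queriedSubtree{T}{U_i}{x}$; hence case~\ref{it:heavy-child1} is exactly the case ``$v_i$ light'' and case~\ref{it:heavy-child2} the case ``$v_i$ heavy''. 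The single structural fact I will use repeatedly is that \emph{two adjacent vertices of $T$ carrying the same label lie in a common extended heavy part}: otherwise the edge joining them survives the contraction and becomes an edge of $T_C$ whose endpoints have equal $\labell'$-value, which is impossible since $\labell'$ is a valid ranking and that edge has no interior vertex of smaller label. Contracting a monochromatic path then gives the consequence I need: whenever a search tree $W$ is rooted at a heavy vertex $v\in H'_k$ with $\labell(v)=\min\labell(W)$, \emph{every} vertex of $W$ of label $\labell(v)$ lies in $H'_k$ (such a vertex is a descendant of $v$, so it cannot be the light anchor of $H_k$).

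For case~\ref{it:heavy-child1}, assume $v_i$ is light and suppose towards a contradiction that some $y\in V(\queriedSubtree{T}{U_i}{x})$ has $\labell(y)=C_i$. By Lemma~\ref{lem:min-label-conn} the label-$C_i$ vertices of $\queriedSubtree{T}{U_{i-1}}{x}$ induce a connected subtree, so the path from $u_i$ to $y$, which begins $u_i,v_i,\dots$, is monochromatic; in particular $\labell(v_i)=\labell(u_i)$. By the structural fact $u_i$ and $v_i$ share an extended heavy part, and since $v_i$ lies below $u_i$ it must be one of its heavy vertices, contradicting the assumption. Hence $\queriedSubtree{T}{U_i}{x}$ has no vertex of label $\le C_i$, i.e.\ $C_{i+1}\ge C_i+1$.

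For case~\ref{it:heavy-child2}, assume $v_i$ is heavy, say $v_i\in H'_k$. Then its parent $u_i$ also belongs to $H_k$: if $u_i$ is heavy it is a heavy neighbour of $v_i\in H'_k$, hence in $H'_k$; and if $u_i$ is light then $v_i$ is necessarily the topmost vertex of $H'_k$ and $u_i$ its anchor. Therefore $\labell(v_i)=\labell(u_i)=C_i$, and since $v_i$ remains in the search tree, $C_{i+1}=C_i=\labell(u_i)$. For the remaining assertion I will prove, by induction on $j\ge i+1$, that \emph{if $C_j=C_i$ then $r(\queriedSubtree{T}{U_{j-1}}{x})\in H'_k$, and consequently the $j$-th query is an $S$-query}; since $(C_j)$ is non-decreasing, $C_{j+1}=C_i$ forces $C_{i+1}=\dots=C_{j+1}=C_i$, which yields the statement for queries $i+1,\dots,j$. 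The implication ``root in $H'_k$ $\Rightarrow$ $S$-query'' is immediate, since such a root $v$ has $\labell(v)=C_i=\min\labell(\queriedSubtree{T}{U_{j-1}}{x})$, so $R(v)$ — which contains only vertices of label $<\labell(v)$ — is disjoint from the current search tree. The base case $j=i+1$ holds as $r(\queriedSubtree{T}{U_i}{x})=v_i\in H'_k$.

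For the inductive step, by the induction hypothesis (legitimate since $C_{j-1}=C_i$ by monotonicity) the $(j-1)$-st query is issued from a root $v\in H'_k$ and is an $S$-query, hence to a vertex $z\in S(v)\subseteq V(T_v)$. An `up' reply leaves the root at $v\in H'_k$; a `down' reply moves it to the child $v'$ of $z$ on the path towards $x$, and I must check that $v'\in H'_k$. Recall that the label-$C_i$ vertices of $\queriedSubtree{T}{U_{j-2}}{x}$ all lie in $H'_k$ (the structural consequence of Paragraph~1), and that $\queriedSubtree{T}{U_{j-1}}{x}\subseteq V(T_{v'})$. If $v'\notin H'_k$, then $T_{v'}$ contains no vertex of $H'_k$ at all: an $H'_k$-vertex in $T_{v'}$ would, by connectivity of $H'_k$ and the fact that $v$ is a strict ancestor of $v'$, be joined to $v\in H'_k$ by a path lying inside $H'_k$ and passing through $v'$, forcing $v'\in H'_k$. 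Then $\queriedSubtree{T}{U_{j-1}}{x}$ would contain no label-$C_i$ vertex, so $C_j=\min\labell(\queriedSubtree{T}{U_{j-1}}{x})>C_i$, contradicting $C_j=C_i$. Hence $v'\in H'_k$, completing the induction. I expect this step — certifying that the search cannot ``leave'' the heavy part $H'_k$ without raising the minimum label above $C_i$ — to be the most delicate point; the remaining ingredients (identifying $u_i$ as in Paragraph~1, and the disjointness $R(v)\cap W=\emptyset$ for a minimum-label root) are short and mirror the reasoning already used for Lemma~\ref{lem:R-up}.
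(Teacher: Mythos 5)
Your proof is correct, and it takes a genuinely different route from the paper's argument for the second assertion of case~\ref{it:heavy-child2}. The paper proves that claim by contradiction: assuming some later $R$-query to a node $z$ occurs while $C$ has not yet increased, it shows that $z$ must be a light anchor of a \emph{different} extended heavy part than $u_i$, and then exhibits, on the $T$-path from $u_i$ to $z$, a light node $u_i'$ of strictly smaller label which would have had to appear \emph{before} $u_i$ in the $R$-sequence of the root --- hence must already have been queried, contradicting $u_i'\in\queriedSubtree{T}{U_i}{x}$. You instead argue forward by induction: as long as $C_j=C_i$, the current root lies in $H'_k$, from which you deduce that $R(\cdot)$ of that root, containing only labels strictly below $C_i$, is entirely disjoint from the remaining search tree, so no $R$-query can be issued. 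The inductive step hinges on the observation that a `down' reply cannot send the root outside $H'_k$ without purging the search tree of all label-$C_i$ vertices and thus raising the minimum label. The two approaches buy slightly different things: the paper's argument is local and self-contained but quite delicate (it must locate $u_i'$ and verify it really belongs to $R(v)$ and precedes $u_i$); yours is arguably more conceptual, making the invariant ``the root stays in $H'_k$'' explicit, at the cost of needing monotonicity of $C_j$ and a careful statement of the inductive hypothesis. Your treatment of case~\ref{it:heavy-child1} (identifying $v_i$ light vs.\ heavy, and using the adjacency-in-a-common-extended-heavy-part fact plus Lemma~\ref{lem:min-label-conn}) is essentially the same as what the paper leaves implicit with ``This immediately implies~\ref{it:heavy-child1}.'' No gaps.
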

\begin{proof}
By Lemma~\ref{lem:min-label-conn}, the nodes with label $C_i$ induce a connected subtree in $\queriedSubtree{T}{U_{i-1}}{x}$.
This immediately implies \ref{it:heavy-child1}.
By construction, if $u_i$ has a heavy child $u'$ that is in $\queriedSubtree{T}{U_{i}}{x}$, then $u'=\root{\queriedSubtree{T}{U_{i}}{x}}$ and the labels of $u_i$ and $u'$ are the same.
The latter is due to the fact that both $u_i$ and $u'$ belong to the same extended heavy part in $T$.
Suppose for a contradiction that the $j$-th query (performed say on a node $z$) is an $R$-query and $C_{j+1}=C_i$, $j>i$.
This in particular implies that $z\in R(\root{\queriedSubtree{T}{U_{j-1}}{x}})$.
Due to Lemma~\ref{lem:R-up}, the reply to this query is `down'.
By \ref{it:heavy-child1}, $z$ has a heavy child that belongs to $\queriedSubtree{T}{U_j}{x}$.
By Observation~\ref{obs:light-in-R}, $z$ is a light node and therefore $z$ along with some of its descendants and $u_i$ with some of its descendants form two different extended heavy parts in $T$.
Since $z$ and $u_i$ have the same label, there exists a light node $u_i'$ in $T$ on the path between $u_i$ and $z$ with label smaller than $\labell(u_i)$.
Assume without loss of generality that no other node of this path that lies between $u_i$ and $u_i'$ has label smaller than $\labell(u_i')$.
The above-mentioned path is contained in $\queriedSubtree{T}{U_{i-1}}{x}$ since both $u_i$ and $z$ belong to this subtree.
This however implies that $u_i'\in R(v)$ because $\labell(u_i')<\labell(u_i)$ and no node on the path between $u_i$ and $u_i')$ has label smaller than $\labell(u_i')$.
Moreover, $u_i'$ precedes $u_i$ in $R(v)$ meaning that among one for the first $i$ queries, $u_i'$ must have been queried --- a contradiction with the fact that $u_i'$ belongs to $\queriedSubtree{T}{U_i}{x}$.
\end{proof}
\begin{lemma} \label{lem:R-logn}
For each target node, the total number of $R$-queries made by $\cA_{S^+}$ is at most $2\log_2 n$.
\end{lemma}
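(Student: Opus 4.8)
The plan is to recast Lemmas~\ref{lem:R-up} and~\ref{lem:R-down} as a counting argument organized around the monotonicity of the quantity $C_i=\min\labell(\queriedSubtree{T}{U_{i-1}}{x})$. First I would observe that $(C_i)$ is non-decreasing: since $U_{i-1}\subseteq U_i$ we have $\queriedSubtree{T}{U_i}{x}\subseteq\queriedSubtree{T}{U_{i-1}}{x}$, so the minimum label over the smaller vertex set is at least the minimum over the larger one, i.e.\ $C_{i+1}\ge C_i$. Moreover $C_i$ never exceeds the largest value taken by $\labell$, which by construction lies in $\{1,\dots,\lceil\log_2 n\rceil\}$ (the labels of $T$ are inherited from the centroid-style labeling $\labell'$ of $T_C$, a tree on at most $n$ vertices, so at most $\lceil\log_2 n\rceil$ distinct labels occur). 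Consequently, over the whole execution of $\cA_{S^+}$ the sequence $(C_i)$ attains at most $\lceil\log_2 n\rceil$ distinct positive values, and once it leaves a value it never returns to it; in particular each value is attained over a contiguous block of steps.

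Next I would split the $R$-queries according to whether they move $C$. Call an $R$-query \emph{progressing} if $C_{i+1}\ge C_i+1$ and \emph{stalling} otherwise. By Lemma~\ref{lem:R-up}, every $R$-query receiving an `up' reply is progressing; by Lemma~\ref{lem:R-down}\ref{it:heavy-child1}, every $R$-query receiving a `down' reply whose queried vertex $u_i$ has no heavy child in $\queriedSubtree{T}{U_i}{x}$ is progressing as well. Hence every stalling $R$-query is of the type described in Lemma~\ref{lem:R-down}\ref{it:heavy-child2}: a `down' reply to a vertex $u_i$ that does have a heavy child in the remaining subtree, in which case $C_{i+1}=C_i=\labell(u_i)$ and, crucially, no $R$-query is performed while $C$ remains at the value $C_i$.

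It then remains to bound each group. For the progressing $R$-queries: summing the nonnegative increments $C_{i+1}-C_i$ over \emph{all} queries telescopes to $C_{\mathrm{end}}-C_0=C_{\mathrm{end}}\le\lceil\log_2 n\rceil$, and each progressing $R$-query contributes at least $1$ to this sum, so there are at most $\lceil\log_2 n\rceil$ of them. For the stalling $R$-queries: if there were two stalling $R$-queries at steps $i<i'$ with $C_i=C_{i'}=\ell$, then since $C_{i'+1}=C_{i'}=\ell=C_i$, the second conclusion of Lemma~\ref{lem:R-down}\ref{it:heavy-child2} applied at step $i$ forces queries $i+1,\dots,i'$ to all be $S$-queries, contradicting that query $i'$ is an $R$-query; hence at most one stalling $R$-query occurs per attained value of $C$, i.e.\ at most $\lceil\log_2 n\rceil$ of them. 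Adding the two estimates yields at most $2\lceil\log_2 n\rceil$, i.e.\ the claimed bound of $2\log_2 n$.

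The step I expect to be the actual crux is the bound on stalling $R$-queries: the weak fact that such a query fails to advance $C$ is not enough, and one must invoke the stronger second half of Lemma~\ref{lem:R-down}\ref{it:heavy-child2} --- that \emph{no} $R$-query can happen while $C$ stays at the level of the stalling query --- and combine it with the monotonicity of $(C_i)$ so that each level corresponds to a single never-revisited interval of steps. Everything else (the telescoping count for progressing queries, the reduction of stalling queries to case~\ref{it:heavy-child2}) is bookkeeping on top of the two preceding lemmas.
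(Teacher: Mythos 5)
Your proposal is correct and takes essentially the same approach as the paper: the paper's entire proof is the one-line observation that, by Lemmas~\ref{lem:R-up} and~\ref{lem:R-down}, $C_i$ increases by at least one after any two consecutive $R$-queries, and your partition into ``progressing'' and ``stalling'' $R$-queries (with the telescoping bound for the former and the at-most-one-per-level bound, via Lemma~\ref{lem:R-down}\ref{it:heavy-child2}, for the latter) is just a careful unpacking of that same counting argument. You correctly identify the second clause of case~\ref{it:heavy-child2} as the crux, which is exactly what makes the paper's terse statement go through.
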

\begin{proof}
It follows from Lemmas~\ref{lem:R-up} and~\ref{lem:R-down} that after any two subsequent $R$-queries the value of parameter $C_i$ increases by at least $1$.
\end{proof}

The next two lemmas will be used to bound the number of $S$-queries in $S^+$ receiving a `down' reply to be at most $2\log_2 n$.
\begin{lemma} \label{lem:last-in-R}
If all nodes in $R(v)$ have been queried by $\cA_{S^+}$ after an $i$-th query for some $v\in V$ and $v$ is the root of $\queriedSubtree{T}{U_{i}}{x}$, then $\labell(v)=C_{i+1}$.
\end{lemma}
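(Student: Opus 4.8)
The plan is to prove the two inequalities $C_{i+1}\le\labell(v)$ and $C_{i+1}\ge\labell(v)$ separately. The first is immediate: since $v=\root{\queriedSubtree{T}{U_i}{x}}$, the node $v$ belongs to $\queriedSubtree{T}{U_i}{x}$, and therefore $C_{i+1}=\min\labell(\queriedSubtree{T}{U_i}{x})\le\labell(v)$. All the work goes into the reverse inequality, which I would establish by contradiction: assume there is a node $w\in\queriedSubtree{T}{U_i}{x}$ with $\labell(w)<\labell(v)$, and exhibit a node of $R(v)$ that has not yet been queried. This contradicts the hypothesis, because such a node lies in $\queriedSubtree{T}{U_i}{x}$ and hence is disjoint from $U_i$.

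To build that node, set $W=\{w\in\queriedSubtree{T}{U_i}{x}:\labell(w)<\labell(v)\}$, assumed nonempty, and among all nodes of $W$ of minimum label choose $u'$ to be one nearest to $v$ (i.e.\ with the $v$--$u'$ path in $T$ shortest). First I would note that $\queriedSubtree{T}{U_i}{x}\subseteq T_v$ because $v$ is the root of this subtree; in particular $u'\in T_v$ and $u'$ is a descendant of $v$ in $T$, so the $v$--$u'$ path in $T$ is contained in $T_v$, and by connectedness of $\queriedSubtree{T}{U_i}{x}$ it is also contained in $\queriedSubtree{T}{U_i}{x}$. The key claim is that every internal node $z$ of this path satisfies $\labell(z)>\labell(u')$: otherwise $z\in T_v$, $z\in\queriedSubtree{T}{U_i}{x}$ and $\labell(z)\le\labell(u')<\labell(v)$, so $z\in W$; minimality of $\labell(u')$ forces $\labell(z)=\labell(u')$, but $z$ is strictly closer to $v$ than $u'$, contradicting the choice of $u'$. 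Combined with $u'\in T_v$ and $\labell(u')<\labell(v)$, this is exactly the defining condition for $u'\in R(v)$.

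Finally, since $u'\in\queriedSubtree{T}{U_i}{x}$, the node $u'$ does not belong to $U_i$, i.e.\ it has not been queried among the first $i$ queries of $\cA_{S^+}$; this contradicts the assumption that all of $R(v)$ has been queried after the $i$-th query. Hence $W=\emptyset$, so $\min\labell(\queriedSubtree{T}{U_i}{x})\ge\labell(v)$, and together with the trivial direction we get $C_{i+1}=\labell(v)$.

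I expect the main obstacle to be the bookkeeping around $u'$: the double minimization (first by label, then by distance to $v$) is precisely what guarantees the "visibility" condition $\labell(z)>\labell(u')$ that appears in the definition of $R(v)$, and one must verify carefully that all the relevant nodes really lie in $\queriedSubtree{T}{U_i}{x}$ (so none of them has been queried) and in $T_v$ (so the path lies inside the subtree rooted at $v$). Note that the extended-heavy-part construction and the passage through $T_C$ play no direct role here, since the argument depends only on the numerical values of $\labell$; no case analysis on heavy versus light vertices should be needed.
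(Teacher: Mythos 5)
Your proof is correct and follows essentially the same route as the paper's: both establish $C_{i+1} \ge \labell(v)$ by contradiction, producing a vertex in $R(v) \cap \queriedSubtree{T}{U_i}{x}$ that could not have been queried. The only difference is that the paper compresses the existence of such a vertex into a single "by construction" phrase, whereas you spell it out with the double minimization (first by label, then by distance to $v$), which is the natural justification of that step.
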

\begin{proof}
Suppose for a contradiction that $\labell(v)\neq C_{i+1}$.
Since $v$ belongs to $\queriedSubtree{T}{U_i}{x}$, we have that $\labell(v)>C_{i+1}$.
Thus, by construction, there exists a light node $u$ in $\queriedSubtree{T}{U_i}{x}$ with $\labell(u)=C_{i+1}$ such that all internal nodes on the path between $v$ and $u$ have labels larger than $\labell(u)$.
Therefore, $u$ belongs to $R(v)$ because $v$ is the root of $\queriedSubtree{T}{U_i}{x}$.
This implies that $u$ has been already queried --- a contradiction with $u$ being in $\queriedSubtree{T}{U_i}{x}$.
\end{proof}

\begin{lemma} \label{lem:S-down}
If the $i$-th query of $\cA_{S^+}$ is an $S$-query performed on a light node and the reply is `down', then $\queriedSubtree{T}{U_i}{x}$ has no light node with label $C_i$.
\end{lemma}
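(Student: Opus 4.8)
The plan is to argue by contradiction, using the structural fact that an $S$-query to a node of $S(v)$ can occur only after the whole prefix $R(v)$ of $S^+(v)=R(v)\circ S(v)$ has been processed.

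First I would fix $v=\root{\queriedSubtree{T}{U_{i-1}}{x}}$, the root of the tree remaining just before the $i$-th query. Since the $i$-th query is an $S$-query, the queried node $u_i$ lies in $S(v)$, and because every node of $R(v)$ precedes every node of $S(v)$ in $S^+(v)$, the strategy has already run through all of $R(v)$ while $v$ was the root; as the remaining tree only ever shrinks, this forces every node of $R(v)$ to lie outside $\queriedSubtree{T}{U_{i-1}}{x}$. From this I would deduce $\labell(v)=C_i$: were $\labell(v)>C_i$, a node $u^{\ast}\in\queriedSubtree{T}{U_{i-1}}{x}$ of label $C_i$ at minimum distance from $v$ would have every internal node of its $v$--$u^{\ast}$ path of label exceeding $\labell(u^{\ast})$, so $u^{\ast}\in R(v)$, contradicting the previous sentence. (This step is essentially Lemma~\ref{lem:last-in-R}.)

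Next, suppose for contradiction that $\queriedSubtree{T}{U_i}{x}$ contains a light node $w$ with $\labell(w)=C_i$. Since the $i$-th reply is `down', the root moves from $v$ to a strict descendant $u_i'$ of $v$ and $\queriedSubtree{T}{U_i}{x}\subseteq T_{u_i'}$, so $v\notin\queriedSubtree{T}{U_i}{x}$ and hence $w\neq v$. Thus $v$ and $w$ are two distinct light nodes with $\labell(v)=\labell(w)=C_i$, both contained in the connected subtree $\queriedSubtree{T}{U_{i-1}}{x}$, so the entire $v$--$w$ path of $T$ lies inside it. A light node is either a vertex of $T_C$ or the unique light vertex of some extended heavy part, so distinct light nodes have distinct images in $T_C$; hence $v$ and $w$ map to two distinct vertices of $T_C$ of equal label $C_i$. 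By the defining property of the labeling $\labell'$ of $T_C$, the path between these two images in $T_C$ contains a vertex of label smaller than $C_i$, which pulls back across the contraction to a node $z$ on the $v$--$w$ path of $T$ with $\labell(z)<C_i$. But $z\in\queriedSubtree{T}{U_{i-1}}{x}$, contradicting $C_i=\min\labell(\queriedSubtree{T}{U_{i-1}}{x})$; this contradiction establishes the lemma.

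I expect the only real work to be in making the two transitions airtight: (i) from the hypothesis that the $i$-th query is an $S$-query to the claim that no node of $R(v)$ survives in the remaining tree, and then to $\labell(v)=C_i$; and (ii) the bookkeeping around the contraction $T\to T_C$ --- that each extended heavy part has exactly one light vertex (its connector), so that distinct light nodes have distinct images, and that a smaller-label witness on a path in $T_C$ lifts to a smaller-label node on the corresponding path in $T$ (directly when the witness is uncontracted, and via any node of the corresponding extended heavy part lying on that path otherwise). Notably, the hypothesis that $u_i$ itself is light is not needed for this argument.
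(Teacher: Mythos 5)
Your proof is correct and follows the same route as the paper's: first establish $\labell(v)=C_i$ (the paper cites Lemma~\ref{lem:last-in-R}, which you re-derive inline via the observation that all of $R(v)$ has already fallen outside $\queriedSubtree{T}{U_{i-1}}{x}$), then conclude that $v$ is the unique light node of label $C_i$ in $\queriedSubtree{T}{U_{i-1}}{x}$ and is removed by the `down' reply. The only difference is that you expand, via the contraction $T\to T_C$ and the defining separation property of $\labell'$, the uniqueness claim that the paper dispenses with the words ``by construction''.
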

\begin{proof}
Suppose that the $i$-th query is performed on a node $u$ in $S(v)$ for some $v\in V$.
Clearly, $v$ is the root of $\queriedSubtree{T}{U_{i-1}}{x}$.
Since the considered query is an $S$-query, all vertices in $R(v)$ have been already queried.
Thus, by Lemma~\ref{lem:last-in-R}, $\labell(v)=C_{i}$.
By construction, $v$ is the only light node in this subtree having label $C_i$.
Since the reply to the $i$-th query is `down', $v$ does not belong to $\queriedSubtree{T}{U_i}{x}$.
\end{proof}

We are now ready to prove Proposition~\ref{prop:cost-prime}.

By Lemma~\ref{lem:R-logn},}
\InConference{In the Appendix we show that,}
 in $\cA_{S^+}$, the total number of $R$-queries does not exceed $2\log_2 n$.
\InJournal{Note that since}\InConference{We also show that} $S$ is stable, \InConference{and so} for each target node $x$, the $S$-queries performed by $\cA_{S^+}$ are a subsequence of the queries performed by $\cA_{S}$.
Therefore, the potentially additional queries made by $\cA_{S^+}$ with respect to $\cA_{S}$ are $R$-queries.
\InJournal{By Observation~\ref{obs:light-in-R}, each $R$-query is made on a light node.
By definition of function $\costo$ and Observation~\ref{obs:light-in-R}, any $R$-query increases the value of $\costo$ of $\cA_{S^+}$ with respect to the value of $\costo$ of $\cA_{S}$ by at most $\comega$.}
\InConference{We then formally show that each $R$-query is made on a light node and that any $R$-query increases the value of $\costo$ of $\cA_{S^+}$ with respect to the value of $\costo$ of $\cA_{S}$ by at most $\comega$.} Hence we have:
\mathx
\costo_{\cA_{S^+}}(T) \leq \costo_{\cA_{S}}(T) + 2\comega \log_2 n.
\mathx

\InJournal{By Lemmas~\ref{lem:R-down} and~\ref{lem:S-down},}
\InConference{Moreover, we show in the Appendix that}
the total number of queries in strategy $\cA_{S^+}$ to light nodes receiving `down' replies can be likewise bounded by $2\log_2 n$.
Since each such query introduces a rounding difference of at most $\comega$ when comparing cost functions $\cost$ and $\costo$, we thus obtain:
\mathx
\cost_{\cA_{S^+}}(T) \leq \costo_{\cA_{S^+}}(T) + 2\comega \log_2 n.
\mathx

Combining the above observations gives the claim of the Proposition.

\InJournal{
\section{Proof of Theorem~\ref{thm:recursive-algo}: A \texorpdfstring{$O(\sqrt{\log n})$}{O(sqrt(log n))}-Approximation Algorithm}
\label{sec:algosqrt}
We start with some notation.
Given a tree $T=(V,E,w)$ and a fixed value of parameter $\alpha$, we find a subtree $T^*=(V^*,E^*)$ of the input tree $T$, called an \emph{$\alpha$-separating tree}, that satisfies: $\root{T^*}=\root{T}$ and each connected component of $T\setminus V^*$ has at most $\alpha$ vertices.
An $\alpha$-separating tree $T^*$ is \emph{minimal} if the removal of any leaf from $T^*$ gives an induced tree that is not an $\alpha$-separating tree.
Then, for a target node $x\in V$, we introduce a recursive strategy $\cR$ that takes the following steps:
\begin{enumerate}
\item $\cR$ first applies strategy $\cA^*$ restricted to tree $T^*$ to locate the node $x'$ of $T^*$ which is closest to the target $x$.
\item Then, $\cR$ queries $x'$, which either completes the search in case when $x'$ is the target or provides a neighbor $x''$ of $x'$ that is closer to the target than $x'$.
\item If $x'$ is not the target, then the strategy calls itself recursively on the subtree $T_{x''}$ of $T\setminus\{x'\}$ containing $x$. The latter strategy for $T_{x''}$ is denoted by $\cR_{x''}$.
(Note that $T_{x''}$ is a connected component in $T\setminus V^*$.)
\end{enumerate}
Such a search strategy $\cR$ obtained from $\cA^*$ and strategies $\cR_{\root{T'}}$ (constructed recursively) for subtrees $T'$ in $T\setminus V^*$ is called a \emph{$(\cA^*,\{\cR_{\root{T'}}\stX T' \in \mathcal{C}(T \setminus V^*)\})$-strategy}, where $\mathcal{C}(T \setminus V^*)$ is the set of connected components (subtrees) in $T \setminus V^*$.

The following bound on the cost of the strategy $\cR$ follows directly from the construction:
\begin{lemma} \label{lem:recursion-cost}
For a $(\cA^*,\{\cR_{\root{T'}}\stX T' \in \mathcal{C}(T \setminus V^*)\})$-strategy $\cR$ for $T$ it holds
$$
\cost_\cR (T) \leq \cost_{\cA^*} (T^*) + \max_{x' \in V^*} w(x') + \max_{T' \in \mathcal{C}(T \setminus V^*)} \cost_{\cR_{\root{T'}}} (T').
$$
\qed
\end{lemma}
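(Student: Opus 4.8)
The plan is to bound $\cost_\cR(T,x)$ for an arbitrary target $x\in V$ by following, one after another, the three steps in the definition of a $(\cA^*,\{\cR_{\root{T'}}\stX T' \in \mathcal{C}(T \setminus V^*)\})$-strategy, and then to take the maximum over $x$. The additive form of the claimed bound mirrors these three steps one-to-one, which is exactly why the estimate ``follows directly from the construction''.

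First I would isolate the single structural fact that makes Step~1 well behaved. Since the $\alpha$-separating tree $T^*$ is a connected subtree of $T$ with $\root{T^*}=\root{T}$, for every $v\in V^*$ the unique $v$--$x$ path in $T$ stays inside $T^*$ until it reaches $x'$, the vertex of $T^*$ closest to $x$, and then leaves $T^*$ and never returns; in particular $x'$ lies on that path. Consequently, when $\cR$ runs $\cA^*$ restricted to $T^*$ in Step~1 (executing the decision tree of $\cA^*$ but issuing the queries to $T$), a query to a vertex $v\in V^*$ with $v\neq x'$ returns a neighbour of $v$ that again lies in $V^*$ — precisely the answer $\cA^*$ would receive on $T^*$ with hidden target $x'$ — whereas a query to $x'$ returns either \emph{true} or a neighbour outside $V^*$, which in either case reveals that $x'$ has been located and which component of $T\setminus V^*$ is relevant. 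Hence the queries made in Step~1 and their answers are consistent with an execution of $\cA^*$ on $T^*$ against target $x'$, so Step~1 terminates having identified $x'$ at cost at most $\cost_{\cA^*}(T^*,x')\le\cost_{\cA^*}(T^*)$.

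Next I would account for Steps~2 and~3. Step~2 is the single query to $x'$, costing $w(x')\le\max_{v\in V^*}w(v)$; if $x=x'$ the search terminates here. Otherwise the reply is the neighbour $x''$ of $x'$ on the path towards $x$, and by the structural fact $x''\notin V^*$, so $x''=\root{T'}$ for the component $T'\in\mathcal{C}(T\setminus V^*)$ containing $x$ and $T'=T_{x''}$. In Step~3, $\cR$ runs $\cR_{\root{T'}}$ on $T'$ to locate $x$, at cost at most $\cost_{\cR_{\root{T'}}}(T')\le\max_{T''\in\mathcal{C}(T\setminus V^*)}\cost_{\cR_{\root{T''}}}(T'')$. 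Summing the three bounds (the omitted terms being non-negative, so the estimate also holds in the case $x=x'$) gives, for every target $x$,
$$\cost_\cR(T,x)\le\cost_{\cA^*}(T^*)+\max_{v\in V^*}w(v)+\max_{T''\in\mathcal{C}(T\setminus V^*)}\cost_{\cR_{\root{T''}}}(T''),$$
and taking the maximum over $x$ yields the lemma.

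The only step carrying content is the structural observation in the second paragraph, which must be made precise enough to guarantee simultaneously that $\cA^*$ restricted to $T^*$ is a legitimate (partial) run of $\cA^*$ on $T^*$ — so that its cost is genuinely controlled by $\cost_{\cA^*}(T^*)$ — and that this run terminates having pinned down both $x'$ and the component of $T\setminus V^*$ hanging off it. Once that is settled, what remains is a routine three-term decomposition of the worst-case cost, with nothing further to verify.
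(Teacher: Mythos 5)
Your argument is correct and is precisely the three-term decomposition the paper treats as immediate from the construction (the lemma is stated with no proof in the source). Your structural observation that for every $v\in V^*$ the $v$–$x$ path stays inside the connected subtree $T^*$ until it reaches $x'$, so queries in Step~1 are consistent with an execution of $\cA^*$ on $T^*$ against target $x'$, is exactly the content being left implicit.
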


We now formally describe and analyze the aforementioned contractions of subpaths in a tree.
A maximal path with more than one node in a tree $T$ that consists only of vertices that have degree two in $T$ is called a \emph{long chain in $T$}.
For each long chain $P$, contract it into a single node $v_P$ with weight $\min_{u\in V(P)}w(u)$, obtaining a tree $\longchains{T}$.
In what follows, the tree $\longchains{T}$ is called a \emph{chain-contraction of} $T$.

Our first step is a remark that, at the cost of losing a multiplicative constant in the final approximation ratio, we may restrict ourselves to trees that have no long chains.
This is due to the following observation.
\begin{lemma} \label{lem:no-long-chains}
Let $T$ be a tree.
Given a $p$-approximate search strategy for $\longchains{T}$, a $(p+1)$-approximate search strategy for $T$ can be computed in polynomial time.
\end{lemma}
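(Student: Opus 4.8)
I want to show that given a $p$-approximate search strategy for the chain-contracted tree $\longchains{T}$, I can construct a $(p+1)$-approximate strategy for $T$ in polynomial time. The idea is that a long chain, being just a weighted path, can be searched almost for free relative to the rest of the tree: the extra cost incurred by having to ``expand'' a contracted node $v_P$ back into the path $P$ is at most $\opt{T}$, which adds only an additive $1$ to the approximation ratio after accounting for $\cost_{\cA}(T) \leq p \cdot \opt{\longchains{T}} \leq p\cdot\opt{T}$ (using that $\opt{\longchains{T}} \leq \opt{T}$ by monotonicity, since contracting a chain to its minimum weight only decreases weights).

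\textbf{Key steps.} First I would establish that $\opt{\longchains{T}} \leq \opt{T}$: the chain-contraction replaces each long chain by a single node whose weight is the minimum vertex weight on the chain, and any search strategy for $T$ induces one for $\longchains{T}$ of no greater cost (when the strategy would query a chain vertex, the contracted strategy queries $v_P$ instead, which is cheaper and gives at least as much information, since a single query on a path locating the target's side is subsumed by the structure). Second, given the $p$-approximate strategy $\cA$ for $\longchains{T}$, I build a strategy $\cA'$ for $T$: run $\cA$ on $\longchains{T}$, treating a query to a contracted node $v_P$ as follows --- when $\cA$ would query $v_P$, instead perform the classical (unweighted-style) binary search \emph{within the path $P$} to locate the target's position relative to $P$, i.e. determine which component of $T \setminus V(P)$ (or which vertex of $P$) contains $x$. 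Since $P$ is a path, this takes at most $\lceil \log_2 |P| \rceil$ queries, each of cost at most $1$ (weights normalized), hence cost at most $\lceil \log_2 n\rceil \leq \opt{T}$ by Observation~\ref{obs:opt-bounds}. Crucially, along any root-to-target path in $T$, the strategy enters at most one long chain ``on its way down'' before either finding the target inside it or exiting it into one fixed subtree, so this extra $\opt{T}$ cost is incurred at most once per root-to-leaf trajectory, not once per chain.

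\textbf{The worst-case accounting.} For a fixed target $x$, the queries made by $\cA'$ split into: (a) queries corresponding directly to queries of $\cA$ on non-contracted nodes or on contracted nodes whose chain does not contain $x$ and is exited at its endpoint --- here I should argue the cost charged is at most what $\cA$ pays on $\longchains{T}$ for the corresponding run; and (b) the at-most-one ``chain resolution'' episode, costing at most $\lceil\log_2 n\rceil \le \opt T$. The subtle point is that resolving a contracted node $v_P$ by binary search inside $P$ may cost more than the single weight $w(v_P) = \min_{u\in V(P)} w(u)$ that $\cA$ paid --- but this only happens for the \emph{one} chain on the target's trajectory that the search actually descends into, and every other contracted node on the trajectory is exited through its far endpoint in a single effective step whose cost I can bound by the corresponding cost in $\longchains{T}$. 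Summing: $\cost_{\cA'}(T, x) \leq \cost_{\cA}(\longchains{T}) + \lceil \log_2 n \rceil \leq p\cdot\opt{\longchains{T}} + \opt{T} \leq (p+1)\opt{T}$, giving the claim. Polynomial running time is immediate since $\longchains{T}$ is computable in linear time and the binary search within a path is trivially polynomial.

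\textbf{Main obstacle.} The delicate part is the bookkeeping in step (a): making precise the claim that queries of $\cA'$ associated with contracted nodes \emph{not} containing the target, and which are exited at a chain endpoint, can be charged against $\cA$'s own cost on $\longchains{T}$ without accumulating an extra additive term per chain. One must verify that after resolving such a chain, the search tree in $T$ that $\cA'$ faces corresponds exactly to the search tree in $\longchains{T}$ that $\cA$ faces after querying $v_P$, so the coupling between the two executions is maintained and the only ``overpayment'' is the single descended-into chain. This is where I would need to be careful about the precise query model (an `up'/`down' reply in $T$ at a chain vertex versus the single reply $\cA$ gets at $v_P$) and possibly invoke a stability-type argument for how the contracted strategy is unrolled.
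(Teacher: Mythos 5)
Your high-level plan (mimic the contracted strategy, pay an extra $\opt T$ to resolve the target inside a chain) is the right one, but the specific mechanism you propose has two serious gaps, and you partially flag the first yourself.

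\textbf{First gap: resolving every contracted node by an internal binary search is too expensive.} When $\cA$ queries $v_P$ on $\longchains{T}$, it pays only $w(v_P)=\min_{u\in P}w(u)$. You propose to replace this by an unweighted-style binary search inside $P$, which makes up to $\lceil\log_2|P|\rceil$ queries each of weight up to $1$, so up to $\lceil\log_2 n\rceil$ per chain. There is no way to charge this against the $w(v_P)$ that $\cA$ paid, and multiple contracted nodes can be queried on the way to the target, so these episodes accumulate; there is no ``only one chain is descended into'' guarantee of the kind you assert. The paper avoids this entirely: when $\cA'$ queries $v_P$, the mimicking strategy queries the \emph{single} minimum-weight vertex of $P$, paying \emph{exactly} $w(v_P)$. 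Thus stage~1 costs at most $\cost_{\cA'}(\longchains{T})$ with no slack. After stage~1 the target is located either exactly or to a subpath $P'$ of a single chain, and only then is a (single) path search performed.

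\textbf{Second gap: the inequality $\lceil\log_2 n\rceil\le\opt T$ is reversed.} Observation~\ref{obs:opt-bounds} gives $\opt T\le\lceil\log_2 n\rceil$, not the other way around, so unweighted binary search inside a path is \emph{not} bounded by $\opt T$. The paper instead runs the $O(n^2)$-time \emph{optimal weighted path search} algorithm of \cite{CicaleseJLV12} on $P'$; its cost is $\opt{P'}\le\opt T$ by monotonicity of $\opt{\cdot}$ under taking subgraphs, which is the bound you actually need. This is the essential ingredient your proof is missing: the ``pay $\opt T$ once'' step must be an optimal weighted path search on the residual subpath, performed once at the end, rather than repeated unweighted binary searches whose cost you cannot bound by $\opt T$.
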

\begin{proof}
Let $\cA'$ be a search strategy for $\longchains{T}$.
We obtain a search strategy $\cA$ for $T$ in two stages.
In the first stage we `mimic' the behavior of $\cA'$: (i) if $\cA'$ queries a node $v$ that also belongs to $T$, then $\cA$ also queries $v$; (ii) if $\cA'$ queries a node $v_P$ that corresponds to some long chain $P$ in $T$, then $\cA$ queries, in $T$, a node with minimum weight in $P$.
Note that after the first stage, the search strategy either located the target or determined that the target belongs to a subpath $P'$ of some long chain $P$ of $T$.
Moreover, the total cost of all queries performed in the first stage is at most $\cost_{\cA'}(\longchains{T})$.

Then, in the second stage we compute (in $O(n^2)$-time) an optimal search strategy $\cA_{P'}$ for $P'$ \cite{CicaleseJLV12}.
Due to the monotonicity of the cost over taking subgraphs, $\cost_{\cA_{P'}}(P')=\opt{P'}\leq\opt{T}$.

Both stages provide us with a search strategy for $T$ with cost at most $\cost_{\cA'}(\longchains{T})+\opt{T}$.
Since, $\opt{\longchains{T}}\leq\opt{T}$ and $\cost_{\cA'}(\longchains{T})\leq p\cdot\opt{\longchains{T}}$, the lemma follows.
\end{proof}
Note that it is straightforward to verify whether any vertex $v$ of $T$ is a leaf in the $\alpha$-separating tree of $T$ and hence we obtain the following.
\begin{observation} \label{obs:seperating-set}
Given a tree $T$ with no long chain and $\alpha$, a minimal $\alpha$-separating tree of $T$ can be computed in polynomial-time.
\qed
\end{observation}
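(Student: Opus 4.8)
The plan is to give an explicit description of a minimal $\alpha$-separating tree in terms of subtree sizes, which can then be read off in linear time. Root $T$ at $\root{T}$ and, for each $v\in V$, compute $|T_v|$, the number of vertices of the subtree $T_v$; all of these values are obtained in one bottom-up pass. Let
\[
V^* \;:=\; \{\root{T}\}\cup\{v\in V : |T_v|>\alpha\}.
\]
Since $\{v : |T_v|>\alpha\}$ is closed under taking ancestors (if $u$ is an ancestor of $v$ then $|T_u|\ge|T_v|$) and $V^*$ additionally contains $\root{T}$, the set $V^*$ induces a subtree of $T$ rooted at $\root{T}$; set $T^*:=T[V^*]$. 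I will show that $T^*$ is a minimal $\alpha$-separating tree, which together with the fact that $V^*$ is computed in $O(n)$ time proves the observation.

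To see that $T^*$ is $\alpha$-separating, take any connected component $C$ of $T\setminus V^*$ and let $w$ be its vertex closest to the root; such a $w$ is unique, and $w\neq\root{T}$ because $\root{T}\in V^*$. The parent $p$ of $w$ cannot lie in $C$ (it would be closer to the root), hence $p\in V^*$. Since $w\notin V^*$ we have $|T_w|\le\alpha$, so every descendant of $w$ also has subtree size at most $\alpha$ and lies outside $V^*$; as $p$ separates $V(T_w)$ from the rest of $T$, this forces $C=V(T_w)$, and hence $|C|=|T_w|\le\alpha$, as required.

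For minimality, note that any leaf $v\neq\root{T}$ of $T^*$ satisfies $|T_v|>\alpha$ while $|T_c|\le\alpha$ for every child $c$ of $v$ --- the membership test the paper remarks is trivial to perform. Its parent $p$ then has $|T_p|\ge|T_v|>\alpha$, so $p\in V^*$, and deleting $v$ reattaches the whole subtree $T_v$ to the component of $p$, producing a component of size $|T_v|>\alpha$; thus $T^*\setminus\{v\}$ is no longer $\alpha$-separating. Since the root cannot be removed by the definition, $T^*$ is indeed minimal.

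I expect no real obstacle here; the only point worth a moment's care is the equivalence between the ``minimal'' condition, which is phrased via leaf deletion, and the explicit subtree-size description. This becomes transparent if one thinks instead of a greedy leaf-peeling process started from $T$ itself: a vertex can only become a deletable leaf after all of its children have already been removed but while its parent is still present, so the component created by its deletion is exactly $T_v$; hence a vertex is ever deleted if and only if $|T_v|\le\alpha$, the process terminates after at most $n$ deletions, each checkable in polynomial time, and its unique outcome is the tree $T^*$ above. (The hypothesis that $T$ has no long chain is not used in this observation; it is inherited from the surrounding discussion, where chain-contraction is applied to $T^*$ afterwards.)
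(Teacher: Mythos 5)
Your proof is correct and matches the paper's approach, which likewise rests on the easy membership test (a vertex $v\ne\root{T}$ belongs to the minimal $\alpha$-separating tree iff $|T_v|>\alpha$); your explicit construction and the greedy leaf-peeling you sketch at the end are two equivalent ways of reading off this tree in polynomial time, exactly as the paper's one-line remark suggests. One small wording slip in your minimality argument: deleting a leaf $v$ from $V^*$ does not ``reattach $T_v$ to the component of $p$'' (since $p$ stays in $V^*\setminus\{v\}$ and thus lies in no component of the complement); rather, $V(T_v)$ itself becomes a new component, and your conclusion that this component has size $|T_v|>\alpha$ is nevertheless correct.
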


Using Lemma~\ref{lem:no-long-chains} and choosing appropriately the value of $\alpha$, one can obtain an $\alpha$-separating tree of $T$ having at most $t = 2^{O(\sqrt{\log n})}$ vertices.

\begin{lemma} \label{lem:minimal-separating}
Let $T$ be any tree and let $\alpha$ be selected arbitrarily.
If $T^*$ is a minimal $\alpha$-separating tree of $T$, then $\longchains{T^*}$ has at most $4\left\lceil\frac{n}{\alpha}\right\rceil$ vertices.
\end{lemma}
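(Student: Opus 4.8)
The plan is to bound the number of leaves of $T^{*}$ using the minimality hypothesis, and then bound $|V(\longchains{T^{*}})|$ in terms of that number of leaves. For the first part, fix any leaf $x$ of $T^{*}$ with $x \neq \root{T}$. Since $V^{*}$ induces a connected subtree of $T$ containing $\root{T}$, it is closed under taking parents; as $x$ is a leaf of $T^{*}$, the parent of $x$ is therefore its only neighbour in $V^{*}$, so every child of $x$ in $T$ lies outside $V^{*}$ and its whole subtree is a connected component of $T\setminus V^{*}$ hanging below $x$. Deleting $x$ from $V^{*}$ merges exactly these dangling subtrees together with $x$ into one component of $T\setminus(V^{*}\setminus\{x\})$, of size equal to the number of vertices of the subtree $T_{x}$ of $T$ rooted at $x$; minimality of $T^{*}$ forces $|V(T_{x})|>\alpha$, hence the dangling subtrees below $x$ contain at least $\alpha$ vertices in total. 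These vertex sets are pairwise disjoint over the leaves $x\neq\root{T}$ and are contained in $V\setminus V^{*}$, so $T^{*}$ has at most $\lceil n/\alpha\rceil-1$ leaves other than $\root{T}$, and hence at most $\lceil n/\alpha\rceil$ leaves in total.

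For the second part I would first show that $\longchains{T^{*}}$ has no two adjacent degree-two vertices. A degree-two vertex of $\longchains{T^{*}}$ is either a degree-two vertex of $T^{*}$ whose two $T^{*}$-neighbours both have degree $\neq 2$, or a node $v_{P}$ obtained by contracting a long chain $P$, whose two neighbours in $\longchains{T^{*}}$ are the vertices attached to the two endpoints of $P$ and hence have degree $\neq 2$ in $T^{*}$ by maximality of $P$; since contracting long chains does not change the degree of any vertex of degree $\neq 2$, in either case every neighbour of such a vertex keeps degree $\neq 2$. Then I would invoke the elementary fact that a tree $G$ with $\ell$ leaves and no two adjacent degree-two vertices has at most $4\ell-5$ vertices: writing $n_{1},n_{2},n_{\geq 3}$ for the numbers of vertices of degree $1$, $2$, and $\geq 3$, the identity $\sum_{\deg v\geq 3}(\deg v-2)=n_{1}-2$ gives $n_{\geq 3}\leq n_{1}-2$, while counting the edges incident to a degree-two vertex (each has its other endpoint a leaf or a vertex of degree $\geq 3$, and no edge is counted twice since degree-two vertices are non-adjacent) gives $2n_{2}\leq n_{1}+\sum_{\deg v\geq 3}\deg v\leq 4n_{1}-6$, whence $|V(G)|=n_{1}+n_{2}+n_{\geq 3}\leq 4n_{1}-5$. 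Combining the two parts, $|V(\longchains{T^{*}})|\leq 4\lceil n/\alpha\rceil-5<4\lceil n/\alpha\rceil$, after disposing of the degenerate small cases (for instance $|V^{*}|\leq 2$, or $\longchains{T^{*}}$ with only a constant number of vertices) directly.

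\textbf{Main obstacle.} The most delicate point is the structural claim in the second part: one must check carefully that chain-contraction neither creates a degree-two vertex adjacent to a pre-existing one, nor places two contracted-chain nodes next to each other — the latter being impossible because two long chains sharing an edge would form one longer chain, contradicting maximality. The leaf-counting of the first part is conceptually the heart of the argument, but becomes short once the effect of removing a leaf of $T^{*}$ on the component structure of $T\setminus V^{*}$ is made precise.
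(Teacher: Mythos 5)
Your proof follows the same overall strategy as the paper: first bound the number of leaves of $T^{*}$ by roughly $\lceil n/\alpha\rceil$ using minimality and the disjointness of the subtrees hanging off distinct leaves, then bound $|V(\longchains{T^{*}})|$ by a degree count exploiting the absence of long chains. Both halves of your argument are correct, and the ``main obstacle'' you flag (checking that contraction cannot create two adjacent degree-two vertices) is handled properly.

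Where you genuinely diverge from the paper is in the second, degree-counting half, and your version is actually the more careful one. The paper's proof asserts $\card{\{v : \deg(v)=2\}} \leq \card{\{v : \deg(v)>2\}}+1$, justified by the observation that every degree-two vertex has a parent of degree at least $3$. But that justification does not yield the claimed inequality: the map sending a degree-two vertex to its parent need not be injective, since a single vertex of degree $\geq 3$ can have many degree-two children. For instance a spider consisting of a center of degree $k\geq 3$ with $k$ pendant paths of length two has no long chain, $k$ degree-two vertices, and only one vertex of degree $> 2$, so the paper's displayed inequality fails outright for $k\geq 3$ (the lemma's final conclusion is still true in that example, it is only the intermediate step that is wrong). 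Your replacement argument sidesteps this: from the handshake identity you get $n_{\geq 3}\leq n_1-2$ and $\sum_{\deg v\geq 3}\deg v \leq 3n_1-6$, and then, since each of the $2n_2$ edge-endpoints at degree-two vertices lands on a vertex of degree $1$ or $\geq 3$ (with no double-counting precisely because no two degree-two vertices are adjacent), you get $2n_2\leq n_1+\sum_{\deg v\geq 3}\deg v\leq 4n_1-6$, hence $|V|\leq 4n_1-5$. This is a clean and correct derivation that recovers, and in fact slightly sharpens, the bound $4\lceil n/\alpha\rceil$. So your proof is not merely an alternative route --- it also repairs a gap in the published argument.
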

\begin{proof}
By definition, for each leaf $v$ of $T^*$, the subtree $T_v$ has more than $\alpha$ nodes.
Since these trees are node-disjoint, we obtain that there are at most $\lceil \frac{n}{\alpha} \rceil$ leaves in $T^*$.
We denote the leaves of $T^*$ by $v_1,v_2,\ldots,v_{l}$, $l\leq \left\lceil \frac{n}{\alpha} \right\rceil$; note that $\longchains{T^*}$ has the same leaves as $T^*$.

Let $V(\longchains{T^*})$ be the vertex set of $\longchains{T^*}$.
Then, we claim that $\card{V(\longchains{T^*})}=O(\lceil \frac{n}{\alpha} \rceil)$ by counting the number of nodes with different degrees in $\longchains{T^*}$.
Clearly, we have $\card{\{v\in V(\longchains{T^*})\stX \deg(v)>2\}} \leq \lceil \frac{n}{\alpha} \rceil$.
Since the tree $\longchains{T^*}$ contains no long chains, the parent (if exists) of every node with degree exactly $2$ must have degree at least $3$.
Thus,
$$\card{\{v\in V(\longchains{T^*})\stX \deg(v)=2\}} \leq \card{\{v\in V(\longchains{T^*})\stX \deg(v)>2\}}+1 \leq \left\lceil \frac{n}{\alpha} \right\rceil+1.$$
Hence we get $\card{V(\longchains{T^*})}\leq 4 \left\lceil \frac{n}{\alpha} \right\rceil$.
\end{proof}

With Lemmas~\ref{lem:no-long-chains},~\ref{lem:minimal-separating} and Observation~\ref{obs:seperating-set} we are now ready to obtain the efficient recursive decomposition of the problem:
\begin{lemma} \label{lem:recursive-algo}
If there is a $O(1)$-approximation algorithm running in $n^{O(\log n)}$ time for any input tree, then one can obtain a $O(\sqrt{\log n})$-approximation algorithm with polynomial running time for any input tree.
\end{lemma}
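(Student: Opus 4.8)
Proof proposal for Lemma~\ref{lem:recursive-algo}.

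The plan is to build a divide-and-conquer recursion in which a single level shrinks the instance from $n$ vertices to roughly $n/2^{\sqrt{\log n}}$ vertices while inflating the approximation ratio only by an \emph{additive} constant; iterating such a level $O(\sqrt{\log n})$ times (the number of times the map $n\mapsto n/2^{\sqrt{\log n}}$ must be applied to reach a constant) then yields the claimed $O(\sqrt{\log n})$ ratio. The additivity of the per-level error, rather than a multiplicative blow-up, is the crucial point, and it comes directly from the shape of Lemma~\ref{lem:recursion-cost}.

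Fix a tree $T$ with $n$ vertices and let $f(m)$ denote the worst-case approximation ratio the recursive procedure attains on trees with at most $m$ vertices. By Lemma~\ref{lem:no-long-chains} we may first replace $T$ by its chain-contraction $T_0=\longchains T$, which has no long chains and satisfies $\opt{T_0}\le\opt T$, at the cost of an additive $+1$ in the ratio. Set $\alpha=\lceil n/2^{\sqrt{\log n}}\rceil$ and use Observation~\ref{obs:seperating-set} to compute, in polynomial time, a minimal $\alpha$-separating tree $T^*=(V^*,E^*)$ of $T_0$; every component of $T_0\setminus V^*$ then has at most $\alpha$ vertices. By Lemma~\ref{lem:minimal-separating}, the chain-contraction $\longchains{T^*}$ has at most $4\lceil n/\alpha\rceil=2^{O(\sqrt{\log n})}$ vertices. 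I would run the hypothesised $O(1)$-approximation algorithm, whose running time on an $m$-vertex tree is $m^{O(\log m)}$, on $\longchains{T^*}$; since $m=2^{O(\sqrt{\log n})}$ here, the running time is $\big(2^{O(\sqrt{\log n})}\big)^{O(\sqrt{\log n})}=2^{O(\log n)}=\mathrm{poly}(n)$. Lifting the resulting strategy through Lemma~\ref{lem:no-long-chains} once more yields, in polynomial time, a strategy $\cA^*$ for $T^*$ with $\cost_{\cA^*}(T^*)\le O(1)\cdot\opt{T^*}\le O(1)\cdot\opt{T_0}$, where the last step is monotonicity of $\opt{\cdot}$ under taking subtrees.

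Next I would recurse, computing for each component $T'$ of $T_0\setminus V^*$ a strategy $\cR_{\root{T'}}$, and assemble the $(\cA^*,\{\cR_{\root{T'}}\})$-strategy $\cR$ for $T_0$. By Lemma~\ref{lem:recursion-cost},
\[
\cost_{\cR}(T_0)\ \le\ \cost_{\cA^*}(T^*)+\max_{x'\in V^*}w(x')+\max_{T'}\cost_{\cR_{\root{T'}}}(T')\ \le\ \big(O(1)+f(\alpha)\big)\opt{T_0},
\]
where the maximum is over components $T'$ of $T_0\setminus V^*$, and where I used $\max_{x'}w(x')\le 1\le\opt{T_0}$ (weight normalization and Observation~\ref{obs:opt-bounds}), $\opt{T'}\le\opt{T_0}$, and that each such $T'$ has at most $\alpha$ vertices so $\cost_{\cR_{\root{T'}}}(T')\le f(\alpha)\opt{T'}$. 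Converting $\cR$ back into a strategy for the original tree $T$ via Lemma~\ref{lem:no-long-chains} and collecting the additive constants gives the recurrence
\[
f(n)\ \le\ f\!\left(\big\lceil n/2^{\sqrt{\log n}}\big\rceil\right)+c_0
\]
for an absolute constant $c_0$, with $f$ bounded by a constant on instances of constant size (solved by brute force).

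The main obstacle is to check that this recurrence unrolls in $O(\sqrt{\log n})$ steps. Writing $g_i=\log n_i$ for the successive sizes $n_0=n$, $n_{i+1}=\lceil n_i/2^{\sqrt{\log n_i}}\rceil$, one step decreases $g_i$ by roughly $\sqrt{g_i}$; as long as $g_i\ge\tfrac14\log n$ each step drops $g_i$ by at least $\tfrac12\sqrt{\log n}$, so $g_i$ falls below $\tfrac14\log n$ within $O(\sqrt{\log n})$ steps, and summing this estimate over the geometrically shrinking ``phases'' ($g$ going from $\log n$ to $\tfrac14\log n$ to $\tfrac1{16}\log n$, and so on) bounds the total recursion depth by $O(\sqrt{\log n})$. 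Hence $f(n)=O(\sqrt{\log n})$. For the running time, the subproblems occurring at any fixed recursion depth are pairwise vertex-disjoint subtrees of the original tree, so their sizes sum to at most $n$ and each contributes only $\mathrm{poly}(n)$ non-recursive work; multiplying by the $O(\sqrt{\log n})$ depth keeps the total polynomial. This produces the desired polynomial-time $O(\sqrt{\log n})$-approximation and establishes the lemma.
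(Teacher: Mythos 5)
Your proposal is correct and follows essentially the same approach as the paper's proof: compute a minimal $\alpha$-separating tree, shrink it via chain-contraction to $2^{O(\sqrt{\log n})}$ vertices so the given $n^{O(\log n)}$-time $O(1)$-approximation runs in polynomial time, lift via Lemma~\ref{lem:no-long-chains}, recurse on the remaining components, and assemble via Lemma~\ref{lem:recursion-cost}, using $\max w \le 1 \le \opt T$ to absorb the additive term. The only notable difference is that your recursion-depth argument is more careful than the paper's (which tacitly treats the shrinkage factor $2^{\sqrt{\log n}}$ as constant across levels even though the local $n$ decreases); your geometric-phases estimate rigorously confirms the $O(\sqrt{\log n})$ depth.
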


\begin{proof}
Suppose $\textsc{Solve}$ is a given constant-factor approximation algorithm running in time $n^{O(\log n)}$ that, for any input tree $T$, outputs a search strategy for $T$.
We then design a polynomial-time procedure $\textsc{Rec}$ as shown in Algorithm~\ref{alg:rec}, which outputs a search strategy $\cR$ for an input tree $T$.

\begin{algorithm}
\caption{$O(\sqrt{\log n})$)-approximation procedure $\textsc{Rec}$ based on $n^{O(\log n)}$-time constant approximation algorithm $\textsc{Solve}$}\label{alg:rec}
\begin{algorithmic}[1]

\Procedure{Rec}{tree $T=(V,E,w)$}
\State $n\gets \card{V}$
\If {$n\leq 2^{\sqrt{\log n}}$}
	\State \textbf{return} \Call{Solve}{$T$} \label{line:solve}
\Else
\State $\alpha \gets n/2^{\sqrt{\log n}}$
\State $T^* \gets$ a minimal $\alpha$-separating tree of $T$ with vertex set $V^*$
\State $\cA^* \gets$ \Call{Solve}{$\longchains{T^*}$}
\label{line:solve2}
\State $\cA_{T^*} \gets$ search strategy for $T^*$ obtained from $\cA^*$ as described in proof of Lemma~\ref{lem:no-long-chains}
\For{\textbf{each} $T'$ in $\mathcal{C}(T \setminus V^*)$}
\State $\cR_{\root{T'}} \gets$ \Call{Rec}{$T'$};
\EndFor
\State \textbf{return} $(\cA_{T^*},\{\cR_{\root{T'}}\stX T' \in \mathcal{C}(T \setminus V^*)\})$-strategy for $T$
\EndIf
\EndProcedure
\end{algorithmic}
\end{algorithm}

Each call to $\textsc{Solve}$ in line~\ref{line:solve} has running time $(2^{\sqrt{\log n}})^{O(\log (2^{\sqrt{\log n}}))}$, which is a polynomial in $n$.
The same holds for each call call in line~\ref{line:solve2} because, by Lemma~\ref{lem:minimal-separating}, $\longchains{T^*}$ has at most $4 \left\lceil \frac{n}{\alpha} \right\rceil = O(2^{\sqrt{\log n}})$ vertices.
Thus, procedure $\textsc{Rec}$ has polynomial running time and it remains to bound the cost of the search strategy $\cR$ computed by $\textsc{Rec}$.

To bound the recursion depth of $\textsc{Rec}$, note that each time a recursive call is made, the size of instance (input tree) decreases $2^{\sqrt{\log n}}$ times.
Thus, the depth is bounded by $\log_{(2^{\sqrt{\log n}})}n=\sqrt{\log n}$.
In the search strategy computed by procedure $\textsc{Rec}$, at each level of the recursion we execute the search strategy computed by one call to $\textsc{Solve}$ and one vertex of the $(n/2^{\sqrt{n}})$-separating tree is queried.
This follows from the definition of $(\cA_{T^*},\{\cR_{\root{T'}}\stX T' \in \mathcal{C}(T \setminus V^*)\})$-strategy.
By Lemma~\ref{lem:no-long-chains},
\[\cost_{\cA_{T^*}}(T^*)\leq c'\cdot\opt{T^*}\]
for some constant $c'$.
By Lemma~\ref{lem:recursion-cost} and since $\opt{T^*}\leq\opt{T}$, the cost of $\cR$ at each recursion level is bounded by $(c'+1)\opt{T}$.
This gives that $\cost_{\cR}(T)\leq c'\sqrt{\log n}\cdot\opt{T}$ as required.
%
%
\end{proof}

Noting that the existence of a constant-approximation procedure with $n^{O(\log n)}$ running time follows from Theorem~\ref{thm:qptas} (by taking $\varepsilon=1$), the claim of Theorem~\ref{thm:recursive-algo} follows directly from Lemma~\ref{lem:recursive-algo}.
}

\InJournal{
\section*{Acknowledgment}

The authors thank Jakub \L{}ącki for preliminary discussions on the studied problem.
}

\InConference{\newpage}

\bibliographystyle{plain}
\bibliography{queries}

\begin{thebibliography}{10}

\bibitem{ArkinMMRS98}
Esther~M. Arkin, Henk Meijer, Joseph S.~B. Mitchell, David Rappaport, and
  Steven Skiena.
\newblock Decision trees for geometric models.
\newblock {\em Int. J. Comput. Geometry Appl.}, 8(3):343--364, 1998.

\bibitem{Ben-AsherF97}
Yosi Ben{-}Asher and Eitan Farchi.
\newblock The cost of searching in general trees versus complete binary trees.
\newblock Technical report, Technical report, 1997.

\bibitem{Ben-OrH08}
Michael Ben{-}Or and Avinatan Hassidim.
\newblock The bayesian learner is optimal for noisy binary search (and pretty
  good for quantum as well).
\newblock In {\em 49th Annual {IEEE} Symposium on Foundations of Computer
  Science, {FOCS} 2008, October 25-28, 2008, Philadelphia, PA, {USA}}, pages
  221--230, 2008.

\bibitem{CarmoDKL04}
Renato Carmo, Jair Donadelli, Yoshiharu Kohayakawa, and Eduardo~Sany Laber.
\newblock Searching in random partially ordered sets.
\newblock {\em Theor. Comput. Sci.}, 321(1):41--57, 2004.

\bibitem{CicaleseJLM11}
Ferdinando Cicalese, Tobias Jacobs, Eduardo~Sany Laber, and Marco Molinaro.
\newblock On the complexity of searching in trees and partially ordered
  structures.
\newblock {\em Theor. Comput. Sci.}, 412(50):6879--6896, 2011.

\bibitem{CicaleseJLV12}
Ferdinando Cicalese, Tobias Jacobs, Eduardo~Sany Laber, and Caio~Dias Valentim.
\newblock The binary identification problem for weighted trees.
\newblock {\em Theor. Comput. Sci.}, 459:100--112, 2012.

\bibitem{CicaleseKLPV14}
Ferdinando Cicalese, Bal{\'{a}}zs Keszegh, Bernard Lidick{\'{y}},
  D{\"{o}}m{\"{o}}t{\"{o}}r P{\'{a}}lv{\"{o}}lgyi, and Tom{\'{a}}s Valla.
\newblock On the tree search problem with non-uniform costs.
\newblock {\em CoRR}, abs/1404.4504, 2014.

\bibitem{Dereniowski06}
Dariusz Dereniowski.
\newblock Edge ranking of weighted trees.
\newblock {\em Discrete Applied Mathematics}, 154(8):1198--1209, 2006.

\bibitem{Dereniowski08}
Dariusz Dereniowski.
\newblock Edge ranking and searching in partial orders.
\newblock {\em Discrete Applied Mathematics}, 156(13):2493--2500, 2008.

\bibitem{DereniowskiK06}
Dariusz Dereniowski and Marek Kubale.
\newblock Efficient parallel query processing by graph ranking.
\newblock {\em Fundam. Inform.}, 69(3):273--285, 2006.

\bibitem{DereniowskiN06}
Dariusz Dereniowski and Adam Nadolski.
\newblock Vertex rankings of chordal graphs and weighted trees.
\newblock {\em Inf. Process. Lett.}, 98(3):96--100, 2006.

\bibitem{Emamjomeh-ZadehKS16}
Ehsan Emamjomeh{-}Zadeh, David Kempe, and Vikrant Singhal.
\newblock Deterministic and probabilistic binary search in graphs.
\newblock In {\em Proceedings of the 48th Annual {ACM} {SIGACT} Symposium on
  Theory of Computing, {STOC} 2016, Cambridge, MA, USA, June 18-21, 2016},
  pages 519--532, 2016.

\bibitem{FeigeRPU94}
Uriel Feige, Prabhakar Raghavan, David Peleg, and Eli Upfal.
\newblock Computing with noisy information.
\newblock {\em {SIAM} J. Comput.}, 23(5):1001--1018, 1994.

\bibitem{GiannopoulouHT12}
Archontia~C. Giannopoulou, Paul Hunter, and Dimitrios~M. Thilikos.
\newblock Lifo-search: {A} min-max theorem and a searching game for cycle-rank
  and tree-depth.
\newblock {\em Discrete Applied Mathematics}, 160(15):2089--2097, 2012.

\bibitem{HeeringaIT11}
Brent Heeringa, Marius~Catalin Iordan, and Louis Theran.
\newblock Searching in dynamic tree-like partial orders.
\newblock In {\em Algorithms and Data Structures - 12th International
  Symposium, {WADS} 2011, New York, NY, USA, August 15-17, 2011. Proceedings},
  pages 512--523, 2011.

\bibitem{IyerRV88}
Ananth~V. Iyer, H.~Donald Ratliff, and Gopalakrishnan Vijayan.
\newblock Optimal node ranking of trees.
\newblock {\em Inf. Process. Lett.}, 28(5):225--229, 1988.

\bibitem{IyerRV88b}
Ananth~V. Iyer, H.~Donald Ratliff, and Gopalakrishnan Vijayan.
\newblock Parallel assembly of modular products -- an analysis.
\newblock Technical report, Technical Report 88-86, Georgia Institute of
  Technology, 1988.

\bibitem{KarpK07}
Richard~M. Karp and Robert Kleinberg.
\newblock Noisy binary search and its applications.
\newblock In {\em Proceedings of the Eighteenth Annual {ACM-SIAM} Symposium on
  Discrete Algorithms, {SODA} 2007, New Orleans, Louisiana, USA, January 7-9,
  2007}, pages 881--890, 2007.

\bibitem{KatchalskiMS95}
Meir Katchalski, William McCuaig, and Suzanne~M. Seager.
\newblock Ordered colourings.
\newblock {\em Discrete Mathematics}, 142(1-3):141--154, 1995.

\bibitem{LaberMP02}
Eduardo~Sany Laber, Ruy~Luiz Milidi{\'{u}}, and Artur~Alves Pessoa.
\newblock On binary searching with nonuniform costs.
\newblock {\em {SIAM} J. Comput.}, 31(4):1022--1047, 2002.

\bibitem{LaberM11}
Eduardo~Sany Laber and Marco Molinaro.
\newblock An approximation algorithm for binary searching in trees.
\newblock {\em Algorithmica}, 59(4):601--620, 2011.

\bibitem{LaberN01}
Eduardo~Sany Laber and Loana~Tito Nogueira.
\newblock Fast searching in trees.
\newblock {\em Electronic Notes in Discrete Mathematics}, 7:90--93, 2001.

\bibitem{LaberN04}
Eduardo~Sany Laber and Loana~Tito Nogueira.
\newblock On the hardness of the minimum height decision tree problem.
\newblock {\em Discrete Applied Mathematics}, 144(1-2):209--212, 2004.

\bibitem{LamY01}
Tak~Wah Lam and Fung~Ling Yue.
\newblock Optimal edge ranking of trees in linear time.
\newblock {\em Algorithmica}, 30(1):12--33, 2001.

\bibitem{LinialS85}
Nathan Linial and Michael~E. Saks.
\newblock Searching ordered structures.
\newblock {\em J. Algorithms}, 6(1):86--103, 1985.

\bibitem{Liu86}
Joseph W.~H. Liu.
\newblock Computational models and task scheduling for parallel sparse cholesky
  factorization.
\newblock {\em Parallel Computing}, 3(4):327--342, 1986.

\bibitem{Liu90}
Joseph W.~H. Liu.
\newblock The role of elimination trees in sparse factorization.
\newblock {\em SIAM. J. Matrix Anal. $\&$ Appl.}, 11(1):134--172, 1990.

\bibitem{MozesOW08}
Shay Mozes, Krzysztof Onak, and Oren Weimann.
\newblock Finding an optimal tree searching strategy in linear time.
\newblock In {\em Proceedings of the Nineteenth Annual {ACM-SIAM} Symposium on
  Discrete Algorithms, {SODA} 2008, San Francisco, California, USA, January
  20-22, 2008}, pages 1096--1105, 2008.

\bibitem{NesetrilM06}
Jaroslav Nesetril and Patrice~Ossona de~Mendez.
\newblock Tree-depth, subgraph coloring and homomorphism bounds.
\newblock {\em Eur. J. Comb.}, 27(6):1022--1041, 2006.

\bibitem{OnakP06}
Krzysztof Onak and Pawel Parys.
\newblock Generalization of binary search: Searching in trees and forest-like
  partial orders.
\newblock In {\em 47th Annual {IEEE} Symposium on Foundations of Computer
  Science {(FOCS} 2006), 21-24 October 2006, Berkeley, California, USA,
  Proceedings}, pages 379--388, 2006.

\bibitem{Pothen88}
Alex Pothen.
\newblock The complexity of optimal elimination trees.
\newblock Technical report, Technical Report CS-88-13, Pennsylvannia State
  University, 1988.

\bibitem{RivestMKWS80}
Ronald~L. Rivest, Albert~R. Meyer, Daniel~J. Kleitman, Karl Winklmann, and Joel
  Spencer.
\newblock Coping with errors in binary search procedures.
\newblock {\em J. Comput. Syst. Sci.}, 20(3):396--404, 1980.

\bibitem{Schaffer89}
Alejandro~A. Sch{\"{a}}ffer.
\newblock Optimal node ranking of trees in linear time.
\newblock {\em Inf. Process. Lett.}, 33(2):91--96, 1989.

\bibitem{Steiner87}
George Steiner.
\newblock Searching in 2-dimensional partial orders.
\newblock {\em J. Algorithms}, 8(1):95--105, 1987.

\bibitem{SzwarcfiterNBOCZ03}
Jayme~Luiz Szwarcfiter, Gonzalo Navarro, Ricardo~A. Baeza{-}Yates, Jo{\'{\i}}sa
  de~S.~Oliveira, Walter Cunto, and Nivio Ziviani.
\newblock Optimal binary search trees with costs depending on the access paths.
\newblock {\em Theor. Comput. Sci.}, 290(3):1799--1814, 2003.

\end{thebibliography}

\end{document}

